\documentclass[11pt]{article}

\usepackage{authblk}
\usepackage{fullpage} 

\usepackage[T1]{fontenc}
\usepackage{graphicx}

\usepackage{mathtools}
\usepackage{amssymb}
\usepackage{amsmath} 
\usepackage{amsthm}
\usepackage{enumitem}
\usepackage{todonotes}
\usepackage[ruled, vlined, linesnumbered]{algorithm2e}
\usepackage{float}
\usepackage{tabularx}
\usepackage{multirow}
\usepackage{array,xcolor,colortbl}
\usepackage{bm}
\usepackage{footnote}
\usepackage{subcaption}
\makesavenoteenv{tabular}
\usepackage{xargs} 
\usepackage{mathrsfs}
\usepackage{makecell}
\usepackage{xspace}
\usepackage{thmtools}
\usepackage{rotating}
\usepackage[absolute,overlay]{textpos}
\usepackage{tikz}
\usepackage[round]{natbib} 
\newcommand{\addauthor}[1]{}

\usepackage{hyperref}

\usepackage{cleveref}

\usepackage{thm-restate} 

\newtheorem{theorem}{Theorem}[section]
\newtheorem{corollary}[theorem]{Corollary}

\newtheorem{lemma}[theorem]{Lemma}

\newtheorem{observation}[theorem]{Observation}

\newtheorem{claim}[theorem]{Claim}
\crefname{claim}{claim}{claims}
\Crefname{claim}{Claim}{Claims}

\theoremstyle{definition}

\newtheorem{example}[theorem]{Example}

\newenvironment{claimproof}
{
    
    \proof
}
{
    \endproof
    
}

\newcommand{\proofqed}{} 
\newcommand{\claimqed}{}

\usepackage{color}

\newcommand{\complexityclass}[1]{\mathcal{#1}}
\newcommand{\p}{\ensuremath{\complexityclass{P}}}
\newcommand{\np}{\ensuremath{\complexityclass{NP}}}
\newcommand{\bigO}[1]{\ensuremath{\mathcal{O}\left(#1\right)}}

\newcommand{\ega}{(\textsc{EG})} 
\newcommand{\uti}{(\textsc{UT})}
\newcommand{\egsup}{^{\ensuremath{\scriptscriptstyle \ega}}}
\newcommand{\utsup}{^{\ensuremath{\scriptscriptstyle \uti}}}
\newcommand{\Deg}{D\egsup} 
\newcommand{\Dut}{D\utsup} 
\newcommand{\Dstar}{D^{*}} 
\newcommand{\Meg}{M\egsup} 
\newcommand{\Mut}{M\utsup} 
\newcommand{\Mstar}{M^{*}}

\newcommand{\gbp}{transit investment problem} 
\newcommand{\rwp}{railway network design problem} 
\newcommand{\RWP}{Railway Network Design Problem}

\newcommand{\gbpins}{\langle G, \ac, \alpha, \beta \rangle}

\newcommand{\problemclass}[1]{\texttt{#1}}

\newcommand{\GraphText}{TIP}
\newcommand{\Graph}{\problemclass{\GraphText}}
\newcommand{\GraphDec}{\problemclass{{\GraphText}\ensuremath{^{\hspace{1pt}\star}}-Dec}}
\newcommand{\GraphHub}{\problemclass{TIPH}}
\newcommand{\GraphMax}{\ensuremath{\problemclass{\GraphText}^\problemclass{\hspace{1pt}\ensuremath{{\ega}}}}}
\newcommand{\GraphSum}{\ensuremath{\problemclass{\GraphText}^\problemclass{\hspace{1pt}\ensuremath{{\uti}}}}}
\newcommand{\GraphStar}{\ensuremath{\problemclass{\GraphText}^{\hspace{1pt}\star}}}

\newcommand{\GraphHubStar}{\ensuremath{\problemclass{\GraphHub}^{\hspace{1pt}\star}}}

\newcommand{\Sat}{3-\problemclass{SAT}}

\newcommand{\SetCover}{\problemclass{SetCover}}
\newcommand{\MCClique}{\problemclass{MulticoloredClique}}
\newcommand{\Rail}{\problemclass{RDP}}
\newcommand{\RailSum}{\ensuremath{\problemclass{RBP}^\problemclass{\hspace{1pt}\ensuremath{{\uti}}}}}
\newcommand{\RailMax}{\ensuremath{\problemclass{RBP}^\problemclass{\hspace{1pt}\ensuremath{{\ega}}}}}
\newcommand{\RailStar}{\ensuremath{\problemclass{RDP}^{\hspace{1pt}\star}}}

\newcommand{\gdyup}{{\mathscr G_{\uparrow}}}
\newcommand{\gdydn}{{\mathscr G_{\downarrow}}}

\DeclarePairedDelimiter{\multiset}{\{\!\{}{\}\!\}}

\newcommand{\N}{\mathbb{N}}
\newcommand{\Nz}{\mathbb{N}_0}

\newcommand{\ac}{\mathcal{A}}

\newcommand{\vc}{E}

\newcommand{\ic}{\ensuremath{\mathcal{I}}}

\newcommand{\bic}{\ensuremath{\mathcal{H}}}

\newcommand{\algc}{\mathrm{ALG}}
\newcommand{\optc}{\mathrm{OPT}}
\newcommand{\deltag}{\delta_{\mathrm{guess}}}

\newcommand{\opt}{\text{opt}}

\title{Fair and Efficient Investment in Public Transportation}

\author[1]{Martin Bullinger}
\author[2]{Edith Elkind}
\author[3]{Kassian Köck}

\affil[1]{ \small School of Engineering Mathematics and Technology, University of Bristol, Bristol, UK}
\affil[2]{ \small School of Engineering, Northwestern University, Evanston, USA}
\affil[3]{ \small School of Computation, Information and Technology, Technical University of Munich, Munich, Germany\protect\\ \vspace*{0.05cm} martin.bullinger@bristol.ac.uk, edith.elkind@northwestern.edu, kassian.koeck@tum.de}

\date{}

\begin{document}
\maketitle              
\begin{abstract}
We study a stylized model of infrastructure investment in public transportation.
In our model, each agent travels between a pair of terminals in a network captured by a weighted graph, where edge weights represent distances.
The central planner can improve the travel time along a fixed number of edges, with the goal of maximizing the utilitarian or egalitarian welfare.
When there is only one agent, we provide a polynomial-time algorithm that combines Dijkstra's algorithm with a dynamic program.
We then demonstrate how to use this algorithm as a subroutine to solve the problem for two agents.
Generalizing this idea, we present an XP algorithm parameterized by the number of agents.
However, our problem turns out to be W[1]-hard with respect to the number of agents. Nevertheless, we obtain a fixed-parameter tractability result for the special case where all agents travel to a common hub.
If the number of agents is variable, we obtain \np-completeness and inapproximability results.
We discuss implications of our results for a related model of railway network design.\smallskip

\noindent\textbf{Keywords:} Public transport, Egalitarian welfare, Utilitarian welfare, Dijkstra's algorithm
\end{abstract}
	
\section{Introduction}

Urban transport is a key element of modern economy: it  contributes to the overall societal welfare 
by providing access to wealth-generating activities \citep{banister2006city}. 
As cities continue to grow and urban areas become increasingly denser, urban planners face multiple challenges,  
such as congestion, environmental impact, and unequal access to services \citep{banister2008sustainable}. Accordingly, there is a significant body of research focused on optimizing transit systems \citep{nnene2023optimising}. 

Traditionally, the design and optimization of public transportation systems have been driven primarily by efficiency objectives, such as minimizing total travel time, travel cost, or overall system operating costs. 
These objectives underpin much of the literature on transit network design and remain central performance measures in transportation planning \citep{dDWi24a,SVL24a}.

However, recent studies  have argued that, by focusing on efficiency, urban planners may overlook important aspects of service quality, such as the equitable distribution of transit resources among diverse population groups \citep{litman2017evaluating}. 
Accordingly, an important research challenge is to develop transit planning frameworks that balance efficiency with social equity, so that improvements in operational performance do not come at the expense of marginalized communities \citep{sanchez2003moving}.

This tradeoff can be phrased in terms of utilitarian vs.~egalitarian social welfare.
The former aims to maximize the sum of agents' utilities.
In the context of public transportation, this means reducing total travel times, lowering operational costs, and increasing service frequency to optimize mobility for the largest number of users \citep{vickrey1969congestion}.
However, while utilitarian principles can lead to highly efficient systems, they may also inadvertently reinforce inequalities, as they tend to favor high-demand areas and neglect regions with lower population densities or reduced economic activity \citep{amaral2009auction}.

In contrast, egalitarian welfare prioritizes fairness and the equitable distribution of resources, ensuring that no individual or group is disproportionately disadvantaged \citep{Rawl71a}.
In transportation planning, this means designing systems that provide access not only for those in high-density, economically vibrant areas but also for individuals in lower-income or remote communities \citep{behbahani2019conceptual}.
Equity-based transit planning ensures an inclusive urban environment where mobility is a right accessible to all, rather than a privilege enjoyed primarily by those in high-density regions \citep{behbahani2019conceptual}.

Against this background, we study a stylized model of investment in public transportation, which we term the \emph{\gbp{}}. 
In our model, a set of $n$ agents 
travel on a network (modeled as a weighted graph): each agent is associated with a pair of terminals and travels along a shortest path between these terminals. A central planner can make a limited number of infrastructure improvements, by choosing $\beta\in\mathbb N$ edges and reducing the travel time along each of the selected edges by a discount factor $\alpha\in [0, 1]$, and aims to maximize utilitarian or egalitarian welfare.
The infrastructure improvements captured by this model include, e.g., closing a residential road to motorized traffic, to make bicycle journeys faster, or introducing an express bus route between two key terminals that skips the intermediate stops. While our model is quite simple, in that all edges have the same investment cost, and the improvement in travel time is captured by a single multiplicative parameter $\alpha$, our results demonstrate that it offers interesting algorithmic challenges;
understanding the complexity of determining the optimal investment policy for this simple model is a prerequisite to studying richer, more expressive models.

We start by exploring the performance of two natural greedy algorithms for the planner's problem, and show that they do not yield nontrivial approximation guarantees.

Then we focus on the case $n=1$, 
and show that, for a single agent, optimal investment can be computed in polynomial time.
Our algorithm can be viewed as a dynamic-programming 
variant of Dijkstra's algorithm, where we iteratively identify optimal travel distances to a given destination with a fixed budget.

Subsequently, we extend this result, first to $n=2$ and then to any fixed number of agents, obtaining an XP algorithm with respect to $n$. Our approach works for  both the utilitarian and  the egalitarian version of our problem.
For two agents, the idea is to guess the intersection points of the two paths and then to apply the single-agent algorithm for each segment. Extending this approach beyond two agents 
is non-trivial, as the path-intersection dependencies of shortest path systems can be quite complex. To address this issue, we exploit the existence of special shortest path systems, where each pair of shortest paths only intersects on one contiguous sub-path.

A natural follow-up question is whether our XP algorithm can be strengthened to an FPT algorithm.
We derive a W[1]-hardness result for the general case, but provide an FPT algorithm under the natural domain restriction where all agents travel to a common hub.
While this algorithm is quite different from our XP algorithm, it once again relies on a structural property of shortest path systems: we show that when all agents share a hub, there is a shortest path system that forms a tree.
This observation enables us to formulate a dynamic program that relies on tracing non-leaves of such a tree as well as agents' individual (i.e., non-hub) terminals.
To improve the running time, we intertwine this approach with a second---concurrently executed---dynamic program.

While our W[1]-hardness result 
implies that our problem is hard when $n$ is not fixed, it does not rule out the existence of good approximation algorithms. To conclude the paper, we argue that, for general $n$, optimal utilitarian/egalitarian welfare is hard to approximate, too. 
Specifically for $\alpha = 0$, i.e., if an investment into an edge reduces the travel time to~$0$, we show that the existence of a polynomial-time algorithm with a finite approximation guarantee for utilitarian or egalitarian welfare implies \p = \np. 
Further, if $\alpha\in (0,1)$, there is no polynomial-time algorithm for approximating egalitarian welfare by a factor of 
less than $\frac{\alpha + 1}{2\alpha}$ unless \p = \np.
Note that this expression is unbounded as $\alpha$ approaches $0$ and decays to~$2$ when $\alpha$ converges to~$1$.
Hence, a decision maker that cares about egalitarian welfare faces substantial computational challenges.

\section{Related Work}

Public transport design is often modeled as a multi-objective vehicle routing problem (VRP) \citep{jozefowiez2008multi}. 
For example, \addauthor{Jozefowiez et al.~}\citet{jozefowiez2009evolutionary} introduce a bi-objective VRP framework that aims to minimize both the total route length and the imbalance across routes, defined as the difference between the longest and shortest route lengths.

In addition to VRPs, substantial research has focused on optimizing existing transportation schedules to balance operator and passenger costs. 
For instance, \addauthor{Yu et al.~}\citet{yu_parallel_2011} propose a formal model to determine optimal bus headways, factoring in constraints such as passenger demand, vehicle capacity, and operational resources, to minimize waiting, boarding, and operating costs. 
Building upon this, \addauthor{Huang et al.~}\citet{huang_customized_2020} develop a real-time dynamic route optimization framework for customized bus services. 
Their approach prioritizes both customer satisfaction and operator profitability by allowing flexible route adjustments to accommodate real-time requests. 
\addauthor{Yan et al.~}\citet{yan_robust_2012}, on the other hand, target schedule stability by introducing optimal slack times into bus schedules to minimize deviations.

While much of the work in this domain considers general transit challenges, some research focuses on specific applications, such as the school bus routing problem, where students need transportation from scattered pick-up points to a common school location. 
\addauthor{Konstantinos and Dimitra~}\citet{konstantinos_school_2023} employ Dijkstra's algorithm to optimize route starting points, reducing total student travel time, walking distances, and bus fleet size.
However, in contrast to our setting, their algorithm does not allocate funding to improve travel times.
Similarly, \addauthor{Arias-Rojas et al.~}\citet{arias-rojas_solving_2012} apply an ant colony optimization metaheuristic to achieve efficient routing under complex constraints.
We note that our problem (and in particular our algorithm for $n=2$) is similar to the disjoint-paths problem; however, the latter is known to be fixed-parameter tractable with respect to the number of terminal pairs \citep{RoSe1995a}, while our problem is W[1]-hard with respect to $n$. 

Recent work considers issues of fairness in public transportation network design.
Most closely related to our work, \addauthor{He et al.~}\citet{HBL+24} define a model for railway network design.
They primarily focus on an empirical analysis of railway systems in nine countries, and
consider a spectrum of welfare notions interpolating between utilitarian and egalitarian welfare. 
In their theoretical treatment, they derive a hardness for egalitarian welfare in the restricted setting where taking certain routes has infinite cost.
In \Cref{app:railway}, we discuss the relationship between their model and ours, and show how some of our hardness and inapproximability results for fixed cost parameters extend to their setting.

The problem of optimally upgrading a limited number of edges in a network has been considered for other network optimization problems.
\addauthor{Campbell et al.~}\citet{campbell2006upgrading} aim to minimize worst-case travel times in a network, as measured by its diameter; in contrast to our demand-driven welfare objectives, this leads to an all-pairs minimax guarantee. 
\addauthor{Duque et al.~}\citet{duque2013accessibility} model network transit as a budget-constrained minimum cost flow problem.
Vertices can demand or supply flow, and the goal is to obtain a minimum cost flow by upgrading edges subject to a budget constraint; edges can have several upgrade levels at different costs.
The authors provide an integer program that captures this problem, prove an \np-hardness result, and then introduce and analyze two heuristics.
\addauthor{Landete et al.~}\citet{landete2023upgrading} study budgeted edge upgrading for the graphical traveling salesman problem. 
Similar to \addauthor{Duque et al.~}\citet{duque2013accessibility}, they allow for multiple upgrade levels with different costs and benefits.
A key difference with our model is that these works optimize the entire network, without considering multiple agents and their travel needs. 
The closest to our work is the contribution by \addauthor{Lin and Mouratidis~}\citet{lin2015best}, who study a similar model and aim to maximize the utilitarian welfare under budget-constrained edge upgrades. 
They provide practical heuristics based on the iterative consideration of paths and test them experimentally.
Our results complement their work by formally investigating the power and limitations of polynomial-time algorithms.

Finally, a recent line of research explores investing in vertices rather than edges of a network, modelling, e.g., access points to a bus.
\addauthor{Bullinger et al.~}\citet{BEL25a} explore the placement of bus stops along a linear route, introducing utilitarian fairness as a key metric alongside total travel time. 
Their approach is rooted in facility location theory \citep{ChanFLLW21} and proportional fairness in computational social choice \citep{LaSk22b}.
\addauthor{Aziz et al.~}\citet{AGGV26} extend their analysis to trees and establish connections to fair clustering.
A special case of their model, where $\alpha = 0$ and only two bus stops are built, is considered by \addauthor{Chan and Wang~}\citet{ChWa23a}.
Moreover, the same cost function is also applied by \addauthor{Fukui et al.~}\citet{FSN20a}.
However, both works focus on strategic aspects rather than welfare.

\section{Preliminaries}

For a positive integer $i$, we write $[i] := \{1,\dots, i\}$ and  $[i]_0 := \{0,\dots, i\}$. We represent multi-sets by $\multiset{\cdot}$  and multi-set inclusion by $\sqsubseteq$.

Consider a weighted graph $G = (V,E,w)$, where $w : E \rightarrow \mathbb{Q}_0^+$.
We denote by $N(u) :=\{v\in V\colon \{u,v\}\in E\}$ the set of \emph{neighbors} of $u\in V$.
A \emph{path} $P = (V',E')$ of length $\ell$ in $G$ is a subgraph of $G$ such that $V' = \{v_0,v_1,\dots, v_{\ell}\}$ with $v_i\neq v_j$ for $0\le i < j \le \ell$ and $E' = \{\{v_{i-1},v_i\}\colon i\in [\ell]\}\subseteq E$.
Given a path $P$, we refer by $V(P)$ and $E(P)$ to its vertices and edges.
For $u,v\in V$,  
we denote by $P(u,v)$ the set of all paths connecting $u$ and $v$.
For $F\subseteq E$, the {\em weight} of $F$ is $w(F) := \sum_{e\in F} w(e)$. We extend this notation to paths: for a path $P$, we write $w(P) := w(E(P))$.

An instance $\bic = \gbpins$ of the \emph{\gbp{}} (\Graph{}) is given by a weighted graph $G = (V,E,w)$, where $w : E \rightarrow \mathbb{Q}_0^+$, 
a multi-set of $n$ \emph{agents} $\ac \sqsubseteq V^2$, a \emph{discount factor} $\alpha \in [0,1)$, and a \emph{budget} $\beta \in \N$. 
A \emph{solution} is a subset of edges $S \subseteq E$; it is considered \emph{feasible} if $|S| \leq \beta$. 
We assume without loss of generality that 
$G$ is connected; otherwise, its disconnected subgraphs can be treated separately.
We also consider the restriction of \Graph{} to instances where all agents share one terminal, i.e, instances $\bic = \gbpins$ with $G = (V,E,w)$ such that $\ac \sqsubseteq V\times \{t\}$ for some $t\in V$.
We refer to the shared terminal $t$ as a \emph{hub} and to \Graph{} restricted to instances with a hub as \GraphHub.
Moreover, 
given a $k\in \mathbb N$ and $\alpha \in [0,1)$, we denote by $k$-\Graph{} and $\alpha$-\Graph{} instances of \Graph{} with exactly $k$ agents and with discount factor fixed to $\alpha$, respectively. 

For an agent $a = (s, t) \in \ac$, the points $s$ and $t$ are called the \emph{terminals} of $a$; the \emph{set of terminals} is $\theta(\ac) := \{s,t : (s,t) \in \ac\}$. 
The \emph{base cost} of agent $a$ is 
$d(a) := \min_{P \in P(u,v)}w(P)$. 
Given a solution $S\subseteq E$,
the \emph{reduced cost} of $a$ for $S$ is defined as 
$\delta_S(a) := \min_{P \in P(u,v)}w_{S}(P)$ where 
$w_{S}(e) := \alpha\cdot w(e)$ if $e \in S$ and $w_S(e) = w(e)$ otherwise.
The \emph{egalitarian} (respectively, \emph{utilitarian}) cost of $S$ is defined as $c\egsup(S) := \max_{a \in \ac}\delta_S({a})$ (respectively, $c\utsup(S) := \sum_{a \in \ac}\delta_S({a})$). 
We use the superscript $\star$ to denote analogous treatment of both $\ega$ and $\uti$, e.g., $c^{\star} = c\egsup / c\utsup$.

We consider the following optimization problem, which we denote by \GraphStar: given an instance $\bic$ of \Graph, find a feasible set of 
edges $S$ that minimizes $c^\star(S)$.
The associated decision problem, denoted by \GraphDec, 
takes an instance $\bic$ of \Graph{} and a rational number $\kappa$ as an input,  and asks if $\bic$ admits a feasible solution $S$ with $c^{\star}(S)\le \kappa$.
We denote by \GraphHubStar{} (respectively, $k$-\GraphStar{}, $\alpha$-\GraphStar{}) the restrictions of \GraphStar{} to instances in \GraphHub{} (respectively,  $k$-\Graph{}, $\alpha$-\Graph{}).

Given an instance $\bic$ of \Graph, 
let $\kappa_{\opt}^{\star}(\bic) := \min_{S \subseteq \vc, |S| \leq \beta} c^{\star}(S)$. 
For a $\gamma \ge 1$,
an algorithm $S$ is called a $\gamma$-\emph{approximation algorithm} for \GraphStar{} if, given an instance of \Graph{} $\bic$, it produces a feasible solution $S(\bic)$ with cost $c^{\star}(S(\bic)) \leq \gamma \cdot \kappa_{\opt}^{\star}(\bic)$.
If \GraphStar{} admits a polynomial-time $\gamma$-approximation algorithm, it is said to be $\gamma$-\emph{approximable}.
Further, it is said to be \emph{$\gamma$-inapproximable} if no polynomial-time $\gamma$-approximation algorithm exists, unless $\p = \np$;
it is said to be \emph{inapproximable} if it is $\gamma$-inapproximable for every $\gamma\ge 1$.

The following example illustrates the key features of our model.

\begin{example}\label{ex:intro} 
Consider an instance $\gbpins$ of \Graph{}. 
The underlying graph is given in \Cref{fig:graph:1}. 
There are two agents $a_1=(s,t)$ and $a_2=(s',t')$ traversing this graph. 
We consider $\alpha=0$, i.e., discounted edges incur no cost, and a budget of $\beta=1$. 
If no edges are discounted, then each agent’s base cost equals the weight of their shortest path, namely $d(a_1)=10$ and $d(a_2)=13$. 
A simple heuristic to select a single edge to be discounted is to pick a maximum-weight edge on some agent’s shortest route (in this case, the edge of weight $6$ on the orange path).
This yields $\delta_S(a_1) = \delta_S(a_2) = 7$.
Note that $a_1$ can also use the discounted edge to improve her travel cost.
However, the optimal utilitarian and egalitarian cost is attained by discounting the dashed edge of weight~$10$ marked in the figure.
This yields travel costs of $\delta_S(a_1) = 4$ and $\delta_S(a_2) = 5$. 
This example illustrates how a modification outside of each shortest path can help all agents at the same time.\hfill$\lhd$
\end{example}

\begin{figure}[tb]
\centering
\includegraphics[width=.9\textwidth, page=3, trim=30 320 540 70, clip]{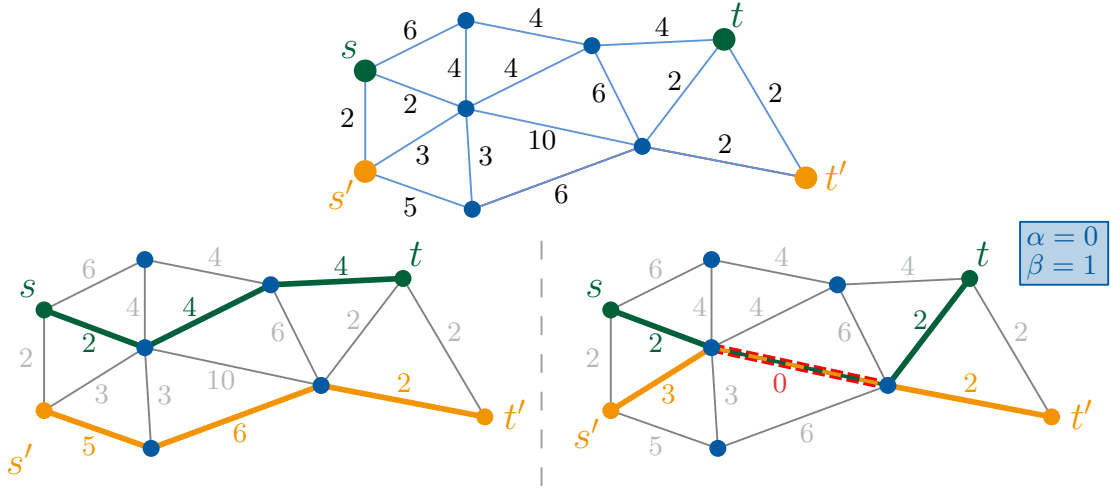}
\caption{Illustration of the \gbp.
The upper panel shows the original graph, the left panel depicts the shortest paths used under base cost, and the right panel demonstrates how reducing the cost on the dashed edge alters these routes. 
\label{fig:graph:1}}
\end{figure}

We conclude this section with some simple observations.
First, we note that an agent’s travel cost for a given solution can be evaluated in polynomial time by using a shortest-path algorithm on the graph where the weight of the selected edges is discounted by a factor of $\alpha$. 
This implies that \GraphDec{} is in \np{} for both cost functions.

\begin{observation}\label{obs:NPmemb}
    \GraphDec{} is contained in \np.
\end{observation}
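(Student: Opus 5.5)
The plan is the standard guess-and-verify argument. The certificate for a yes-instance $(\bic,\kappa)$ of \GraphDec{} is a feasible solution $S\subseteq \vc$ with $|S|\le\beta$ and $c^\star(S)\le\kappa$. Such a set can be encoded by listing at most $\beta\le|\vc|$ edges, so the certificate has size polynomial in the input. We may assume $\beta\le|\vc|$, since any larger budget can be capped at $|\vc|$ without changing the set of feasible solutions.

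The verifier proceeds in four steps.
\begin{enumerate}
\item Check that $S\subseteq\vc$ and that $|S|\le\beta$.
\item Build the modified weight function $w_S$, which sets $w_S(e)=\alpha\cdot w(e)$ for $e\in S$ and $w_S(e)=w(e)$ otherwise. All weights in $w_S$ are nonnegative rationals, because $\alpha\in[0,1)$ and $w\colon E\to\mathbb{Q}_0^+$.
\item For each agent $a=(s,t)\in\ac$, run Dijkstra's algorithm from $s$ on $(V,E,w_S)$ to obtain $\delta_S(a)$. This works because $\delta_S(a)$ is exactly the shortest-path distance under $w_S$, and Dijkstra's algorithm is correct for nonnegative weights.
\item Form $c\egsup(S)=\max_a\delta_S(a)$ or $c\utsup(S)=\sum_a\delta_S(a)$, as appropriate, and accept if and only if this value is at most $\kappa$.
\end{enumerate}
There are $|\ac|$ shortest-path computations, and $|\ac|$ is bounded by the input size, so the verifier makes polynomially many arithmetic operations.

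Correctness is immediate. If a feasible $S$ with $c^\star(S)\le\kappa$ exists, the verifier accepts it. Conversely, any accepted certificate is itself such a solution.

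The only point needing care is the bit-complexity of the rational arithmetic. Each value $\delta_S(a)$ is the weight of a simple path under $w_S$. It is therefore a sum of at most $|V|-1$ terms, each of the form $w(e)$ or $\alpha w(e)$. Hence its numerator and denominator have bit length polynomial in the encodings of $w$ and $\alpha$, for example via the common denominator of all input rationals. The same bound holds for the sum over agents. Dijkstra's algorithm only adds and compares such numbers, so every step runs in polynomial time. This shows $\GraphDec\in\np$ for both cost functions.
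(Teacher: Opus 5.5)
Your proposal is correct and follows the paper's own argument: guess a set $S$ of at most $\beta$ edges and verify $c^\star(S)\le\kappa$ by computing each agent's distance with a shortest-path algorithm under the discounted weights $w_S$. Your added check that the rational arithmetic stays of polynomial bit length is a detail the paper leaves implicit.
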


Second, we derive a simple approximation guarantee.
If $0<\alpha< 1$ 
and our budget is unlimited, we can scale down the cost of every edge by a factor of $\alpha$ to obtain a solution of cost $\alpha\cdot c^\star (\emptyset)$. Hence, $c^\star(E) = \alpha\cdot c^\star (\emptyset)$, and therefore the trivial solution $S=\emptyset$ is an $\alpha^{-1}$‑approximation in every instance.

\begin{restatable}{observation}{GBPApprox} 
\label{obs:TrivApprox}
\GraphStar{} is $\alpha^{-1}$-approximable for $\alpha \in {} (0,1)$.
\end{restatable}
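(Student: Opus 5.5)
The plan is to show that the trivial solution $S=\emptyset$, which is feasible and computable in constant time, always has cost at most $\alpha^{-1}\cdot\kappa_{\opt}^{\star}(\bic)$. Since the algorithm only outputs $\emptyset$, all the work lies in two inequalities relating the cost of an arbitrary feasible solution to $c^\star(\emptyset)$ and $c^\star(E)$.

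\textbf{Step 1: monotonicity in the discounted set.} For $S\subseteq T\subseteq E$ we have $w_T(e)\le w_S(e)$ for every edge $e$, because $\alpha<1$ and weights are nonnegative. Hence $w_T(P)\le w_S(P)$ for every path $P$. Taking minima over $P(u,v)$ gives $\delta_T(a)\le\delta_S(a)$ for each agent $a$. Taking the maximum (respectively, the sum) over agents then gives $c^\star(T)\le c^\star(S)$. In particular, any optimal feasible solution $S^*$ satisfies $c^\star(E)\le c^\star(S^*)=\kappa_{\opt}^{\star}(\bic)$.

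\textbf{Step 2: full discount equals uniform scaling.} Under $S=E$ every edge weight is multiplied by $\alpha$, so $w_E(P)=\alpha\,w(P)$ for every path $P$. Since $\alpha>0$, scaling by $\alpha$ preserves which path attains the minimum, and therefore $\delta_E(a)=\alpha\,d(a)=\alpha\,\delta_\emptyset(a)$. Both the maximum and the sum commute with multiplication by the positive constant $\alpha$, so $c^\star(E)=\alpha\,c^\star(\emptyset)$.

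\textbf{Combining the steps.} We get $\alpha\,c^\star(\emptyset)=c^\star(E)\le\kappa_{\opt}^{\star}(\bic)$. Dividing by $\alpha>0$ yields $c^\star(\emptyset)\le\alpha^{-1}\kappa_{\opt}^{\star}(\bic)$, which is the claimed guarantee.

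There is no real obstacle here. The only points needing care are that the bound must hold against the budget-constrained optimum, and $E$ itself may be infeasible; this is handled by Step 1's monotonicity rather than by feasibility of $E$. The other point is that $\alpha>0$ is required in order to divide by $\alpha$.
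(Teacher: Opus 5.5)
Your proof is correct and takes the same approach as the paper: output $S=\emptyset$ and use $c^{\star}(E)=\alpha\cdot c^{\star}(\emptyset)$. Your Step 1 spells out the monotonicity bound $c^{\star}(E)\le\kappa_{\opt}^{\star}(\bic)$, which the paper's one-line justification (phrased in terms of an unlimited budget) leaves implicit, and that step is exactly what handles the possible infeasibility of $E$.
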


\section{Results}
In this section, we present our results. 
We begin by considering two natural greedy algorithms and show that they do not provide any approximation guarantees.
Towards algorithmic feasibility, we then introduce a modified version of Dijkstra's algorithm that yields optimal solutions to \GraphStar{} for a single agent. 
Subsequently, we analyze in depth instances with a fixed number of agents.
We conclude the section with \np-hardness and inapproximability results for a variable number of agents. 

\subsection{Greedy Algorithms}\label{sec:gdy}

Two natural strategies to solve $\GraphStar{}$ are to build a solution incrementally by doing locally optimal improvements. The \emph{bottom-up greedy algorithm} $\gdyup$ iteratively selects an edge that leads to the best improvement in (utilitarian or egalitarian) cost and adds it to the current set of selected edges.
Similarly, the \emph{top-down greedy algorithm} $\gdydn$ starts with the full set of edges and iteratively removes the edge with the smallest increase in cost until feasibility is established.
Formally, the \emph{bottom-up greedy algorithm} $\gdyup$ constructs a solution by adding edges iteratively:
\begin{enumerate}
    \item Set $S_0 = \emptyset$.
    \item For all $i \in [\beta]$: Let $e\in \arg\min_{e \in E \setminus S_{i-1}}c^\star(S_{i-1} \cup \{e\})$ and set $S_i = S_{i-1} \cup \{e\}$.
\end{enumerate}
As the final output, $\gdyup$ returns $S_\beta$.

The \emph{top-down greedy algorithm} $\gdydn$ for a graph with $\ell$ edges works in reverse by deleting edges iteratively:
\begin{enumerate}
    \item Set $S_{\ell} = E$
    \item For all $i = \ell - 1,\dots, \beta$: Let $e\in \arg\min_{e \in S_{i+1}} c^\star(S_{i+1} \setminus \{e\})$ and set $S_i = S_{i+1}\setminus \{e\}$.
\end{enumerate}
As output, $\gdydn$ returns $S_\beta$.

Unfortunately, both algorithms perform poorly,  both for the egalitarian and for the utilitarian objective.
Specifically, they fail to achieve more than an $\alpha^{-1}$-approximation, i.e., the trivial approximation guarantee from \Cref{obs:TrivApprox}.
The proof idea is to create an instance where each agent has an individual direct route that is slightly shorter than a common ``motorway.'' 
The greedy algorithms will select edges on the individual routes, while a much better solution is achieved by choosing all edges on the motorway.
We formally prove the result in \Cref{count:greedy}.
All other omitted proofs are also contained in the appendix.

\begin{restatable}{proposition}{gdpperformance}\label{prop:gdy}
    Let $\alpha \in (0,1)$ and $1\le \gamma < \alpha^{-1}$, or $\alpha = 0$ and $\gamma \ge 1$.
    Then $\gdyup$ and $\gdydn$ are not $\gamma$-approximation algorithms for $\alpha$-$\GraphStar{}$. 
\end{restatable}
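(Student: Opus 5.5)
The plan is to build, for each admissible pair $(\alpha,\gamma)$, a family of instances in which both $\gdyup$ and $\gdydn$ end with cost essentially $c^\star(\emptyset)$, while an optimal solution reaches roughly $\alpha\cdot c^\star(\emptyset)$ (or $0$ when $\alpha=0$). Letting the instance parameters grow then pushes the ratio past any $\gamma<\alpha^{-1}$, or past every $\gamma$ when $\alpha=0$. The construction follows the ``motorway'' idea: there are $n$ agents $(s_i,t_i)$, and each has a private direct route of weight $1$. In addition there is one shared motorway, a path $u=m_0,m_1,\dots,m_k=v$. Every $s_i$ is joined to $u$ and every $t_i$ to $v$ by edges of weight $0$ (or a negligible weight $\varepsilon$). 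The motorway edges get weights whose total is $1+\varepsilon$, so that under base cost every agent uses her direct route and $c^\star(\emptyset)$ equals $1$ (EG) or $n$ (UT). The budget is $\beta=k$. The benchmark solution discounts the whole motorway, and every agent then pays $\alpha(1+\varepsilon)$. The ratio to the greedy outcome is therefore about $\alpha^{-1}$, and it is unbounded when $\alpha=0$.

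\emph{Bottom-up greedy.} The goal is to make every partial motorway discount useless, so that discounting $j<k$ motorway edges never yields a motorway route of weight below $1$. For $\alpha=0$ this holds automatically if all $k$ motorway edges have equal weight, since any undiscounted edge contributes $(1+\varepsilon)/k$. For $\alpha>0$ I would instead make the weights very unbalanced: one tiny edge and $k-1$ heavy edges, or weights chosen so that removing the discount from any single edge keeps the total at least $1$. Meanwhile each agent's direct route should consist of several edges, so that discounting one direct edge gives a strict gain of $(1-\alpha)w(e)$ to that agent. For UT this gain immediately beats every motorway edge, whose marginal gain is $0$. For EG the maximum stays put as long as some agent is untouched, so every candidate ties at gain $0$. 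I would break these ties strictly by adding a dummy agent whose route shares a tiny edge with the direct routes, or else invoke adversarial tie-breaking, which the $\arg\min$ permits. Induction over the $\beta$ rounds then shows that $\gdyup$ never completes the motorway and finishes with cost at least about $\alpha\cdot 1+(1-\alpha)$ for the worst-off agent. With $n>\beta$ agents, at least one agent pays its full cost $1$, and in UT a large number of agents pay their full cost.

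\emph{Top-down greedy.} The process starts from $S=E$, where every agent pays $\alpha$ times her current route. Here I want removing a motorway edge to look cheap early on and removing direct edges to look costly. Since all direct routes are still discounted at the start, losing a single motorway edge raises nobody's cost, provided the discounted direct route $\alpha\cdot 1$ is no longer than the discounted motorway $\alpha(1+\varepsilon)$. This is exactly the case, because the direct routes are shorter. So $\gdydn$ strips the motorway first. Once the motorway is partially undiscounted it is worthless, and the algorithm ends up keeping at most $\beta$ direct edges, giving the same bad cost as above.

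The main obstacle is choosing the motorway weights for $\alpha>0$. Two conditions pull against each other: every partial discount ($j<k$ edges) must leave the motorway at weight $\ge 1$, yet the full discount must bring it to about $\alpha$. With equal weights the condition for $j=k-1$ forces a total of roughly $k/(1+\alpha(k-1))$, which ruins the ratio. I would therefore first try a motorway made of a single heavy edge of weight $1+\varepsilon$ followed by $k-1$ edges of weight $0$. Discounting the heavy edge alone then already gives the full benefit, so to block greedy I would rely on the multiplicity of agents and the direct-route structure (direct routes made of many small edges) to make direct edges strictly more attractive per step. If that tension still cannot be resolved, I would fall back on using several parallel motorways.
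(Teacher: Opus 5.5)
Your proposal has a genuine gap: it never fixes motorway weights that work, and the obstacle you end on is self-imposed. $\gdyup$ is myopic. It only compares single-edge additions to its current set. So you do not need every partial discount of $j<k$ motorway edges to be useless. It suffices that adding \emph{one} motorway edge to a set containing no motorway edge never lowers anyone's cost. Then by induction $\gdyup$ never selects a motorway edge, and the cases $j\ge 2$ never arise.

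This weaker condition holds with many equal, light motorway edges, provided the undiscounted motorway route is longer than every direct route by more than the saving $(1-\alpha)w(e)$ of a single motorway edge. The price is an additive overhead that the benchmark also pays, so the ratio approaches $\alpha^{-1}$ only in the limit, with route lengths growing in $\beta$. This is exactly the paper's construction:
\begin{itemize}
\item $\beta+1$ agents, each with a direct route of $\beta+1$ edges (weights $1+\epsilon_i,1,\dots,1$, total $<\beta+2$);
\item a motorway of $\beta+1$ unit edges, reached through unit-weight connectors (route cost $\beta+3$);
\item the comparison $c\egsup(S_\uparrow)>\beta$ versus $c\egsup(S^*)\le 3+\alpha\beta$, analogously for UT with a factor of about $\beta+1$.
\end{itemize}

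Your fallback goes the opposite way and fails at once. If a single motorway edge of weight $1+\varepsilon$ carries the whole length, discounting it in the first round drops every agent to $\alpha(1+\varepsilon)$. Under either objective, that is a larger improvement than any single small direct edge can give, so $\gdyup$ takes it and is optimal. More agents only strengthen this. Your remark that equal weights settle $\alpha=0$ automatically is also wrong: with $k-1$ of $k\ge 2$ equal edges discounted to $0$, the motorway costs $(1+\varepsilon)/k<1$.

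Second, connectors of weight $0$ (or negligible weight) destroy the privacy of the direct routes. Agent $i$ can travel $s_i\to u\to s_j$, use agent $j$'s direct route, and return via $t_j\to v\to t_i$ at essentially no extra cost. So all agents effectively share the same terminals, and a discount on any direct route helps every agent. Your claims that such a discount benefits only its owner, and that with $n>\beta$ some agent still pays $1$, are therefore false. In fact, once one edge on route $j$ is discounted, a further edge on route $j$ helps all $n$ agents, whereas an equally heavy edge elsewhere helps nobody. So $\gdyup$ piles its budget onto one direct route, and if that route has at most $\beta$ edges, every agent ends at cost about $\alpha$, matching your benchmark.

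The paper prevents this with connectors of weight $1$: a detour through another agent's route costs an extra $4$, while in $\gdyup$'s run each route receives at most one discounted edge, saving less than $2$. Moreover, the paper's strictly increasing $\epsilon_i$ make the maximum-cost agent unique. This is how it handles the egalitarian ties you proposed to break with a dummy agent or adversarial tie-breaking.
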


\subsection{Single-Agent Case}

For a single agent, the egalitarian and utilitarian objectives are identical.
However, investing budget optimally is still challenging.
For example, one cannot simply invest in the most expensive edges on the shortest path of the agent, as illustrated by \Cref{ex:intro}.
Nevertheless, we now present \Cref{alg:dijkstra_mod}: a modified version of Dijkstra's algorithm that computes optimal solutions in polynomial time.

\begin{restatable}{theorem}{OneG}
\label{theo:1-graph}
\Cref{alg:dijkstra_mod} solves $1$-\GraphStar{} in time \bigO{\beta m\log m + \beta \ell}, when the underlying graph has $m$ vertices and $\ell$ edges.
Moreover, given a fixed vertex $s$, in this time, \Cref{alg:dijkstra_mod} computes for any pair $(v,b)$, where $v\neq s$ and $b\in [\beta]_0$, the minimum cost of a path from $s$ to $v$ when reducing the cost of at most $b$ edges.
\end{restatable}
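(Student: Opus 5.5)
We will prove the statement by analyzing Dijkstra's algorithm on a layered \emph{product graph}. Take the vertex set $V\times[\beta]_0$, where the state $(v,b)$ means ``at $v$, having discounted at most $b$ edges so far.'' For every edge $\{u,v\}\in E$ and every $b\in[\beta]_0$, add two kinds of arcs:
\begin{itemize}
\item an arc from $(u,b)$ to $(v,b)$ of weight $w(\{u,v\})$;
\item for $b<\beta$, an arc from $(u,b)$ to $(v,b+1)$ of weight $\alpha\cdot w(\{u,v\})$.
\end{itemize}
Add the symmetric arcs as well. \Cref{alg:dijkstra_mod} is then Dijkstra's algorithm run from $(s,0)$ on this graph, possibly with the states generated implicitly and with a dynamic-programming relaxation $D(v,b)=\min\{D(v,b-1),\ \dots\}$ so that ``at most $b$'' is monotone in $b$. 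The answer for a pair $(v,b)$ is $\min_{b'\le b} D(v,b')$, and the answer for the single agent $(s,t)$ is $\min_{b\le\beta} D(t,b)$. Storing predecessor pointers lets us recover the discounted edge set $S$.

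\textbf{Correctness.} We need two inequalities, which we prove separately.
\begin{enumerate}
\item \emph{Every walk in the product graph is achievable.} Let $W$ be an $(s,0)$--$(v,b)$ walk. Its projection to $G$ is a walk that uses at most $b$ ``discount'' arcs. Let $S$ be the set of edges of $G$ underlying these arcs, so $|S|\le b$. Then $w_S$ of the projected walk is at most the weight of $W$, because an edge used more than once is charged $\alpha w$ whenever it is in $S$. Shortcutting the projected walk to a path only decreases cost, since weights are nonnegative. Hence $\delta_S\le D(v,b)$.
\item \emph{Every solution is matched by a walk.} Conversely, fix any $S$ with $|S|\le b$ and an optimal $s$--$v$ path $P$ under $w_S$. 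The path $P$ contains at most $|S|\le b$ edges of $S$. Lift $P$ by taking the discount arc exactly on those edges. This gives a walk in the product graph of weight $w_S(P)$ ending at some layer $b'\le b$.
\end{enumerate}
Combining the two gives equality. Nonnegativity of the arc weights, guaranteed by $\alpha\ge 0$ and $w\ge 0$, ensures that Dijkstra's algorithm is correct on the product graph.

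\textbf{Running time.} The product graph has $(\beta+1)m$ vertices and $\bigO{\beta\ell}$ arcs. A binary-heap Dijkstra would therefore give $\bigO{\beta\ell\log(\beta m)}$, which is weaker than the claimed bound $\bigO{\beta m\log m+\beta\ell}$. To reach the claimed bound, we would use a Fibonacci heap, which costs $\bigO{\beta\ell+\beta m\log(\beta m)}$. Alternatively, to obtain exactly $\log m$, process the layers one after another: layer $b$ is a standard Dijkstra on $G$ whose initial labels are
\[
D_0(v,b)=\min\Bigl\{D(v,b-1),\ \min_{u\in N(v)} \bigl(D(u,b-1)+\alpha\, w(\{u,v\})\bigr)\Bigr\}.
\]
These labels are computed in $\bigO{\ell}$ time, after which Dijkstra with a Fibonacci heap runs in $\bigO{\ell+m\log m}$ per layer. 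Summing over the $\beta+1$ layers gives the stated bound.

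The layered formulation is valid because discount arcs only go upward in layers. So layer $b$'s distances depend only on layer $b-1$ through these seeded initial labels, and within a layer only ordinary edges are used. This sequential decomposition is the step needing the most care. One must argue that seeding Dijkstra with arbitrary initial labels (a multi-source Dijkstra) still yields correct shortest distances; we handle this by adding a virtual source with an arc of weight $D_0(v,b)$ to each $v$.
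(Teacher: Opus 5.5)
Your proof is correct, but it is organized differently from the paper's. The paper never builds an auxiliary graph. It shows, by induction over the extraction order, that each routing pair $(v,b)$ holds the optimal value when it becomes the pivot. One direction comes from the last update that touched $D[v,b]$. The other comes from following an optimal path's sequence of routing pairs to the first one not yet processed, checking the three update types. A separate downward induction along the predecessor chain then shows that the returned $S$ attains $D[t,\beta]$.

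You instead identify \Cref{alg:dijkstra_mod} with textbook Dijkstra on a layered graph. Correctness then reduces to Dijkstra's standard guarantee plus your two-way correspondence between walks and pairs $(S,P)$. Your first inequality also settles the recovery of $S$ from predecessor pointers without a separate argument. This is more modular; the paper's route is self-contained and tracks the pseudocode line by line.

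To make the identification exact rather than ``possibly,'' encode the budget-increasing update as zero-weight arcs $(v,b)\to(v,b+1)$ for $v\neq s$. Also drop the copies $(s,b)$ with $b\ge 1$ and all arcs into $s$, as the algorithm does. Both inequalities go through unchanged: a lifted simple $s$--$v$ path never re-enters $s$, and it can climb to layer $b$ via zero arcs at $v$. Then $D[v,b]$ is already the answer, without taking $\min_{b'\le b}$.

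For the running time, your Fibonacci-heap count for \Cref{alg:dijkstra_mod} is exactly the paper's, $\mathcal{O}(\beta m\log(\beta m)+\beta\ell)$. The paper turns $\log(\beta m)$ into $\mathcal{O}(\log m)$ simply by using $\beta\le m$. This is without loss of generality, because a simple path has at most $m-1$ edges, so extra budget never helps. Your layer-by-layer variant reaches $\mathcal{O}(\beta(\ell+m\log m))$ with no restriction on $\beta$, which is slightly stronger. However, it is a different algorithm from \Cref{alg:dijkstra_mod}. Since the theorem is about that algorithm, the one-line bound on $\beta$ is the step you actually need.
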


A formal proof of correctness and running time analysis are given in \Cref{app:oneag}.
In that appendix, we also provide an example execution of the algorithm.
Here, we focus on describing the algorithm and giving an intuition for its running time.

We define a \emph{routing pair} as a pair $(v, b)$ of a vertex $v \in V$ and a budget indicator $b\in[\beta]_0$.
\Cref{alg:dijkstra_mod} operates similarly to Dijkstra's algorithm, but instead of visiting vertices, it processes routing pairs.
The idea is to compute, for each routing pair $(v,b)$ the length of a shortest path from $s$ to $v$ when discounting at most $b$ edges.
These distances are captured by variables $D[v, b]$ that are iteratively determined by the algorithm. 
We initialize $D[v, b] = \infty$ for each routing pair $(v, b)$ with the exception of the starting pair $(s, 0)$, for which we set $D[s, 0] = 0$.

In each iteration, we select a 
routing pair $(v,b)$ with the smallest value $D[v,b]$ among unselected pairs, and refer to it as the \emph{pivot pair}.
The minimality ensures that it would not be selected in further updates.
We use the pivot pair to update distances to ``adjacent'' routing pairs. 
There are three types of updates: 
\begin{itemize}
    \item \textbf{Non-reducing update}: simulates a path where the final edge cost is not reduced.
    \item \textbf{Reducing update}: simulates a path to a neighbor where the final edge cost is reduced.
    \item \textbf{Budget-increasing update}: simulates a path that is using less than the maximum budget.
\end{itemize}

The first two of these updates consider updates to neighbors through paths where the last edge is from the vertex of the pivot pair, and the third type of update is due to using an identified shortest path even for a higher budget.
We could, in principle, stop the algorithm once we have processed $(t,\beta)$.
However, in our presentation, we explore all possible pairs, which additionally ensures that we find paths from $s$ to all other vertices in the graph, under any possible budget up to $\beta$. 

\begin{algorithm}[tb!]
    \caption{Modified Dijkstra's Algorithm for $1$-\GraphStar{}     \label{alg:dijkstra_mod}}
      \DontPrintSemicolon 
      \SetFuncSty{textsc}
      \SetKwFor{ForAll}{forall}{do}

        \SetDataSty{textnormal}
      \SetKwData{Q}{$Q$}
      \SetKwData{D}{$D$}
      \SetKwData{p}{$p$}
      \SetKwData{S}{$S$}

    \KwIn{Instance $\gbpins$ of \Graph{} with $|\ac| = 1$}
    \KwOut{Distance matrix \D with the minimal cost for all routing pairs, optimal solution $S$.}

    \tcp{Initialization}
    \Q $\gets \{(v,b)\colon v\in V\setminus \{s\}, b \in [\beta]_0\}\cup\{(s, 0)\}$ \tcp{Priority queue}
    $\D[v, b] \gets \infty$, $\p[v,b]\gets \bot$ for all $v \in V \setminus \{s\}, b \in [\beta]_0$ \;
    $\D[s, 0] \gets 0$, $\p[s, 0] \gets \bot$ \;
    \tcp{Modified Dijkstra Algorithm}
    \While{\upshape $\Q\neq \emptyset$}{
        \tcp{Extract routing pair with minimal distance}
        Select $(v, b)\in \Q$ minimizing $\D[v,b]$ \label{ln:Qextract}\;
		$\Q \gets \Q\setminus \{(v,b)\}$ \;
        \ForAll{$u \in N(v)\setminus \{s\}$}{
            \tcp{Non-reducing update}
            \If{$\D[v, b] + w(v, u) < $ $\D[u, b]$}{
                $\D[u, b] \gets \D[v, b] + w(v, u)$ and $\p[u, b] \gets \D[v, b]$  \label{ln:update_nonred}\;
            }
            \tcp{Reducing update}
            \If{$(b < \beta) \land (\D[v, b] + \alpha \cdot w(v, u) < $ $\D[u, b + 1])$}{
                $\D[u, b + 1] \gets \D[v, b] + \alpha \cdot w(v, u)$ and $\p[u, b + 1] \gets \p[v, b]$ \label{ln:update_red}\;
            }
        }
        \tcp{Budget-increasing update}
        \If{$(v \neq s) \land (b < \beta) \land (\D[v, b] < \D[v, b+1])$}{
            $\D[v, b+1] \gets \D[v, b]$ and $\p[v, b + 1] \gets \p[v, b]$ \label{ln:update_budget}\;
        }
    }
    \tcp{Retrieval of solution}
    $\S\gets \emptyset$, $r\gets (t,\beta)$\;
    \While {$\p[r]\neq \bot$}
    {Let $r = (v_1,b_1)$ and $\p[r] = (v_2,b_2)$\;
    \If{$v_1\neq v_2$ and $b_1 = b_2 + 1$}{
    $\S\gets \S\cup \{v_1,v_2\}$\;
    }
    $r\gets \p[r]$\;
    }
    
    \Return $\D[t, \beta]$, $\S$ \;
\end{algorithm}

The necessity of the budget-increasing update may not be immediately obvious. Its purpose is to enforce $(v,b) \leq (v,b-1)$ for all routing pairs with budget indicator $b\ge 1$.
If this update is omitted, the algorithm would determine the minimum cost with budget \textit{exactly} $\beta$, rather than \textit{at most} $\beta$.
This would require a final check going through $D[t,b]$ for all $b\in [\beta]_0$.

\Cref{alg:dijkstra_mod} can be viewed as a modification of Dijkstra's algorithm. 
Instead of operating on a graph with $m$ vertices and $\ell$ edges, it is based on $\beta \cdot m$ routing pairs and $\beta \cdot \ell$ edges. 
While we defer the proof of correctness, we will now present some intuition regarding its running time.
The overall running time depends, similar to Dijkstra's algorithm, on the efficiency of the data structure $Q$. 
Let $T_\text{dk}$ and $T_\text{em}$ represent the time costs of the \emph{decrease key} and \emph{extract minimum} operations, respectively. 
The former is responsible for updating the significance of a routing pair in the priority queue in lines~\ref{ln:update_nonred}, \ref{ln:update_red}, and \ref{ln:update_budget}.
The latter is relevant for extracting pivot pairs in \cref{ln:Qextract}.

The total cost is therefore:
$$\bigO{\beta m \cdot T_\text{em} + \beta \ell \cdot T_\text{dk}}$$
as demonstrated by \addauthor{Fredman and Tarjan~}\citet{ft-fhtui-87}. 
With a Fibonacci heap ($T_{\mathrm{em}}=\bigO{\log(\beta m)}$, $T_{\mathrm{dk}}=\bigO{1}$ amortized)~\citep{thomas2009introduction} this becomes 
\[
\bigO{\beta m\log m + \beta\ell}\text. 
\]

Note that in any connected graph, the number of edges $\ell$ satisfies $m-1 \leq \ell \leq m^2$. 
Hence, the runtime can be written more compactly as $\bigO{\beta m^2}$.
In addition, in principle, Dijkstra's algorithm computes the shortest paths from a source to all vertices in the graph. 
Similarly, \Cref{alg:dijkstra_mod} computes the minimum distance from a source vertex $s$ to all other vertices using budgets from $0$ to $\beta$ in a single execution, which we refer to as a \emph{pivot pair mapping}, i.e., a mapping $V\setminus \{s\} \times [\beta]_0\to \mathbb Q_0^+$.
The restriction of such a mapping to a fixed target vertex is called a
\emph{budget mapping}, i.e, these are mappings $[\beta]_0\to \mathbb Q_0^+$.
Note that \Cref{theo:1-graph} establishes pivot pair mappings for a fixed source vertex. Thus, we obtain the following corollary.

\begin{corollary}\label{cor:budmap}
    Given an instance of \Graph{}, we can compute budget mappings for all pairs of terminals in time $\mathcal O(\beta m^2\log m + m\beta \ell)$.
\end{corollary}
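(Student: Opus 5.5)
The plan is to run \Cref{alg:dijkstra_mod} once from every vertex that could serve as a source and read off the required budget mappings from the resulting pivot pair mappings. By the second part of \Cref{theo:1-graph}, a single execution from a fixed source $s$ yields, for every $v\neq s$ and every $b\in[\beta]_0$, the minimum cost of an $s$--$v$ path when at most $b$ edges are discounted. Restricting this pivot pair mapping to a fixed target $v$ gives exactly the budget mapping for the pair $(s,v)$.

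First, I will fix the set of sources. Let the sources be all terminals in $\theta(\ac)$, or, for a bound that does not depend on the agents, all of $V$. There are at most $m$ of them. Second, for each source $s$ I will execute \Cref{alg:dijkstra_mod}. The algorithm never uses the agent's target $t$ until the retrieval phase, so it is well defined for an arbitrary source. Each run costs $\mathcal O(\beta m\log m+\beta\ell)$ by \Cref{theo:1-graph}. Third, for every other terminal $v$ I will store the mapping $b\mapsto D[v,b]$. The budget mapping for a pair with $s=v$ is simply the zero mapping.

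Over $m$ sources the total running time is $m\cdot\mathcal O(\beta m\log m+\beta\ell)=\mathcal O(\beta m^2\log m+m\beta\ell)$. Writing out all mappings costs only $\mathcal O(m^2\beta)$, which is dominated by this bound. Because the graph is undirected, a single run also covers both orders of each terminal pair, so no additional work is needed there.

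The only point needing care is that \Cref{theo:1-graph} really gives correct values for every target simultaneously, not only for $(t,\beta)$. The theorem states this explicitly, since the algorithm processes all routing pairs rather than stopping once $(t,\beta)$ has been extracted, so no further obstacle remains.
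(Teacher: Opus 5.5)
Your proposal is correct and takes the same approach as the paper: run \Cref{alg:dijkstra_mod} once from each of the at most $m$ source vertices, restrict each resulting pivot pair mapping to the individual targets, and multiply the single-run bound of \Cref{theo:1-graph} by $m$. Your extra remarks, namely the zero mapping for $s=v$ (which the paper also adopts later in the proof of \Cref{thm:FPThub}) and the $\mathcal O(m^2\beta)$ output cost being dominated by the stated bound, are consistent with this.
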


We will utilize this result when extending the algorithm to settings with more than one agent.

\subsection{Fixed Number of Agents}
A natural question is whether \Cref{theo:1-graph} extends to instances with more than one agent. 
Consider first the case of two agents.
Two situations may arise in an optimal solution: either the agents’ shortest paths are disjoint, or they intersect.
In the disjoint case, the agents can be handled independently. We apply Algorithm~\ref{alg:dijkstra_mod} for each agent to compute their respective budget mappings, and then determine an optimal budget allocation that minimizes the cost (egalitarian or utilitarian).

\begin{figure}[tb]
    \centering
        \includegraphics[width=.4\linewidth, page=8, trim=390 480 540 70, clip]{graphics.pdf}
    \caption{Pattern of $2$-\Graph{} with non-empty intersection of shortest paths of the agents.}
    \label{fig:inter}
\end{figure}

If the agents’ paths intersect, the situation is more complex. 
We may assume without loss of generality that there is only a single intersection bounded by two junction vertices (these may be the same, in which case the paths only intersect at a vertex, but do not share edges). 
Given these vertices, the overall structure breaks into (up to) five separate path branches (see \Cref{fig:inter}). 
Each branch can again be processed using Algorithm~\ref{alg:dijkstra_mod} before finding a globally optimal allocation of the budget across all branches, see \Cref{app:fixedN}.

\begin{restatable}{theorem}{TwoG}
\label{theo:2-graph}
$2$-\GraphStar{} is solvable in \bigO{\beta m^2 \log m + m \beta \ell + m^2 \beta^2} time.
\end{restatable}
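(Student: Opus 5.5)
We reduce to the single-agent subroutine of \Cref{theo:1-graph} and enumerate the structure of an optimal solution. Let $a_1=(s_1,t_1)$ and $a_2=(s_2,t_2)$, and fix an optimal solution $S^*$ together with shortest paths $P_1,P_2$ in the discounted graph $w_{S^*}$.

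\textbf{Step 1: normalize the intersection.} We show that $P_1,P_2$ can be chosen so that $P_1\cap P_2$ is either empty or a single contiguous subpath from a junction $x$ to a junction $y$ (possibly $x=y$). If the paths meet at vertices $x$ (first common vertex along $P_1$) and $y$ (last), we replace the segment of $P_2$ between $x$ and $y$ by the segment of $P_1$ between them, or conversely, keeping the cheaper of the two under $w_{S^*}$. Both agents stay on shortest paths, and discounted edges used by only one segment are freed; only edges of $P_1\cup P_2$ are ever needed, so $S^*$ can be pruned to $S^*\cap E(P_1\cup P_2)$. Orientation is handled by also trying the swapped pair $(t_2,s_2)$. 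The costs are then
\[
\delta(a_1)=f(s_1,x,b_1)+f(x,y,b_0)+f(y,t_1,b_2),\qquad \delta(a_2)=f(s_2,x,b_3)+f(x,y,b_0)+f(y,t_2,b_4),
\]
where $f(u,v,b)$ is the budget mapping of \Cref{cor:budmap}. The allocation satisfies $b_0+\dots+b_4\le\beta$ (overcounting is harmless), and in the disjoint case we take $\delta(a_i)=f(s_i,t_i,c_i)$ with $c_1+c_2\le\beta$.

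\textbf{Step 2: soundness.} Any such choice of $(x,y,b_0,\dots,b_4)$ is realized by a feasible set: take the union of the optimal edge sets of the five branch subproblems. This set has at most $\beta$ edges, and each agent's reduced cost is at most the displayed sum. Hence minimizing these expressions over all choices yields exactly $\kappa^\star_{\opt}$ for both $c\egsup$ and $c\utsup$, and the retrieval step of \Cref{alg:dijkstra_mod} recovers the edges.

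\textbf{Step 3: running time.} We first compute all budget mappings via \Cref{cor:budmap} in $\bigO{\beta m^2\log m+m\beta\ell}$ time. A naive enumeration over $x,y$ and five budgets is far too slow, so we precompute, for each $x$ and each $k\in[\beta]_0$, the best split $g_x(k)$ of the budget between the two branches ending at $x$ with a combined objective. Doing the same for $y$ costs $\bigO{m\beta^2}$. For a pair $(x,y)$ we then need to choose $b_0$ and the split between the $x$-side and the $y$-side, about $\beta^2$ options, for a total of $\bigO{m^2\beta^2}$.

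The main obstacle is that the utilitarian and egalitarian objectives do not decompose into one scalar per side, so $g_x$ must be defined with care. For the utilitarian objective, $\delta(a_1)+\delta(a_2)$ is additive, so $g_x(k)=\min_{b_1+b_3=k}\bigl(f(s_1,x,b_1)+f(s_2,x,b_3)\bigr)$ works. For the egalitarian objective, $\max(\delta(a_1),\delta(a_2))$ couples the agents across the $x$- and $y$-sides. We will try two remedies. The first is to guess which agent attains the maximum. The second is to observe that for each fixed $(x,y,b_0)$ the problem reduces to minimizing a max of sums of monotone nonincreasing budget functions, which a two-pointer merge over the budgets can solve in $\bigO{\beta}$ per pair after sorting. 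This retains the $\bigO{m^2\beta^2}$ bound, and if it fails we will fall back to a weaker polynomial bound.
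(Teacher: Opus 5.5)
\textbf{Gap: the egalitarian objective.} Your five-branch decomposition, the soundness argument and the utilitarian running-time analysis are fine. For the utilitarian cost, your side-based tables $g_x,h_y$ are a valid alternative to the paper's per-agent grouping. But the theorem also claims the bound for the egalitarian cost, and there you do not actually give an algorithm. The side-based tables do not help, and neither remedy works as stated. Guessing which agent attains the maximum decouples nothing: it only adds the side constraint $\delta(a_2)\le\delta(a_1)$ to a minimization of $\delta(a_1)$. A two-pointer sweep needs two monotone functions of one budget each. For fixed $(x,y,b_0)$ you instead face $\max\{f(s_1,x,b_1)+f(y,t_1,b_2),\,f(s_2,x,b_3)+f(y,t_2,b_4)\}$ with four free budgets. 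Together with your announced fallback to a weaker bound, the $\mathcal{O}(m^2\beta^2)$ term is not established for the egalitarian objective.

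The missing idea is to group by agent rather than by side, and to pull the shared term out of the maximum. Both agents pay the same $f(x,y,b_0)$, so $\max\{\delta(a_1),\delta(a_2)\}=f(x,y,b_0)+\max\{\pi_1,\pi_2\}$, where $\pi_i$ is agent $i$'s private cost. For each pair $(x,y)$:
\begin{itemize}
\item compute $\pi_i(k)=\min_{c\le k}\bigl[f(s_i,x,c)+f(y,t_i,k-c)\bigr]$ for $i=1,2$ and all $k\in[\beta]_0$;
\item compute $\pi(k)=\min_{k_1+k_2=k}\max\{\pi_1(k_1),\pi_2(k_2)\}$;
\item return $\min_{b_0}\bigl[f(x,y,b_0)+\pi(\beta-b_0)\bigr]$.
\end{itemize}
Each step costs $\mathcal{O}(\beta^2)$ per pair, giving $\mathcal{O}(m^2\beta^2)$ overall. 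This is exactly the paper's chain $\mu_1=\mu_{sp}+\mu_{qt}$, $\mu_2=\mu_{s'p}+\mu_{qt'}$, $\mu_3=\max\{\mu_1,\mu_2\}$, then $\mu_3+\mu_{pq}$. Note that $\pi_1,\pi_2$ depend on $(x,y)$ but not on $b_0$, so they must be computed once per pair and reused. Recomputing them inside a loop over $(x,y,b_0)$ would cost $\mathcal{O}(m^2\beta^3)$.

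A minor point in Step~1. Both $x$--$y$ segments are subpaths of shortest paths and hence have equal cost, so there is no ``cheaper'' one to keep. Also, with $x,y$ the first and last common vertices along $P_1$, rerouting $P_2$ along $P_1[x,y]$ can create a non-simple walk when $P_2$ meets $P_1$ before reaching $x$. This can happen with zero-weight edges, e.g.\ discounted edges when $\alpha=0$. Rerouting $P_1$ along $P_2[x,y]$ is always safe, or you can simply invoke \Cref{lem:PathConsistency}.
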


As in the single-agent case, the number of edges in a connected graph is bounded between $m$ and $m^2$, and we can assume the total budget is at most $2m$ (a budget of $m$ for each agent). 
Hence, the runtime simplifies to $\bigO{\beta m^3}$ or $\bigO{m^4}$.

When the number of agents is viewed as a parameter, the situation is much more involved.
Our next result derives an XP algorithm for \GraphStar{}.
It is based on a structural lemma about the pairwise intersections of shortest paths.
Specifically, we will derive and make use of the fact that there exist shortest paths such that the intersection of each pair of paths is itself a (contiguous) path.
Formally, following \addauthor{Bodwin~}\citet{Bodw19a}, we say that a collection of paths $\mathcal P$ is \emph{consistent} if for any paths $P, P'\in \mathcal P$, it holds that $(V(P)\cap V(P'),E(P)\cap E(P'))$ is also a path, i.e., the paths share at most one contiguous piece.
Note that this can be the ``empty'' path if $V(P)\cap V(P') = \emptyset$, in which case the paths do not share any segment.
In other words, in a collection of consistent paths, no two paths can share a segment, then divert, and rejoin later on.
We make use of a folklore result about consistent collections of unique shortest paths \citep{Bodw19a}.

\begin{lemma}[\citet{Bodw19a}]\label{lem:UniqueConSP}
    Consider a weighted graph $G = (V,E,w)$ and a collection of $k$ terminal pairs $\{(s_i,t_i)\colon i\in [k]\}$, where $s_i,t_i\in V$. 
    If for every $i\in [k]$ the shortest $s_i$-$t_i$-paths are unique, then the collection of shortest $s_i$-$t_i$-paths is consistent.
\end{lemma}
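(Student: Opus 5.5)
The plan is to prove the contrapositive. Suppose $P_i$ and $P_j$ are the unique shortest $s_i$-$t_i$ and $s_j$-$t_j$ paths, and suppose their intersection $(V(P_i)\cap V(P_j),E(P_i)\cap E(P_j))$ is not a path. I will derive a second shortest path for one of the two pairs.

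First I reduce to the key configuration. Because the intersection fails to be a path, I want two distinct vertices $x,y\in V(P_i)\cap V(P_j)$ such that the subpaths $P_i[x,y]$ and $P_j[x,y]$ differ. There are two ways this can happen.
\begin{enumerate}
\item The intersection is disconnected. Take $x,y$ in different components, consecutive along $P_i$. Then $P_i[x,y]$ contains a vertex or edge not in $P_j$, so it differs from $P_j[x,y]$.
\item The intersection is connected but not a path, for instance because vertices are shared without the edge between them. Taking $x,y$ adjacent on one path but not joined by a common edge again gives two distinct subpaths.
\end{enumerate}
In both cases, orienting the paths suitably, $x$ and $y$ appear on both paths and the two segments between them are different.

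Second, I apply subpath optimality. Every subpath of a shortest path is a shortest path between its endpoints. Hence $w(P_i[x,y])=d(x,y)=w(P_j[x,y])$. Now replace $P_i[x,y]$ inside $P_i$ by $P_j[x,y]$. This gives an $s_i$-$t_i$ walk of weight $w(P_i)$.

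Third, I turn this walk into a competing path. If the walk is a simple path, it is a second shortest $s_i$-$t_i$ path distinct from $P_i$, which contradicts uniqueness. If it is not simple, shortcut its cycles. With nonnegative weights this yields a path of weight at most $w(P_i)$, hence exactly $w(P_i)$.

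The main obstacle is to ensure the shortcut path is genuinely different from $P_i$. One danger is zero-weight cycles. The other is that removing cycles might simply recover $P_i$.

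To handle this, I will choose $x,y$ so that the segments $P_i[x,y]$ and $P_j[x,y]$ share only their endpoints. I pick a maximal diverging segment: starting from $x$, follow $P_i$ until it first leaves $P_j$, then let $y$ be the next vertex of $P_i$ that lies on $P_j$. I then need $P_j[x,y]$ to avoid the interior of $P_i[x,y]$; I will check this by a first-intersection argument, and if it fails I re-pick $y$ along $P_j$ instead.

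I also need $P_j[x,y]$ to be disjoint from the rest of $P_i$. If it met $P_i$ outside $[x,y]$, I would get $d(x,y)$ related to a longer portion of $P_i$ and hence a contradiction with the order of vertices. Here I would use positive weights. Zero-weight edges would immediately break uniqueness anyway, in the relevant cases, via a duplicated path. Once these disjointness facts hold, the spliced walk is simple and differs from $P_i$, which contradicts uniqueness. Therefore every pairwise intersection is a contiguous path, and the collection is consistent.
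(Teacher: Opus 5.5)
The paper gives no proof of this lemma. It is quoted as folklore from \citet{Bodw19a} and is only ever applied, in the proof of \Cref{lem:PathConsistency}, to the perturbed weights $w'(e_i)=w(e_i)+2^{-i}\epsilon$, which are strictly positive. Judged against the statement itself, your first two steps are fine; your case~2 is in fact vacuous, since a connected subgraph of a path is a subpath. For strictly positive weights, your choice of $x,y$ together with the distance argument does make the spliced walk a simple $s_i$--$t_i$ path different from $P_i$: a vertex $z$ of $P_i$ before $x$ that lies on $P_j[x,y]$ would give $d(z,y)>d(x,y)\ge d(z,y)$. Your ``re-pick $y$'' contingency is moot, because by construction the interior of $P_i[x,y]$ contains no vertex of $P_j$.

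The gap is the zero-weight case, which the statement allows since $w\colon E\to\mathbb Q_0^+$. There the distance inequality degenerates to equality, and nothing stops $P_j[x,y]$ from re-entering $P_i$ outside $P_i[x,y]$. For example, take $P_i=s\,z\,x\,m\,y\,t$ with $w(\{z,x\})=0$ and all its other edges of weight $1$. Add a vertex $n$ with edges $\{z,n\},\{n,y\}$ of weight $1$, and let $P_j=x\,z\,n\,y$. Both are shortest paths, your rule selects exactly these $x$ and $y$, and $P_j[x,y]$ passes through $z$. Uniqueness does fail here, but through $s\,z\,n\,y\,t$, not through a ``duplicated'' zero-weight path. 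So your sentence ``zero-weight edges would immediately break uniqueness anyway, in the relevant cases'' carries the entire nonnegative case without proof.

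The missing observation makes the disjointness analysis unnecessary. Since $P_i[x,y]\neq P_j[x,y]$, and a simple $x$--$y$ path containing every edge of $P_i[x,y]$ must equal it, some edge $e$ of $P_i[x,y]$ is not on $P_j[x,y]$. The edge $e$ is not on $P_i[s_i,x]$ or $P_i[y,t_i]$ either, so the spliced walk never uses $e$. Any $s_i$--$t_i$ path contained in that walk has weight at most $w(P_i)$, so it is a shortest path. It also avoids $e$, so it differs from $P_i$. This argument works for every nonnegative weight function and for any $x,y$ in different components of the intersection, and it is exactly what resolves your ``main obstacle''.
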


However, we need the stronger result that, even if shortest paths are not unique, there \emph{exist} consistent collections of shortest paths. 
We obtain this by using well-known perturbation techniques that achieve unique shortest paths \citep{Char52a,DOW55a,CCE13a} and then apply \Cref{lem:UniqueConSP}.

\begin{restatable}{lemma}{PathConsistency}\label{lem:PathConsistency}
    Given a weighted graph $G = (V,E,w)$ and a collection of $k$ terminal pairs $\{(s_i,t_i)\colon i\in [k]\}$, where $s_i,t_i\in V$, there exist a collection of paths $P_i$, $i\in [k]$, such that
    \begin{itemize}
        \item For every $i\in [k]$, $P_i$ is a shortest $s_i$-$t_i$-path.
        \item $\{P_i\colon i\in [k]\}$ is consistent.
    \end{itemize}
\end{restatable}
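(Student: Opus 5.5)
We prove \Cref{lem:PathConsistency} by perturbing the weights so that all shortest paths become unique and remain shortest for the original weights, and then applying \Cref{lem:UniqueConSP} to the perturbed graph.

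Enumerate the edges as $e_1,\dots,e_\ell$ and, for a small rational $\varepsilon>0$, define
\[
w'(e_j) := w(e_j) + \varepsilon^{j}.
\]
For any path $P$, the perturbed weight is
\[
w'(P) = w(P) + \sum_{e_j\in E(P)} \varepsilon^j .
\]

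The proof has two steps.

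\textbf{Step 1 (shortest paths stay shortest).} Since the weights are rational, any two paths with distinct original weights differ by at least some positive gap $g$. One such $g$ is $1/L$, where $L$ is a common denominator of all edge weights. The perturbation term of any path is at most $\sum_j \varepsilon^j < 2\varepsilon$ whenever $\varepsilon\le 1/2$. Choosing $\varepsilon < g/2$ therefore ensures that $w'(P)<w'(Q)$ whenever $w(P)<w(Q)$. Consequently, every $w'$-shortest $s_i$-$t_i$-path is also a $w$-shortest $s_i$-$t_i$-path.

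\textbf{Step 2 (uniqueness).} Let $P\neq Q$ be two $s_i$-$t_i$-paths with $w'(P)=w'(Q)$. By Step 1 they must satisfy $w(P)=w(Q)$, and hence
\[
\sum_{e_j\in E(P)}\varepsilon^j = \sum_{e_j\in E(Q)}\varepsilon^j .
\]
Distinct paths between the same endpoints have distinct edge sets, so the two sums run over different subsets of $\{1,\dots,\ell\}$. Let $j^*$ be the smallest index in their symmetric difference. For $\varepsilon<1/2$, the term $\varepsilon^{j^*}$ strictly exceeds the sum of all higher powers, since
\[
\sum_{j>j^*}\varepsilon^j = \frac{\varepsilon^{j^*+1}}{1-\varepsilon} < \varepsilon^{j^*}.
\]
So the two sums differ, a contradiction. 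Thus all $w'$-shortest paths are unique.

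The delicate point is making these two choices of $\varepsilon$ explicit. Uniqueness could also be argued abstractly: the difference of the two sums is a nonzero polynomial in $\varepsilon$, so it has only finitely many roots, and any sufficiently small $\varepsilon$ avoids them.

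Now apply \Cref{lem:UniqueConSP} to $(V,E,w')$ and the given terminal pairs. This shows that the unique $w'$-shortest paths $P_i$ form a consistent collection. Consistency depends only on the vertex and edge sets of the paths, not on the weights. By Step 1, each $P_i$ is also a shortest $s_i$-$t_i$-path under $w$. Hence the collection $\{P_i\colon i\in[k]\}$ satisfies both conditions of the lemma.
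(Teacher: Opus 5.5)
Your proposal is correct and follows essentially the same route as the paper. The paper perturbs $w(e_i)$ by $2^{-i}\epsilon$, where $\epsilon$ is the minimum nonzero gap between weights of edge sets; it then shows that $w'$-shortest paths are $w$-shortest and unique, using the smallest index in the symmetric difference, and invokes \Cref{lem:UniqueConSP}. Your $\varepsilon^j$ perturbation with the common-denominator gap is a cosmetic variant, and it also avoids the paper's assumption that some edge has nonzero weight.
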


\begin{algorithm}[tb!]
    \caption{XP algorithm for \GraphStar{} parameterized by the number of agents\label{alg:XPagents}}
      \DontPrintSemicolon 
      \SetFuncSty{textsc}
      \SetKwFor{ForAll}{forall}{do}

    \KwIn{Instance $\gbpins$ of \Graph{} with $|\ac| = k$}
    \KwOut{Feasible solution to $\gbpins$ minimizing $c^{\star}$.}

    \tcp{Budget mappings}
    \ForAll{$u\in V$}{
    Run \Cref{alg:dijkstra_mod} to obtain, for every $v\in V$ budget mapping for traveling from $u$ to $v$; denote it by $\mu_{u,v}\colon [\beta]_0\to \mathbb Q_0^+$
    }
    \tcp{Guesses}
    Guess $J\subseteq V$ with $|V| = \min\{|V|,2k^2\}$\label{ln:junctions}\;
    \ForAll{$a = (s,t)\in\ac$}{
    Guess $(v^a_1,\dots, v^a_{\ell_a})\in J^{\ell_a}$ with $ \ell_a\le \min\{|V|,2k\}$ and $|\{v^a_1,\dots, v^a_{\ell_a}\}| = \ell_a$ and set $v^a_0 = s$, $v^a_{\ell_a+1} = t$, and $\tilde P_a = (\{v^a_i\colon i\in [\ell_a+1]_0\},\{\{v^a_{i-1},v^a_i\}\colon i\in [\ell_a+1]\})$\label{ln:internaljunctions}
    }
    $\mathcal E \gets \bigcup_{a\in \ac}E(\tilde P_a)$\tcp{Meta edges}
    Guess function $\tau \colon \mathcal E \to [\beta]$ with $\sum_{e\in \mathcal E}\tau(e) = \beta$\label{ln:BudgetToTravel}\;
    \tcp{Cost computation and selection}
    \ForAll{$a = (s,t) \in \ac$}{
    $\deltag(a) \gets \sum_{i = 1}^{\ell_a+1} \mu_{v^a_{i-1},v^a_i}(\tau(\{v^a_{i-1},v^a_i\}))$
    \label{ln:TravelTimes}}
    $c\utsup\gets \sum_{a\in \ac} \deltag(a)$, $c\egsup\gets \max_{a\in \ac} \deltag(a)$\;
    \Return Value minimizing $c^{\star}$ among all guesses
\end{algorithm}

We are now ready to argue that \GraphStar{} admits an XP algorithm with respect to the number of agents.
We focus on correctness here and defer the proof of the running time bound to the appendix. 

\begin{restatable}{theorem}{XPfixedK}\label{thm:XPfixedK}
    It holds that $k$-\GraphStar{} is solvable in \bigO{(2k)^{4k^2+4}(m\beta)^{2k^2}} time.
\end{restatable}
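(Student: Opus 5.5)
We prove correctness of \Cref{alg:XPagents} in two directions and then bound the number of guesses. For soundness, fix any guess $(J,(v^a_i),\tau)$. For each meta edge $e=\{u,v\}\in\mathcal E$ we choose a path $Q_e$ and a set $S_e\subseteq E(Q_e)$ with $|S_e|\le\tau(e)$ and $w_{S_e}(Q_e)=\mu_{u,v}(\tau(e))$; such a pair exists by \Cref{theo:1-graph}. The set $S=\bigcup_e S_e$ satisfies $|S|\le\sum_e\tau(e)=\beta$. For each agent $a$, concatenating the $Q_e$ along $\tilde P_a$ gives an $s$-$t$ walk, and hence a path, of $w_S$-cost at most $\deltag(a)$, since $w_S\le w_{S_e}$ on $Q_e$. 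Thus $\delta_S(a)\le\deltag(a)$, and the value returned is an upper bound on the cost of some feasible solution. The solution itself is recovered from the predecessor pointers of \Cref{alg:dijkstra_mod}.

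For completeness, let $S^\ast$ be an optimal solution. Apply \Cref{lem:PathConsistency} to the graph with weights $w_{S^\ast}$, which gives a consistent collection $\{P_a\}$ of shortest paths under $w_{S^\ast}$, so $w_{S^\ast}(P_a)=\delta_{S^\ast}(a)$. Call a vertex a \emph{junction} if it is an endpoint of some pairwise intersection $P_a\cap P_{a'}$. Consistency gives at most two junctions per pair, hence at most $2\binom{k}{2}\le 2k^2$ in total, which is the set $J$. On each $P_a$, the junctions together with $s,t$ cut $P_a$ into segments, at most $2k$ interior points per path, and these form the guessed sequence $v^a_i$. The key structural claim, and the main obstacle, is that two agents sharing an edge share the entire segment containing it, so that each meta edge corresponds to one well-defined subpath used identically by every agent traversing it. The argument is as follows. $P_a\cap P_{a'}$ is a contiguous subpath whose endpoints are junctions on both paths. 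Therefore no junction-free segment of $P_a$ can be partly inside and partly outside $P_{a'}$. Moreover, if consecutive junctions $u,v$ on $P_a$ are also consecutive on $P_{a'}$, the corresponding subpaths coincide: both lie in the contiguous intersection, which is a single path, so the $u$-$v$ subpath is unique. We also have to handle two agents that have the same pair $u,v$ as consecutive junctions but with different, edge-disjoint segments between them. In that case we choose one common subpath for both; since each is shortest under $w_{S^\ast}$, swapping keeps all costs. Alternatively, we define meta edges per agent-segment multiset and check that the budget sum remains at most $\beta$.

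Now set $\tau(e)=|S^\ast\cap E(\text{segment } e)|$. These counts sum to at most $\beta$ because the segments are edge-disjoint as meta edges. Any slack is added to an arbitrary meta edge, which is harmless since budget mappings are nonincreasing, and $\tau(e)\in[\beta]_0$. Then $\mu_{u,v}(\tau(e))\le w_{S^\ast}(\text{segment})$, so summing gives $\deltag(a)\le\delta_{S^\ast}(a)$. Since both $c\utsup$ and $c\egsup$ are monotone in the individual costs, the minimum over all guesses is at most $c^\star(S^\ast)$. Combined with soundness, the algorithm is optimal.

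For the running time, the budget mappings cost $\mathcal O(\beta m^2\log m+m\beta\ell)$ by \Cref{cor:budmap}. The guesses are counted as follows: there are at most $m^{2k^2}$ choices of $J$; at most $(2k^2)^{2k}$ sequences per agent, so $(2k)^{4k^2}$ over all agents; and, with at most $|\mathcal E|\le k(2k+1)$ meta edges, at most $(\beta+1)^{|\mathcal E|}$ choices of $\tau$. Each guess is evaluated in $\mathcal O(k^2)$ time, and collecting terms gives a bound of the stated form $\mathcal O((2k)^{4k^2+4}(m\beta)^{2k^2})$.
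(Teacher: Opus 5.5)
Your correctness argument follows the paper's proof. Soundness takes, for each meta edge, an optimal single-agent solution for its allocated budget and unions them. Completeness uses \Cref{lem:PathConsistency}, junctions as endpoints of pairwise intersections, and $\tau$ counting upgraded edges per segment. Your reason why distinct meta edges have edge-disjoint segments is a cleaner route to \Cref{cl:budgetinducing} than the paper's: a junction-free segment of $P_a$ lies entirely inside or entirely outside $P_{a'}$, because $P_a\cap P_{a'}$ is a subpath with junction endpoints.

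The case of two agents with the same consecutive pair $u,v$ but different segments cannot occur, by the sentence just before it: the intersection path contains $u$ and $v$, hence the unique $u$-$v$ subpath of both paths. So you can drop the swap and the per-agent meta edges. The latter would not fit the algorithm anyway, since it assigns one value $\tau(e)$ per vertex pair.

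The genuine gap is the last step of the running-time bound. Your count gives $m^{2k^2}(2k)^{4k^2}(\beta+1)^{k(2k+1)}\cdot\mathcal O(k^2)$. Since $(\beta+1)^{2k^2+k}$ is not $\mathcal O(\beta^{2k^2})$, ``collecting terms'' does not give the stated bound. This is not mere looseness in your estimate. With up to $2k$ interior junctions per agent, there are guesses with $N=k(2k+1)$ distinct meta edges. Each admits $\binom{N+\beta-1}{\beta}$ budget splits, which is of order $\beta^{2k^2+k-1}$ for fixed $k$. So \Cref{alg:XPagents} as written exceeds $\mathcal O((2k)^{4k^2+4}(m\beta)^{2k^2})$ for $k\ge 2$ and large $\beta$. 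The paper's own estimate breaks at the same point: its final inequality $(2k+1)k\,\beta^{(2k+1)k-1}\le(2k)^2\beta^{2k^2}$ is false for $k,\beta\ge 2$.

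The missing ingredient is a sharper per-path bound. Let $x$ be an interior vertex of $P_a$ lying in $J$. Then $x$ is an endpoint of $P_a\cap P_{a'}$ for some $a'\neq a$. Otherwise $x$ is an endpoint of some $P_b\cap P_c$ with $a\notin\{b,c\}$, while being an inner vertex of both $P_a\cap P_b$ and $P_a\cap P_c$. Then both edges of $P_a$ at $x$ lie in $P_b\cap P_c$, so $x$ is inner there as well, a contradiction. Hence each $P_a$ has at most $2(k-1)$ interior junctions.

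Restricting the guesses to $\ell_a\le 2k-2$ leaves your completeness argument intact and yields at most $k(2k-1)$ meta edges. This gives at most $\binom{2k^2-k+\beta-1}{\beta}\le(2k^2-k)\,\beta^{2k^2-k-1}\le(2k)^2\beta^{2k^2}$ budget splits with $\sum_e\tau(e)=\beta$ (for $\beta\ge 1$). Together with your counts for $J$ and the junction sequences, this gives the claimed $\mathcal O((2k)^{4k^2+4}(m\beta)^{2k^2})$.
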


\begin{proof}[correctness]
    We apply \Cref{alg:XPagents}.
    The idea for it is to guess the \emph{junction vertices} of agents, i.e., the points where two paths split or merge.
    We assume that any pair of agents can only split or merge once, so there can be a total of at most $2k^2$ junction vertices, see \cref{ln:junctions}.
    This is justified by \Cref{lem:PathConsistency}, which argues that there is an optimal solution that satisfies this condition.
    Then, we guess for each agent the order of travel between junction nodes.
    We will show that an agent can pass at most $2k$ junction nodes in some optimal solution, which bounds the number of cases we need to consider for this.
    Junction nodes together with edges according to the travel between junction nodes form a meta graph that describes the high-level travel structure of agents captured in a meta path $\tilde P_a$ for agent $a$.
    On this meta graph, we allocate the budget to the meta edges, see \cref{ln:BudgetToTravel}.
    This corresponds to how much budget we have to allocate between a pair of junction nodes.
    Finally, for each edge in the meta graph, we can use the budget mapping for the corresponding $1$-\GraphStar{} problem to determine the lowest cost to travel on this edge with the provided budget.
    This lets us get a travel time for each agent, see \cref{ln:TravelTimes}. 
    We can then get the total egalitarian or utilitarian welfare by taking the minimum or sum.

    We now fix the egalitarian or utilitarian objective and an instance of $\ic = \gbpins$ of \Graph{} with underlying network $G = (V,E,w)$ and $|\ac| = k$.
    Let $\algc$ denote the value returned by \Cref{alg:XPagents} and let $\optc$ denote the minimum cost of a feasible solution with respect to our fixed objective.\medskip

    We will first show that $\optc \le \algc$.
    For this, consider fixed guesses for the junction nodes in $J$, junction vertices on each path $P_a$ for $a\in \ac$ giving rise to meta edges $\mathcal E$, and a budget allocation function $\tau \colon [\beta]\to \mathcal E$.

    For each $\{u,v\} = e\in \mathcal E$, let $F_e$ be the edges corresponding to an optimal solution to the $1$-\Graph{} instance when traveling from $u$ to $v$ with budget $\tau(e)$.
    Set $F = \bigcup_{e\in \mathcal E} F_e$.
    By our condition on $\tau$, we know that $F$ is feasible for $\ic$.

    Now fix an agent $(s,t) = a\in \ac$ and consider the guess for their junction vertices $v_0^a,\dots, v_{\ell_a + 1}^a$ where $v_0^a = s$ and $v_{\ell_a + 1}^a = t$.
    For every $i\in [\ell_a + 1]$, let $e = \{v_{i-1}^a,v_i^a\}$. 
    Since $F$ contains the edges in $F_e$, $a$ can travel from $v_{i-1}^a$ to $v_{i}^a$ for at most $\mu_{v^a_{i-1},v^a_i}(\tau(\{v^a_{i-1},v^a_i\}))$ whenever the edges in $F$ are upgraded.
    Hence, $\delta_F(a) \le \deltag(a)$.
    Therefore, $c^{\star}(F)$ is at most $\sum_{a\in \ac} \deltag(a)$ if $c^{\star} = c\utsup$ and at most $\max_{a\in \ac} \deltag(a)$ if $c^{\star} = c\egsup$.
    Since $\optc$ is the cost of an optimal feasible solution, we know that $\optc \le c^{\star}(F)$.
    Hence, since our combination of guesses was arbitrary, the cost of any guess can be lower bounded by $\optc$.
    It follows that $\optc \le \algc$.\medskip

    Next, we will show that $\algc\le \optc$.
    Let $F$ be a feasible solution of $\ic$ minimizing cost according to the chosen objective.
    Let $w_F\colon E \to \mathbb Q_0^+$ with $$w_F(e) = \begin{cases}
        \alpha \cdot w(e) & e\in F\\
        w(e) & e\in E\setminus F
    \end{cases}\text.$$
    This is the enhanced network with edges discounted according to $F$.
    By \Cref{lem:PathConsistency}, there exists a collection of consistent shortest paths $\mathcal P = \{P_a\colon a\in \ac\}$ on $(V,E,w_F)$.
    
    Since every pair of agents shares at most one connected sub-path, there exists a subset $J\subseteq V$ with $|J| = \min\{|V|,2k\}$ of junction vertices such that whenever two paths in $\mathcal P$ join or split at a vertex $v\in V$, then $v\in J$.
    For $(s,t) = a\in \ac$ let $v_1^a,\dots, v_{\ell_a}^a$ be the at most $\ell_a \le \min\{|V|,2k\}$ vertices in $J$ that $P_a$ passes, apart from possibly $s$ and $t$.
    We assume that, traversing $P_a$ from $s$ to $t$, these vertices occur in their order of traversal and we set $v_0^a = s$ and $v_{\ell_a + 1}^a = t$.
    Note that it is possible that $s\in J$ or $t\in J$ but we only consider the internal junction points for $v_1^a,\dots, v_{\ell_a}^a$. 
    Also note that it is possible that $\ell_a = 0$, i.e., there is no such internal point.
    Define the meta path $\tilde P_a = (\{v^a_i\colon i\in [\ell_a+1]_0\},\{\{v^a_{i-1},v^a_i\}\colon i\in [\ell_a+1]\})$.

    Now consider $u,v\in J$ and let $$E(u,v) := \{e\in E\colon e \text{ on sub-path from } u \text{ to } v\text{ of } P_a \text{ for some agent } a \text{ with } \{u,v\}\in E(\tilde P_a)\}\text.$$
    Our next crucial insight to continue the proof is that these edge sets are disjoint.
    \begin{claim}\label{cl:budgetinducing}
        Let $u,v,u',v'\in J$ with $\{u,v\}\neq \{u',v'\}$.
        Then $E(u,v) \cap E(u',v') = \emptyset$.
    \end{claim}
    \begin{claimproof}
        Let $u,v,u',v'\in J$ with $\{u,v\}\neq \{u',v'\}$ and assume for contradiction that there exists an edge $e\in E(u,v) \cap E(u',v')$.
        Hence, there exist agents $a,a'\in \ac$ with $\{u,v\}\in E(\tilde P_a)\}$ and $\{u',v'\}\in E(\tilde P_{a'})\}$ and $e$ is on the sub-paths of $P_a$ from $u$ to $v$ and of $P_{a'}$ from $u'$ to $v'$.
        
        Recall that we have $\{u,v\}\neq \{u',v'\}$. Without loss of generality, assume that $u\notin \{u',v'\}$.
        Let $x$ be the first vertex of $e$ reached on the sub-path of $P_a$ from $u$ to $v$ and assume without loss of generality that, on $P_{a'}$, $x$ is first reached (among the endpoints of $e$) from $u'$.
        Since $u'$ and $v'$ are consecutive junction vertices on $P_{a'}$, we know that $u$ is not on the sub-path of $P_{a'}$ from $u'$ to $v'$.
        But then, since $x$ is on both paths, traversing $P_a$ from $u$ to $x$ must reach a first vertex $x'$ that is also on the sub-path from $u'$ to $v'$.
        Since $x$ is the first vertex of $e$ reached from $u$, this cannot be $v$, so this is an inner vertex on the sub-path of $P_a$ from $u$ to $v$.
        However, this means that $x'\in J$, contradicting that $u$ and $v$ are consecutive junction vertices on $P_a$.        
    \claimqed\end{claimproof}

    We can now use the insights gained in \Cref{cl:budgetinducing} to define a budget distribution.
    Define $\tau \colon [\beta] \to \mathcal E$ such that for $\{u,v\} = e\in \mathcal E$, we have $\tau^{-1}(e) = |F\cap E(u,v)|$, i.e., we allocate the budget according to the number of enhanced edges on paths between two junction vertices.
    By \Cref{cl:budgetinducing}, we know that the sets $F\cap E(u,v)$ are pairwise disjoint, so 
    \begin{equation*}
        \sum_{e\in \mathcal E} \tau^{-1}(e) = \sum_{\{u,v\} = e\in \mathcal E} |F\cap E(u,v)| \le |F| \le \beta \text.
    \end{equation*}
    
    There we use feasibility of $F$.
    Thus we have enough budget for such a budget mapping and if we have $\sum_{e\in \mathcal E} \tau^{-1}(e) < \beta$, we can assign the remaining budget to arbitrary meta edges. 
    
    Hence, we now have derived suitable guesses for \Cref{alg:XPagents}.
    It remains to show that they lead to a cost estimate of at most $\optc$.
    Fix an agent $a\in \ac$ and $i\in [\ell_a+1]$.
    By definition of $E(v_{i-1}^a,v_i^a)$ and $\tau$, the sub-path of $P_a$ from $v_{i-1}^a$ to $v_i^a$ uses at most $\tau(\{v_{i-1}^a,v_i^a\})$ enhanced edges.
    Hence, since $\deltag(a)$ sums the minimum costs for traveling between junction vertices with allocated budget according to $\tau$, we have that $\deltag(a) \le w_F(P_a)$.
    As a result, these guesses lead to a (utilitarian or egalitarian) cost of at most $\optc$.
    Since \Cref{alg:XPagents} returns the minimum cost among all guesses, we conclude that $\algc \le \optc$.
    This completes the proof of correctness of \Cref{alg:XPagents}.
\proofqed\end{proof}

A natural follow-up question is whether \GraphStar{} admits an FPT algorithm.
We will now argue that this is unlikely, by showing that \GraphDec{} is W[1]-hard parameterized by the number of agents.
For the proof, we provide a reduction from \MCClique.

\begin{restatable}{theorem}{WoneGraph}\label{thm:W1Graph}
    Let $\alpha\in [0,1)$.
    Then, $\alpha$-\GraphDec{} is W[1]-hard parameterized by the number of agents, even if the budget is an additional parameter.
\end{restatable}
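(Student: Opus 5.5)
We reduce from \MCClique{}: given a graph $H$ whose vertices are partitioned into color classes $V_1,\dots,V_q$, decide whether $H$ has a clique with exactly one vertex per class. This problem is W[1]-hard parameterized by $q$. We build an instance of $\alpha$-\GraphDec{} whose number of agents and budget are both bounded by a function of $q$; we aim for $\bigO{q^2}$ agents and budget $\beta = \binom{q}{2}$ (or $q+\binom q2$). The reduction must work for every fixed $\alpha\in[0,1)$, so all weights are chosen relative to the gap factor $\alpha$ rather than by exploiting $\alpha=0$.

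\textbf{Construction.} For each unordered color pair $\{i,j\}$ we create a \emph{pair gadget}: a source $s_{ij}$ and a sink $t_{ij}$ connected by one "lane" for every edge $\{x,y\}\in E(H)$ with $x\in V_i$, $y\in V_j$. Each lane has a single heavy edge $e_{xy}$ of weight $W$, so one discount reduces the gadget distance from roughly $W$ to $\alpha W$. There is also a bypass of weight $W' \in (\alpha W, W)$ plus slack, which makes discounting the gadget worthwhile but leaves the cost strictly above the target unless exactly one lane is discounted. One agent $(s_{ij},t_{ij})$ per pair forces the budget to be spread across the pair gadgets: with $\kappa$ chosen appropriately, each of the $\binom q2$ gadgets must receive one discounted lane. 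Under the egalitarian objective this follows directly from the max. Under the utilitarian objective we set the per-agent thresholds so that any feasible solution must save at least the full amount on every pair agent, using a sum bound together with the upper bound on how much a single discount can save.

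\textbf{Consistency.} We also need that the chosen lanes in gadgets $\{i,j\}$ and $\{i,j'\}$ agree on the vertex of $V_i$. For this we add, for each color $i$, consistency agents that must travel through all gadgets involving color $i$. We give lane $e_{xy}$ a position encoding $x$ on the $i$-side: gadgets are chained so that a color-$i$ checker agent passes from gadget $\{i,j\}$ to gadget $\{i,j'\}$ through "vertex-indexed" connector vertices $c_{i,x}$. The checker's path is cheap enough only if every gadget it crosses has a discounted lane entering and leaving at the same index $x$. Concretely, we use the standard encoding with weights $M\cdot \mathrm{idx}(x)$ and $M\cdot(N-\mathrm{idx}(x))$: two agents traveling in opposite directions get combined cost $\le$ threshold only when the indices match. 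Both checker agents per color per adjacent gadget pair stay within $\bigO{q^2}$ agents. Budget edges are placed only on lanes, so no extra budget is available for checkers. Routes off lanes are made prohibitively expensive by long base edges, ensuring $\alpha W$-discounted shortcuts cannot be misused.

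\textbf{Correctness and main obstacle.} The forward direction, from a clique to a solution, is straightforward: discount the $\binom q2$ lanes of the clique edges and check each agent's cost. The converse is the crux. We must show that any solution of cost at most $\kappa$ uses exactly one lane per gadget and that the index inequalities force equality. The hard part is calibrating weights so that, for every $\alpha<1$, discounting a non-lane edge, or two lanes in one gadget while leaving another empty, is strictly dominated. We would first handle this by making all non-lane edges so heavy that (base weight)$\cdot\alpha$ still exceeds every relevant slack, and then verify the two-sided index inequalities with the scaling factor $(1-\alpha)W$ as the unit gap. Hardness holds both for fixed $\alpha$ and jointly in the parameters (agents, $\beta$), since both are $\bigO{q^2}$, and the construction has polynomial size.
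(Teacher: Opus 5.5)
\textbf{There is a genuine gap.} The converse direction is the entire content of a hardness proof, and you defer it to an unspecified weight calibration. The two concrete mechanisms you do commit to both fail.

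\emph{The consistency check does not work for the utilitarian objective.} With weights $M\cdot\mathrm{idx}(x)$ and $M(N-\mathrm{idx}(x))$, the standard encoding gives the two checker agents costs $c_0+M(\mathrm{idx}(x)-\mathrm{idx}(x'))$ and $c_0-M(\mathrm{idx}(x)-\mathrm{idx}(x'))$. It forces $\mathrm{idx}(x)=\mathrm{idx}(x')$ only under \emph{per-agent} upper bounds, i.e., only for the egalitarian objective. The sum of the two costs is $2c_0$ whether or not the indices agree. Since agents take shortest paths, a mismatch can only lower that sum. So your claim that the combined cost meets the threshold only when the indices match is false, and the utilitarian half of the theorem is not covered.

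Two related problems: in an undirected graph, "opposite directions" yields identical costs. The asymmetry would have to come from separate connector structures, which you never build. Those connectors are also exactly the non-lane edges other agents could exploit.

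\emph{You cannot restrict which edges are discounted.} In \Graph{} the planner may discount any edge. Your safeguard, making non-lane edges so heavy that $\alpha$ times their weight exceeds the slack, is vacuous at $\alpha=0$, which the statement includes: a discounted edge then costs $0$ whatever its weight. Likewise, if your bypass of weight $W'<W$ is a single edge, discounting it is at least as good as discounting any lane ($\alpha W'<\alpha W$, or $0=0$ when $\alpha=0$). The pair agents then need not select a lane at all, and nothing is encoded.

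\textbf{How the paper avoids this.} It lets the budget select vertices rather than edges, so no consistency gadget is needed.
\begin{itemize}
\item For each colour $c$, add a hub $v_c$ joined to every vertex of $V_c$ by an edge of weight $k$; original edges keep weight $1$.
\item Set $\beta=k$ and create one agent $(v_c,v_{c'})$ per colour pair.
\item The targets are $2k\alpha+1$ (egalitarian) and $(2k\alpha+1)\binom{k}{2}$ (utilitarian).
\end{itemize}
An agent without both hub edges discounted pays at least $k+k\alpha+1>2k\alpha+1$ for every $\alpha<1$. A counting argument on the utilitarian sum then shows that every hub has exactly one discounted incident edge $\{v_c,w_c\}$, which uses up the budget. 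Consistency is automatic, because all agents involving colour $c$ share $w_c$. Each agent's cost is then at least $2k\alpha+1$, with equality iff $\{w_c,w_{c'}\}$ is an edge. Hence the sum bound forces every agent to equality. No index arithmetic is needed, and the argument is uniform over $\alpha\in[0,1)$.
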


Interestingly, the hardness result of \Cref{thm:W1Graph}  holds even if the number of agents and the budget are simultaneously a parameter.
In contrast, our problem is in XP with respect to each of these parameters. For the number of agents, this is shown by 
\Cref{alg:XPagents}. For the budget, 
we can simply guess $\beta$ edges whose costs are to be reduced by a factor of $\alpha$, and then compare the cost of all guesses.
This yields an XP algorithm as there are $\binom{\ell}{\beta} \le \ell^\beta$ guesses.

Given \Cref{thm:W1Graph}, it is natural to ask whether 
there is a natural structured subdomain of \Graph{}
that admits an FPT algorithm.
We will now show that this is indeed the case for \GraphHub{}, i.e., the class of instances where all agents travel to the same hub terminal.
Our proof relies on a further insight about shortest path systems when all paths terminate in the same vertex: 
If such a system is consistent, then it must form a tree.

\begin{restatable}{lemma}{HubTree}\label{lem:HubTree}
    Consider a weighted graph $G = (V,E,w)$ and let $k\in \mathbb N$, $s_i\in V$ for all $i\in [k]$ and $t\in V$. 
    For each $i\in [k]$, let $P_i$ be a shortest $s_i$-$t$-path.
    Define $V_T = \bigcup_{i\in [k]}V(P_i)$ and  $E_T = \bigcup_{i\in [k]}E(P_i)$.
    If $\{P_i\colon i\in [k]\}$ is consistent, then
    $(V_T,E_T)$ is a tree.
\end{restatable}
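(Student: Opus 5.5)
We have to show that $(V_T,E_T)$ is connected and acyclic. Connectivity is immediate: every $P_i$ is a connected subgraph containing $t$, so the union of the $P_i$ is connected through $t$. The real work is acyclicity. We argue by contradiction, assuming $(V_T,E_T)$ contains a cycle $C$, and we use consistency only in the following form: for all $i,j$, the intersection $P_i\cap P_j$ is a (possibly empty) path. Since both paths contain $t$, this intersection is nonempty and contains $t$.

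The first key step is to turn consistency into a suffix property. For $i,j$, the intersection $P_i\cap P_j$ is a contiguous path containing $t$. Moreover, it is a contiguous segment of $P_i$ and also of $P_j$, because a subpath of a path that is itself a path must be a contiguous segment. Since $t$ is an endpoint of both $P_i$ and $P_j$, this segment must be a common suffix ending at $t$. Hence, once $P_i$ and $P_j$ meet at some vertex $v$, they coincide from $v$ all the way to $t$. Two edge cases need care:
\begin{itemize}
\item If the intersection is the single vertex $t$, the claim holds trivially.
\item The claim also needs the fact that a path's vertex set together with the shared edges forms a path, so that "contiguous" really means a segment of each path. This is exactly what the definition of consistency provides.
\end{itemize}

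From the suffix property we define a parent function. For $v\in V_T\setminus\{t\}$, let $\mathrm{next}(v)$ be the successor of $v$ on any $P_i$ that contains $v$. This is well defined: if $v\in P_i\cap P_j$, then both paths follow the same suffix from $v$, so the successors agree. Every edge of $E_T$ has the form $\{v,\mathrm{next}(v)\}$ for some $v$. Along each path, the distance to $t$ strictly decreases as we move to the successor, because each $P_i$ is a path with all vertices distinct. To turn this into a strict ordering, we use the remaining distance along the path, which is well defined by uniqueness of the suffix; we can also use the number of hops to $t$, which is likewise well defined. Call this quantity $h(v)$; then $h(\mathrm{next}(v))=h(v)-1$. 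Because $E_T$ consists of exactly the pairs $\{v,\mathrm{next}(v)\}$, the graph has $|V_T|-1$ edges, one for each non-$t$ vertex. Two different vertices $v$ cannot produce the same edge: if $\mathrm{next}(v)=u$ and $\mathrm{next}(u)=v$, then $h$ would have to decrease twice around a 2-cycle, which is impossible. A connected graph with $|V_T|-1$ edges is a tree.

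Shortest-path optimality is not actually needed here; consistency suffices. The main obstacle is the first step: making rigorous that the intersection path is a contiguous suffix of both $P_i$ and $P_j$, rather than some path that reuses vertices out of order. I would handle it by noting that the intersection is a connected subgraph of the path $P_i$, hence a segment $v_a,\dots,v_b$ of $P_i$. It contains $t=v_{\ell}$, so $b=\ell$. The same argument applies on $P_j$, and the edge sets coincide, so the traversal orders agree.
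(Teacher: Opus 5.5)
Your proof is correct, but it takes a different route from the paper's.

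The paper argues by contradiction directly on a hypothetical cycle $C$. It picks the path $P_i$ that shares the most edges with $C$, takes the first vertex $v_1$ of $P_i$ on $C$ and an edge $e_1$ of $C$ at $v_1$ not on $P_i$, chooses some $P_j$ containing $e_1$, and splits into cases by the direction in which $P_j$ traverses $e_1$. In each case $P_i$ and $P_j$ would share a segment through $v_1$ and, separately, a segment ending at $t$. The maximality of $P_i$ is used to rule out the segment through $v_1$ running past the last vertex of $P_i$ on $C$.

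You instead first derive a global consequence of consistency plus the common endpoint $t$: $P_i\cap P_j$ is a common suffix of both paths ending at $t$. Your justification is sound: a connected subgraph of a path is a contiguous segment, and it is a suffix because it contains the endpoint $t$. From this you get a well-defined successor map $\mathrm{next}$ and hop potential $h$, which orient $(V_T,E_T)$ toward $t$. The count $|E_T|=|V_T|-1$ together with connectivity then finishes the proof.

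What your route buys:
\begin{itemize}
\item It avoids the extremal choice and the case analysis.
\item It makes explicit the ``they must share another segment later on'' step that the paper uses only implicitly.
\item As a by-product, it gives the orientation toward $t$ (a unique next hop for every vertex) that is used when the tree is rooted at $t$ in the proof of Theorem~\ref{thm:FPThub}.
\end{itemize}
The paper's argument is more local and needs no auxiliary structure, but it is less transparent. As you observe, neither argument uses that the $P_i$ are shortest paths.

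One small point: of your two candidate potentials, only the hop count works. Edge weights may be $0$, so the remaining weighted distance need not strictly decrease. Since you settle on hops, this leaves no gap.
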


\Cref{lem:HubTree} enables us to prove our fixed-parameter tractability result for \GraphHubStar{}.
Specifically, we develop a dynamic program that uses \Cref{alg:dijkstra_mod} as a preliminary step.
The dynamic program then obtains a value for every possible subset of agents, terminal hub, and budget that captures the minimum travel cost for the given subset of agents to reach the designated hub under the given budget constraint.
Our approach is somewhat reminiscent of the Dreyfus--Wagner algorithm for computing a minimum weight Steiner tree when the number of terminals (i.e., vertices that have to be connected by the Steiner tree) is the parameter \citep{DrWa71a,CFK+15b}.
However, it needs several sophisticated enhancements to account for the budget, and parts of the proof require a separate treatment of the utilitarian and egalitarian welfare.
To obtain our bound on the running time, we enhance our basic dynamic program by computing an intermediate table that accounts for finding the optimal way of merging two subsets of agents at a specific node.
In particular, this saves an unnecessary blow-up in running time due to a threefold splitting of the budget.

\begin{restatable}{theorem}{FPThub}\label{thm:FPThub}
    It holds that $k$-\GraphHubStar{} is solvable in time $\mathcal O(m^3 \beta + 3^k m\beta^2 + 2^k m^2\beta^2)$.
\end{restatable}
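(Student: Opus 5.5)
We design a Dreyfus--Wagner-style dynamic program over subsets of agents, vertices and budgets. It is justified by \Cref{lem:PathConsistency} and \Cref{lem:HubTree}.

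\textbf{Structure of an optimal solution.} Fix an optimal solution $F$ and consider the discounted weights $w_F$. By \Cref{lem:PathConsistency}, there is a consistent collection of shortest $s_i$-$t$-paths under $w_F$. By \Cref{lem:HubTree}, their union is a tree $T$ rooted at the hub $t$. Every edge of $T$ lies on the path of each agent whose terminal is below it, so a discounted edge counts once in the budget while helping all agents below it. We may assume $F\subseteq E(T)$.

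\textbf{Preprocessing.} We run \Cref{alg:dijkstra_mod} from every vertex; by \Cref{cor:budmap} this gives the budget mappings $\mu_{u,v}(b)$ for all pairs. This accounts for the $m^3\beta$ term, using $\ell\le m^2$ and absorbing the logarithm.

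\textbf{Main table.} For every nonempty $X\subseteq\ac$, vertex $v$ and budget $b\in[\beta]_0$, let $D[X,v,b]$ be the minimum cost of routing all agents in $X$ to $v$ along a tree with at most $b$ discounted edges. The cost is the sum of the agents' distances for \uti{}. For \ega{} it must be a vector-like quantity, because a shared segment added later contributes equally to all agents; so for \ega{} we store the maximum distance, and the maximum shifts additively when the same path length is added for every agent in $X$. For \uti{}, a shared segment of length $L$ contributes $|X|\cdot L$, and $|X|$ is known from the index, so both objectives work. The final answer is $D[\ac,t,\beta]$.

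\textbf{Recursion.} Consider the topmost branching point $u$ of the subtree for $X$ below $v$. Either $X$ splits at $u$ into two nonempty parts $Y$ and $X\setminus Y$, or $u$ is an agent's own terminal $s_a$ with $a\in X$, which we handle by treating the terminal as a leaf. The segment from $u$ to $v$ is a 1-agent subproblem with some budget $b_2$. This gives
\[
D[X,v,b]=\min_{u,\,b_1+b_2\le b}\bigl(M[X,u,b_1]\oplus \mu_{u,v}(b_2)\bigr),
\]
where $\oplus$ adds $\mu$ once for \ega{} and $|X|$ times for \uti{}. The intermediate table $M[X,u,b_1]$ is the merge table
\[
M[X,u,b_1]=\min_{Y,\,c\le b_1} D[Y,u,c]\circ D[X\setminus Y,u,b_1-c],
\]
with $\circ$ equal to max or sum, together with the base cases for $X=\{a\}$ at $u=s_a$. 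Computing $M$ enumerates set pairs ($3^k$), vertices ($m$) and two budgets ($\beta^2$), giving $3^k m\beta^2$. Computing $D$ from $M$ enumerates $2^k$ sets, $m^2$ vertex pairs and $\beta^2$ budget pairs, giving $2^k m^2\beta^2$. Tables are filled in order of increasing $|X|$ and, within a fixed $X$, $D$ is computed after $M$. Splitting into $M$ and $D$ avoids the threefold budget split.

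\textbf{Correctness, and the main obstacle.} Soundness holds because every table entry is realized by an actual edge set within the stated budget. Because of overlaps, the union of the pieces can only be cheaper and use fewer edges, and the welfare functions are monotone. Completeness is the delicate step. Decomposing the optimal tree $T$ at branching vertices, we must show the following:
\begin{itemize}
\item the pieces' discounted edge sets are disjoint, so the budgets add up to at most $\beta$;
\item each segment between consecutive branching vertices costs at least $\mu_{u,v}$ of its budget;
\item the \ega{} aggregation is exact. Maxima commute with adding a common segment length, so we argue by induction on $|X|$ that $D[X,v,b]$ is at most the maximum, respectively the sum, of the $w_F$-distances from the agents in $X$ to $v$ within $T$.
\end{itemize}
Disjointness comes from the tree structure. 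The remaining subtlety is a branching vertex that coincides with an agent's terminal in the middle of the tree. We handle it by allowing $Y=\{a\}$ with $D[\{a\},s_a,\cdot]=0$ in the merge, which fits within the same bounds.
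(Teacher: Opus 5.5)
Your proposal is correct and follows essentially the same route as the paper: a Dreyfus--Wagner-style table $D[X,v,b]$ justified by \Cref{lem:PathConsistency} and \Cref{lem:HubTree}, with budget mappings from \Cref{alg:dijkstra_mod} for shared segments and an intermediate merge table $M$ that avoids the threefold budget split, which gives exactly the three running-time terms. The only difference is cosmetic: the paper treats a tree vertex that is an agent's terminal $s_a$ as a separate case (its $D_1$ term), whereas you absorb it into the merge via $Y=\{a\}$ with $D[\{a\},s_a,\cdot]=0$, and this is equivalent because $\mu_{u,u}\equiv 0$.
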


Interestingly, the only place where our proof makes use of the structure of instances in \GraphHub{}, i.e., agents traveling to a common hub, is when we apply \Cref{lem:HubTree}.
Clearly, even without the common hub assumption, this lemma holds for at most~$2$ agents and therefore \Cref{thm:FPThub} implies \Cref{theo:2-graph} (indeed, with the same asymptotic running time).
However, the lemma breaks down once we have at least~$3$ agents that do not travel to the same common hub, as we now show in an example.
In fact, this already holds on a triangle where any pair of vertices is a terminal pair.

\begin{example}
    Consider $G = (V,E,w)$ with $V = \{v_1,v_2,v_3\}$, $E = \{\{v_1,v_2\},\{v_1,v_3\},\{v_2,v_3\}\}$, and $w(e) = 1$ for all $e\in E$. Let $k=3$.
    For $i\in [k]$, let $s_i = v_i$ and $t_i = v_{i+1}$ (here and later in this example, we interpret the indices modulo~$3$, so that $v_4:=v_1$).
    Then, for $i\in [k]$, let $P_i$ the shortest $s_i$-$t_i$-path, i.e., the path that simply uses the edge $\{s_i,t_i\}$.
    Clearly, $\{P_1,P_2,P_3\}$ is consistent ($P_i$ and $P_{i+1}$ share exactly $v_{i+1}$).
    However, the union of these three paths is $G$ which is not a tree.
\end{example}

\subsection{Variable Number of Agents}

\addauthor{He et al.~}\citet{HBL+24} (see a detailed comparison in Appendix~\ref{app:railway}) proved that both egalitarian and utilitarian cost minimization in their railway‐network model are $\np$‐complete, via a reduction from \Sat{}.
The same reduction applies to \Graph{}, but only when $\alpha=0$. 
To extend this hardness result to any fixed $\alpha\in [0, 1)$, we provide a simple reduction from \SetCover{}.

\begin{restatable}{theorem}{GBPhardness}
\label{cor:graph0}
Let $\alpha \in [0,1)$. Then $\alpha$-\GraphDec{} is \np-complete.
\end{restatable}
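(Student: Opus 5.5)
Membership in \np{} is \Cref{obs:NPmemb}, so what remains is \np-hardness for every fixed $\alpha\in[0,1)$. We reduce from \SetCover{}: a universe $U=\{x_1,\dots,x_n\}$, a family $\mathcal F=\{F_1,\dots,F_r\}\subseteq 2^U$, and an integer $q$, asking whether $q$ sets from $\mathcal F$ cover $U$. We may assume $q<r$ and that every element lies in some set.

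\textbf{Construction.} We build a graph with a hub-like vertex $t$. For each set $F_j$ we add a vertex $f_j$ and a \emph{set edge} $\{f_j,t\}$ of weight $L$. For each element $x_i$ we add a vertex $u_i$ and, for every $F_j\ni x_i$, an edge $\{u_i,f_j\}$ of weight $0$. We also add a bypass edge $\{u_i,t\}$ of weight $L'$ with $\alpha L<L'<L$. Each $x_i$ becomes an agent $(u_i,t)$, and the budget is $\beta=q$. Without investment, every agent pays $L'$ via the bypass. If agent $i$ can reach a discounted set edge $\{f_j,t\}$ with $x_i\in F_j$, it pays only $\alpha L<L'$. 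For $\alpha=0$ we may take $L=1$ and $L'=1/2$.

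We set the threshold to $\kappa=\alpha L$ for the egalitarian version and $\kappa=n\alpha L$ for the utilitarian version. The zero-weight edges let an agent pass through other element vertices, $u_i\to f_j\to u_{i'}\to f_{j'}$, but every route to $t$ must still end with a set edge or a bypass edge. Hence each agent's cost is at least $\min\{\alpha L, \alpha L'\}$ whenever the final edge is discounted.

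\textbf{Main obstacle.} The difficulty is that the budget may instead be spent on bypass edges, where agent $i$ pays $\alpha L'<\alpha L$. If $q\ge n$ this would trivially beat the threshold. We fix this by subdividing each bypass into many edges. Concretely, each bypass becomes a path of $q+1$ edges, each of weight $L'/(q+1)$. Discounting at most $q$ of them still leaves cost greater than $L'/(q+1)$. We then choose $L'$ and $L$ so that $\alpha L < L'/(q+1)$ while $L'<L$, which fails for $\alpha$ close to $1$.

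A cleaner alternative is to make the set edges the only way to save. We replace the bypass by a high-weight bypass $L'=L$ and set the threshold to $\kappa=\alpha L$. A discounted subdivided bypass then costs at least $L\cdot(1-\frac{q}{q+1}(1-\alpha))>\alpha L$, so bypasses never help reach $\kappa$. Zero-weight wandering only reaches set edges $\{f_j,t\}$, and the agent pays $\alpha L$ only if it reaches a discounted one. Since all agents sit in one connected zero-weight component, this wandering must be blocked. We do this by making each element--set edge weight $1$ and each element vertex distinct, so passing through another element costs extra: reaching $t$ via $u_i\to f_j$ with $x_i\in F_j$ costs $1+\alpha L$, and any longer detour costs at least $3+\alpha L$. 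With threshold $\kappa=1+\alpha L$, or $n(1+\alpha L)$ for the utilitarian version, a cost within the threshold forces each agent onto a discounted set edge of a set containing it.

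Spending budget on element edges saves at most $1-\alpha$ per agent, whereas the bypass route costs $L+1$ with $L\gg 1$, so such spending cannot rescue an uncovered agent. This gives the equivalence: a cover of size $q$ exists if and only if discounting the corresponding $q$ set edges meets the threshold. The direction from a cover is immediate. For the converse, we replace every discounted non-set edge by the set edge of a set containing an affected element, showing that the chosen set edges form a cover.
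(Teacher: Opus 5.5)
Your final construction (set edges of weight $L$, unit element--set edges, budget $q$, threshold $1+\alpha L$ per agent) can be made to work. However, the converse direction is not established, and the claim it rests on is false. Meeting the threshold does \emph{not} force each agent onto a discounted set edge of a set containing it, and discounting element edges \emph{can} rescue an agent.

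For the utilitarian objective, an agent whose own element edge is discounted pays only $\alpha(1+L)$, and this saving of $1-\alpha$ can pay for another agent's excess. For example, take $\alpha=0$, $F_0=\{x_a\}$, $F_1=\{x_a,x_b\}$ and $S=\{\{u_a,f_0\},\{f_0,t\}\}$. Agent $a$ pays $0$ and agent $b$ pays $2$ via $u_b\to f_1\to u_a\to f_0\to t$. So the bound $n(1+\alpha L)=2$ is met although $x_b$ lies in no chosen set. For $\alpha>0$, let sufficiently many agents of $F_0$ have discounted element edges. This does not refute the reduction, but it shows your argument does not prove it.

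Even for the egalitarian objective, discounted element edges allow detours $u_i\to f_{j_1}\to u_{i_1}\to f_{j_2}\to t$ of cost $3\alpha+\alpha L\le 1+\alpha L$ when $\alpha\le 1/3$. For $\alpha=0$ this works even with one undiscounted element edge. The detour ends at a chosen set that need not contain $x_i$. Your one-line exchange does not repair this as stated. One discounted element edge can serve several agents: for $\alpha=0$, all agents of $F_{j_1}$ can enter $f_{j_1}$ through undiscounted edges and leave via the same discounted edge. Replacing that edge by ``a set containing an affected element'' can therefore leave the others uncovered.

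What is missing is an explicit charging argument. Let $T$ be the sets whose set edge is discounted, and $N$ the sets $F_j$ such that some edge $\{u,f_j\}$ is discounted. Then $|T\cup N|\le |S|\le q$.

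In the egalitarian case, every agent is covered by the first set vertex $f_{j_1}$ on its route. Either the route is $u_i\to f_{j_1}\to t$, so $F_{j_1}\in T$. Or its element part has at least three edges of total cost at most $1$, which forces one of the two route edges at $f_{j_1}$ to be discounted, so $F_{j_1}\in N$.

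In the utilitarian case you need more:
\begin{itemize}
\item An agent not covered by $T\cup N$ has excess at least $1+\alpha$, since both route edges at $f_{j_1}$ are undiscounted.
\item An agent below the threshold saves at most $1-\alpha$. Its vertex lies in a component of discounted element edges that contains a set vertex from $T$.
\item Such a component with $a$ element vertices has at least $a$ more edges than set vertices outside $T$.
\end{itemize}
Since total excess is at most total saving, there are at most as many uncovered agents as agents below the threshold. The spare budget from the third point then lets you add one set per uncovered agent. You must also choose $L$ large in terms of $q$, $n$ and $1/(1-\alpha)$, because your bypass estimate only beats $\alpha L$, not $1+\alpha L$. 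The bypass is superfluous anyway.

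The paper sidesteps all of this. It uses unit weights, budget $|\mathcal U|+\rho$ and threshold $2\alpha$ per agent, which equals the minimum possible cost of any agent. So even the utilitarian bound forces every agent to cost exactly $2\alpha$, i.e., to travel on discounted edges only (exactly two of them when $\alpha>0$). Distinct agents then use distinct item edges, leaving at most $\rho$ discounted target edges, and these form a cover.
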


A natural question is whether \np‐hardness also rules out nontrivial approximation. When $\alpha=0$, for sufficiently high budget one can achieve zero cost for all agents. 
Hence, designing a nontrivial approximation algorithm for this setting is as difficult as designing an exact algorithm, and therefore
no polynomial‐time approximation exists. For the complete proof, we refer to \Cref{pro:GBPInap}.

\begin{restatable}{theorem}{GBPInap}
\label{cor:appa}
Let $\alpha = 0$. Then
$\alpha$-\GraphStar{} is inapproximable.
\end{restatable}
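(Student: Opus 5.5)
The plan is to show that, for $\alpha = 0$, any polynomial-time $\gamma$-approximation algorithm for \GraphStar{} (for any finite $\gamma\ge 1$) would decide an \np-hard problem. The key is that a $\gamma$-approximation must return a solution of cost exactly $0$ whenever the optimum is $0$, since $c^\star(S)\le \gamma\cdot 0 = 0$. So it suffices to show that deciding whether an instance of $0$-\Graph{} admits a feasible solution of cost $0$ is \np-hard. This holds for both objectives simultaneously: with nonnegative reduced costs, $c\utsup(S)=0$ iff $c\egsup(S)=0$ iff $\delta_S(a)=0$ for every agent $a$.

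To prove this hardness I will reuse the reduction behind \Cref{cor:graph0}, or, if that reduction produces a positive threshold $\kappa$, adapt it so the threshold becomes $\kappa=0$. A direct route is from \SetCover{}, given a universe $U$, sets $T_1,\dots,T_p$ and a bound $b$. I will create a hub vertex $h$ and, for each set $T_j$, a vertex $x_j$ joined to $h$ by an edge of weight $1$. For each element $u\in U$ I will add a vertex $y_u$ joined to $x_j$ by an edge of weight $0$ whenever $u\in T_j$. Each element gives an agent $(y_u,h)$, and the budget is $\beta=b$.

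In the forward direction, if a cover of size $b$ exists, discounting the corresponding edges $\{x_j,h\}$ gives every agent a path of cost $0$. In the reverse direction, take a solution $S$ with $|S|\le b$ and total cost $0$. Each agent $y_u$ then has a path to $h$ whose positive-weight edges all lie in $S$. The only positive-weight edges in the graph are the edges $\{x_j,h\}$. Every path from $y_u$ to $h$ must use such an edge as its final step, and since it is a path it uses exactly one edge into $h$. I need to double-check that a $0$-cost path from $y_u$ reaches $h$ through some $\{x_j,h\}\in S$ with $u\in T_j$. A path may wander $y_u\to x_j\to y_{u'}\to x_{j'}\to h$ through zero-weight edges, so the final set $T_{j'}$ need not contain $u$.

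This wandering is the main obstacle, and I will fix it by making the element--set edges positive as well. Concretely, I will give each edge $\{y_u,x_j\}$ weight $0$ only in the sense of cost, while still forcing element agents to step directly. The cleanest way is to reverse the roles: give element--set edges weight $1$ and set--hub edges weight $1$, and use a budget that forces the solution structure. Alternatively, I can simply require in the \SetCover{} instance that the wandering is harmless. Any $0$-cost path only traverses discounted or zero edges, so the set of vertices reachable from $y_u$ at cost $0$ is the connected component of $y_u$ in the graph of zero-weight and discounted edges. With zero-weight element--set edges, this component contains $h$ iff it contains some $x_{j'}$ with $\{x_{j'},h\}\in S$. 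Reaching $x_{j'}$ via alternating elements does not directly yield a cover.

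So I will instead make element--set edges weight $1$ but give each element many parallel copies: subdivide via $\beta+1$ internally disjoint zero-budget-infeasible paths. The most robust fix is to attach $\beta+1$ copies of each agent, each at its own private vertex $y_u^{(r)}$ joined to all $x_j$ with $u\in T_j$ by weight-$1$ edges. Budget $\beta=b+|U|$ is not possible to afford for all copies, so I would rather charge: a cost-$0$ solution must discount an edge at every private vertex, requiring at least $(\beta+1)|U|$ edges, which exceeds the budget. This breaks the construction, so I will settle on the simpler remedy of orienting the wandering away. Keep the element--set edges at weight $0$, but make the instance have only one element per set-incidence path, by choosing each $y_u$ adjacent to exactly one new private vertex $z_{u,j}$ per $T_j\ni u$, and connecting $z_{u,j}$ to $x_j$ with weight $0$. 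The components of zero-weight edges then have to be examined carefully; this verification step is where I expect the real work.

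Once zero-cost feasibility is shown \np-hard, the algorithm that runs the $\gamma$-approximation and checks whether its output has cost $0$ decides the problem. This yields inapproximability for both $c\egsup$ and $c\utsup$.
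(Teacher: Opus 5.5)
Your outer argument matches the paper's. At $\alpha=0$ a $\gamma$-approximation must return cost $0$ whenever the optimum is $0$. So it suffices that deciding whether a cost-$0$ solution exists is \np-hard, and this covers both objectives at once.

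The gap is in the hardness reduction you finally settle on. Subdividing each zero-weight element--set edge by a private vertex $z_{u,j}$, with both new edges again of weight $0$, leaves the connected components of the zero-weight subgraph unchanged. So the wandering path $y_u\to z_{u,j}\to x_j\to z_{u',j}\to y_{u'}\to z_{u',j'}\to x_{j'}\to h$ is still available. In this construction, as in your first one, discounting a zero-weight edge is pointless. A cost-$0$ solution exists if and only if $b$ is at least the number of components of the incidence graph that contain an element vertex, since one discounted set--hub edge per such component is necessary and sufficient. That is decidable in polynomial time, so no careful verification can turn this construction into an \np-hardness proof.

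The missing idea is the option you mentioned and then dropped: give the element--set edges positive weight (say $1$) and take $\beta=b+|U|$. Wandering then becomes harmless, because it is paid for out of the budget. Suppose $|S|\le b+|U|$ and every agent has cost $0$. Since every edge now has positive weight, each agent reaches $h$ using only edges of $S$. Hence the component $C$ of $h$ in $(V,S)$ contains $h$, all $|U|$ element vertices, and some number $k$ of set vertices. Being connected, $C$ has at least $|U|+k$ edges, so $k\le b$. The first edge of each $y_u$'s zero-cost path goes to some $x_j\in C$ with $u\in T_j$, so the $k$ sets in $C$ cover $U$. Conversely, from a cover of size at most $b$, discount one element--set edge per element plus the hub edges of the chosen sets; this gives a cost-$0$ solution.

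This is exactly the reduction behind \Cref{cor:graph0} (unit weights, budget $|\mathcal U|+\rho$, threshold $2\alpha$), whose threshold is already $0$ when $\alpha=0$. The paper's proof of this theorem simply invokes that reduction, so your initial plan of reusing it would have sufficed. The counting argument above is a clean way to handle the longer zero-cost paths, which the paper treats by a replacement argument.
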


Moreover, the egalitarian version resists close approximation even for positive $\alpha$. 
By analyzing our \SetCover‐based reduction more closely (see \Cref{pro:GBPApproxBound}), we prove that no polynomial‐time algorithm can guarantee an approximation ratio below $(\alpha+1)/(2\alpha)$. 
We note that this result is only obtained for the egalitarian welfare; inapproximability for the utilitarian welfare and nonzero $\alpha$ remains open.

\begin{restatable}{theorem}{GBPApproxBound}
\label{theo:optim:3}
Let $\alpha \in (0,1)$ and $\gamma < {} \frac{\alpha + 1}{2\alpha}$.
Then $\alpha$-\GraphMax{} is $\gamma$-inapproximable. 
\end{restatable}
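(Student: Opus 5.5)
We give a gap-introducing reduction from \SetCover{}, refining the one used for \Cref{cor:graph0}. Given a \SetCover{} instance with universe $U=\{u_1,\dots,u_p\}$, sets $T_1,\dots,T_q$ and a bound $r$, we build a graph with a hub-like vertex $t$ and, for each set $T_j$, a \emph{set edge} $\{x_j,t\}$ of weight $1$. For each element $u_i$ we add an agent $a_i=(s_i,t)$, and for each $j$ with $u_i\in T_j$ an edge $\{s_i,x_j\}$ of weight $0$. We also add a direct \emph{backup} edge $\{s_i,t\}$ of weight $1$, or a suitably long detour, so that every agent's base cost is exactly $1$. The budget is $\beta=r$.

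\textbf{Completeness.} If a cover of size $r$ exists, we discount its set edges. Every agent then reaches some $x_j$ at cost $0$ and crosses a discounted edge, so $c\egsup = \alpha$.

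\textbf{Soundness.} This is where the gap must come from. If no cover of size $r$ exists, then some agent $a_i$ has none of its set edges discounted. That agent still pays $1$ unless budget is spent on its private backup edge; but spending budget on private edges is also a cover of sorts (replace each private edge by any set containing $u_i$). So some agent pays cost $1$ and $c\egsup = 1$, which gives only a gap of $1/\alpha$.

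The stated bound $(\alpha+1)/(2\alpha)$ is smaller than $1/\alpha$, which suggests the paper's gadget gives an uncovered agent cost $(1+\alpha)/2$ rather than $1$. For example, each path to $t$ might consist of two edges of weight $1/2$, with the budget equal to $2r$ and the agent's shortest route using a set edge together with a private edge. We will therefore design the weights so that:
\begin{itemize}
\item a covered agent pays $\alpha$;
\item an agent with only ``half'' of its route discounted pays $(1+\alpha)/2$;
\item spending budget on private edges helps only one agent and only halfway.
\end{itemize}
The target ratio is then $\frac{(1+\alpha)/2}{\alpha}=\frac{\alpha+1}{2\alpha}$.

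\textbf{Main obstacle.} The hardest step is the soundness analysis. We must show that any feasible $S$ with $c\egsup(S) < (1+\alpha)/2$ can be converted into a set cover of size $r$. The plan is an exchange argument: every agent must have a fully discounted route, each such route contains a set edge, and any budget spent on private or auxiliary edges can be charged to distinct sets without increasing the count.

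\textbf{Conclusion.} With the gap established, a polynomial-time $\gamma$-approximation with $\gamma<(\alpha+1)/(2\alpha)$ would distinguish optimal cost $\alpha$ from optimal cost $\ge(1+\alpha)/2$. It would therefore decide \SetCover{}, implying $\p=\np$.
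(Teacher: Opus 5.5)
There is a genuine gap. The only gadget you specify completely does not work, and the soundness argument, which is the real content of the theorem, is only announced. In your first gadget, the weight-$0$ edges $\{s_i,x_j\}$ let agents travel for free through the element--set incidence graph, since the graph is undirected. For example, $s_i\to x_j\to s_{i'}\to x_{j'}\to t$ costs only $\alpha$ once $\{x_{j'},t\}$ is discounted, and an agent can equally use another agent's discounted backup edge via $s_i\to x_j\to s_{i'}\to t$. Hence ``some agent has none of its set edges discounted, so it pays $1$'' is false: a single discounted set edge serves every agent whenever the incidence graph is connected. Your reason for moving on is also inverted. A gap of $1/\alpha$ would prove \emph{more} than the theorem, and it would close the gap to the trivial $\alpha^{-1}$-approximation that the paper explicitly leaves open; that should have signalled the gadget is broken. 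The element--set edges must carry positive weight to block such transit, and then they must be discounted too in the Yes case. This is exactly why only $(\alpha+1)/(2\alpha)$ comes out.

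Your second sketch (a per-agent edge plus a set edge, each of weight $1/2$) is, up to scaling, the paper's reduction from \Cref{cor:graph0}. That reduction uses item edges $\{x_u,y_S\}$ and target edges $\{y_S,t\}$, all of weight $1$, with Yes-cost $2\alpha$. With budget $2r$, however, completeness fails whenever $|\mathcal U|>r$, because each of the $|\mathcal U|$ agents needs its own item edge discounted; the paper takes $\beta=|\mathcal U|+\rho$. The soundness step you call the main obstacle is left as a plan. Your heuristic that per-agent edges help only one agent is false here, since other agents can route through them: the all-discounted route $x_u\to y_S\to x_{u'}\to y_{S'}\to t$ costs $4\alpha<1+\alpha$ when $\alpha<1/3$. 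The missing argument goes as follows. If $c\egsup(F)<1+\alpha$, no agent's route contains an undiscounted edge, because every $x_u$--$t$ path has at least two edges. So all item vertices lie in the component $C$ of $t$ in $(V,F)$. Let $Y'$ be the subset vertices of $C$. Then $C$ has $|\mathcal U|+|Y'|+1$ vertices and hence at least $|\mathcal U|+|Y'|$ discounted edges, so $|Y'|\le\rho$. Moreover, each $x_u$ has a discounted neighbor $y_S\in Y'$ with $u\in S$, so $Y'$ yields a set cover. The paper phrases this as an exchange argument that replaces discounted item edges by target edges; the counting above is one way to make it precise. This gives $c\egsup\ge 1+\alpha$ for No-instances versus $2\alpha$ for Yes-instances, which is the claimed gap.
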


Interestingly, our lower‐bound construction makes use of instances where every agent’s cost is exactly $2\alpha$ in the “Yes’’ case, but at least $(\alpha+1)$ in the “No’’ case---hence the ratio $(\alpha+1)/(2\alpha)$. 
Note that $\frac{\alpha + 1}{2\alpha}$ is unbounded for $\alpha$ tending to $0$ and converges to~$1$ when $\alpha$ converges to~$1$.
Hence, while the inapproximability is not severe for large enough $\alpha$, it can be arbitrarily bad.
Notably, for the utilitarian objective, the same construction does not translate into hardness: when the number of agents is large, a single agent’s elevated cost only has a vanishing effect on the total, not giving the same gap.

\section{Conclusion}

In this paper, we have investigated a stylized model of investment in public transportation infrastructure, in which a central authority needs to allocate a limited budget to reduce the costs of some of the network edges.
We are interested in two objectives: minimizing the total travel time, as captured by the utilitarian welfare, and minimizing the maximum travel time among all agents, as captured by the egalitarian welfare.

\renewcommand{\arraystretch}{1.2}

\begin{table}[t!]
\caption{Summary of results.
There, $\alpha$ is the discount factor, $\beta$ the budget, and $m$ and $\ell$ the number of vertices and edges of the underlying graph. 
\label{tab:summary}}
\centering
\resizebox{1\linewidth}{!}{%
\begin{tabular}{|l|c|c|c|}
\hline
\textbf{Problem} & \textbf{Results} & \phantom{     }\textbf{Reference}\phantom{     } & \textbf{Restrictions} \\
\Xhline{1.2pt}
1-\GraphStar & solvable in \bigO{\beta m\log m + \beta \ell} & Theorem~\ref{theo:1-graph} & \\
\hline
2-\GraphStar & solvable in \bigO{\beta m^2 \log m + m \beta \ell + m^2 \beta^2} & Theorem~\ref{theo:2-graph} & \\
\hline
$k$-\GraphStar & solvable in \bigO{(2k)^{4k^2+4}(m\beta)^{2k^2}} & Theorem~\ref{thm:XPfixedK} & \\
\hline
$k$-\GraphHubStar & solvable in $\mathcal O(m^3 \beta + 3^k m\beta^2 + 2^k m^2\beta^2)$ & Theorem~\ref{thm:FPThub} & \\
\hline
\multirow{2}{*}{$\alpha$-\GraphDec}
& \np-complete & Theorem~\ref{cor:graph0} & \\
\cline{2-4}
& W[1]-hard param. by number of agents and budget & Theorem~\ref{thm:W1Graph} & \\
\hline
\multirow{2}{*}{$\alpha$-\GraphStar}
& inapproximable & Theorem~\ref{cor:appa} & $\alpha = 0$\\
\cline{2-4}
& $\alpha^{-1}$-approximable & Observation~\ref{obs:TrivApprox} &  $\alpha\neq 0$\\ 
\hline
$\alpha$-\GraphMax & not $\gamma$-approximable & Theorem~\ref{theo:optim:3} & $\alpha\neq 0, \gamma < \frac{\alpha + 1}{2\alpha}$ \\
\hline
\end{tabular}
}
\end{table}

We obtain algorithms and intractability results for the associated optimization and decision problems.
Our contributions are summarized in Table~\ref{tab:summary}. 
Our first algorithm, \Cref{alg:dijkstra_mod}, is a polynomial-time algorithm for the \gbp{} when there is only a single agent.
This is obtained by extending Dijkstra's algorithm to pivot pairs, which encode both a vertex and a budget level.
This algorithm serves as an important subroutine in all subsequent algorithms.
For two agents, we apply it for each of the segments of the two travel topologies (with or without an intersection in travel) that can occur.
We extend this idea to an arbitrary number of agents, thereby obtaining an XP algorithm. 
This algorithm first guesses a general travel topology and then applies \Cref{alg:dijkstra_mod} on all segments.
Notably, the guesses  implicitly impose a structural assumption on the number of intersections in the travel of pairs of agents.
We can make this assumption because, as we show, it is satisfied by some optimal solution.

Next, we present an W[1]-hardness result, which makes it unlikely that the XP algorithm can be improved to an FPT algorithm.
Nonetheless, we establish an FPT algorithm under the domain restriction where all agents travel to a common hub.
Finally, we conclude with hardness and inapproximability results when the number of agents is variable.

Interestingly, with the single exception of \Cref{theo:optim:3}, all results hold for both utilitarian and egalitarian welfare.
For a single agent, this is trivial as both notions of welfare coincide.
For our XP algorithm, this is the case because we can compute the welfare for each guess and then compare them.
However, this is nontrivial for our FPT algorithm, where we have to make a distinction for the update formulas of our dynamic programs based on the welfare.
For our hardness results, whenever they work for both welfare notions, they utilize the same construction (albeit with different target costs), which demonstrates the robustness of the constructions.
Moreover, our W[1]-hardness and \np-completeness results hold whenever $\alpha$ assumes any fixed value in $[0,1)$.

Improving public transport will be an essential component of transforming the modern society,  which needs to deal with  unprecedented challenges such as climate change.
We believe that studying simple models serves as an important theoretical foundation for a more complex discourse.
There are multiple research directions to build upon our work.
First, there is a gap between our inapproximability results and the trivial $\alpha^{-1}$-approximation.
It would be interesting to see if one can achieve a better approximation guarantee in polynomial time.
For our negative results, it would be intriguing to investigate the possibility of circumventing them in restricted domains for networks that are likely to occur in practice, e.g., planar graphs.
Finally, it would be interesting to enhance our simple model of public transport by incorporating further features, such as directed travel, agents arriving online, or varying per-edge improvement costs, or by considering additional solution concepts such as proportional fairness and core stability.

\clearpage
\appendix
\section*{Appendix}

In this appendix, we provide omitted material as well as a comparison with the \rwp{} by \addauthor{He et al.~}\citet{HBL+24}.

\section{Performance of Greedy Algorithms}
\label{count:greedy}

In this appendix, we consider the performance of $\gdyup$ and $\gdydn$.

\gdpperformance*

\begin{proof}
    Let $\alpha \in [0,1)$ and consider a fixed budget $\beta \in \mathbb{N}$.
    We define an instance 
    $\gbpins$ of $\Graph{}$ along the illustration in \Cref{fig:greedy:1}.

    The weighted graph $G = (V,E,w)$ is given by 
    $$V = \bigcup_{i\in [\beta]_0} \left(\{s_i,t_i\}\cup \{v_i^j\colon j\in [\beta]\}\right)\cup \{q_i\colon i\in [\beta]\}\cup \{q,q'\}$$
    and 
    \begin{align*}
        E &= \{\{s_i,q\},\{t_i,q'\},\{s_i,v_i^1\},\{v_i^\beta,t_i\}\colon i\in [\beta]_0\}\\
        &\cup \{\{v_i^j,v_i^{j+1}\}\colon  i\in [\beta]_0, j\in [\beta-1]\}\\
        &\cup \{\{q,q_1\},\{q_\beta,q'\}\}\cup \{\{q_i,q_{i+1}\}\colon i\in [\beta-1]\}\text.
    \end{align*}
    
    Consider any strictly increasing sequence $(\epsilon_i)_{i\ge 0}$ with $\epsilon_0 > 0$ and, for all $i \ge 0$, 
    \begin{equation*}
        \epsilon_i < 
        \begin{cases}
            \min\left\{1,\frac 1{\alpha} - 1\right\} & \text{if } \alpha > 0\\
            1 & \text{if } \alpha = 0
        \end{cases}\text.
    \end{equation*}
    Note that this is well-defined for $\alpha \in (0,1)$.
    For all $i\in [\beta]_0$, we set $w(s_i,v_i^1) = 1 + \epsilon_i$.
    Moreover, we set $w(e) = 1$ for all other edges $e$.
    
    We have $\ac = \{a_i\colon i\in [\beta]_0\}$ with $a_i = (s+i,t+i)$, i.e., there are $\beta + 1$ agents.

    The idea is that agents shortest paths are the direct routes between $s_i$ and $t_i$ along vertices $v_i^j$, whereas all agents would benefit from using the ``motorway'' between $q$ and $q'$.
    Both variants of the greedy algorithms will eventually select edges along direct routes, which are not shared between agents, because these offer locally optimal gains.
    As a result, the final cost for each agent is similar to the cost of not investing in edges at all. 
    In contrast, the optimal solution is to reduce $\beta$ of the $\beta + 1$ edges on the motorway, essentially leaving only the access and exit costs. 
    As $\beta$ grows, these additional costs become negligible.

    Let us now formally consider the behavior of $\gdyup$.
    Each agent $a_i = (s_i, t_i)$ has a direct route consisting of a single edge of weight $1+\epsilon_i$ followed by $\beta$ edges of weight $1$. 
    Hence, the total cost for using the direct route is $1 + \epsilon_i + \beta \in [\beta +1,\beta +2]$.
Note that this is the shortest route for each agent.
Each other connection between $s_i$ and $t_i$ has at least length $\beta + 3$, so their distance would not decrease even if any edge on any other path was removed.
We now show by induction that $\gdyup$ adds the edges $\{s_\beta,v_\beta^1\}$, dots, $\{s_1,v_1^1\}$ in this order.
Assume that the $i\ge 0$ edges $S_i = \{\{s_\beta,v_\beta^1\},\dots, \{s_{\beta-i+1},v_{\beta-i+1}^1\}\}$, which is the empty set for $i = 0$. 
Then, the current cost for agents $a_0, \dots, a_{\beta - i}$ is still to use their direct connection.
We now reason about what edge $\gdyup$ should select next.
Selecting an edge on the motorway does not decrease the cost for any agent.
Moreover, selecting an edge on the direct route of an agent does not reduce the cost of any other agent.
Indeed, then they would have to travel to the start and from the end of the other agents direct route via $q$ and $q'$, incurring a cost of at least $\beta + 4$ (bounded by the case where the heavy edge on this agent's route that has already been reduced to a weight of~$0$).
Similarly, agents $a_{\beta - i+1},\dots, a_{\beta}$ cannot gain from selecting any edge not on their direct route.
Hence, $\gdyup$ will select an edge on the direct route of some agent and only one agent will benefit from that.
For the utilitarian cost, the next selected edge is the edge of heaviest weight, which is $\{\{s_{\beta-i},v_{\beta-i}^1\}$.
For the egalitarian cost, note that the current costs are 

\begin{equation*}
    \delta_{S_i}(a_j)
    \begin{cases}
        \beta + 1 + \epsilon_j & j = 1,\dots, \beta - i\\
        \beta + \alpha(1+\epsilon_j) & j = \beta - i + 1, \dots, \beta
    \end{cases}\text.
\end{equation*}

Note that for $j = \beta - i + 1, \dots, \beta$, it holds that $\alpha (1 + \epsilon_j) < \alpha (1 + (\frac 1{\alpha}-1)) = 1$.
Hence, the agent with the highest cost is $a_{\beta - i}$ and we reduce the egalitarian cost the most by selecting an edge on their direct route.
Without loss of generality, $\gdyup$ selects $\{s_{\beta+i},v_{\beta+i}^1\}$ next (which minimizes the utilitarian cost locally subject to minimizing the egalitarian cost).
In conclusion, $\gdyup$ will return $S_{\uparrow} = \{\{s_i,v_i^1\}\colon i\in [\beta]\}$ under both cost functions.

\begin{figure}[tb]
\centering
\includegraphics[width=.6\textwidth, page=4, trim=30 150 550 70, clip]{graphics.pdf}
 \caption{Illustration of instances constructed in \Cref{prop:gdy}. 
 The greedy algorithms yield an arbitrarily suboptimal solution.
 \label{fig:greedy:1}}
\end{figure}

Next, we consider the performance  of $\gdyup$ on this instance.
All shortest paths when no edges are discounted are also shortest paths when all edges are discounted (and in the case $\alpha = 0$, all paths are shortest paths).
Hence, $\gdydn$ can start by deleting edges in a way such that at least one shortest path survives for each agent.
Therefore, $\gdydn$ my start by deleting all edges apart from the direct routes.
At this point, the next deleted edge will affect the cost of the solution.
Note that, as long as the difference in the number of edge included in different agents shortest paths is at most one, each agent's shortest path remains their unique shortest path.
Hence, for the utilitarian cost, we can next delete the edges in the sets $\{\{v_i^j,v_i^{j+1}\}\colon i\in [\beta]_0\}$ for $j\in [\beta - 1]$, and $\{\{v_i^\beta,t_i\}i\in [\beta]_0\}$.
We process these sets in an arbitrary order and, in turn, the edges within each set in an arbitrary order.
The same is possible for the egalitarian cost.
The final edge to be deleted is $\{s_0,v_0^1\}$ because this is the lightest remaining edge and, therefore, leads to the smallest increase in egalitarian or utilitarian welfare.
Hence, $\gdydn$ returns $S_{\downarrow} = S_{\uparrow}$.

We now evaluate the performance of both algorithms by comparing the cost of their outcome with a solution that selects $\beta$ edges on the motorway.
Since $\gdyup$ returns the same solution as $\gdydn$, we can restrict attention to $\gdyup$.

First, note that 
    \begin{align}
        c\egsup(S_{\uparrow}) &= \beta + (1+\epsilon_0)> \beta \text{ and}\notag\\
        c\utsup(S_{\uparrow}) &= \beta + (1+\epsilon_0) + \sum_{i = 1}^{\beta} \beta + \alpha(1+\epsilon_i) > (1+\beta)\beta\text.\label{eq:utgdy}
    \end{align}

Now, consider the set $S^* = \{\{q_i,q_{i+1}\}\colon i\in [\beta - 1]\}\cup \{\{q,q_1\}\}$.
It holds that 
\begin{equation*}
    c\egsup(S^*) \le 3 + \alpha \beta \quad\text{and}\quad c\utsup(S^*) \le (\beta+1)(3 + \alpha \beta)\text,
\end{equation*}
where we bound with the case where all agents use the motorway.

Combining this with \Cref{eq:utgdy}, it follows that 
\begin{equation*}
    \frac {c\egsup(S^*)}{c\egsup(S_{\uparrow})} < \frac{3 + \alpha \beta}{\beta} \rightarrow \alpha \text{ for }\beta \to \infty\text.
\end{equation*}
and 
\begin{equation*}
    \frac {c\utsup(S^*)}{c\utsup(S_{\uparrow})} < \frac{(\beta+1)(3 + \alpha \beta)}{(1+\beta)\beta} \rightarrow \alpha \text{ for }\beta \to \infty\text.
\end{equation*}

Hence, $\gdyup$ is no $\gamma$-approximation algorithm for $\GraphStar{}$ for any $\gamma < \frac 1{\alpha}$.
\proofqed\end{proof}

\section{Single-Agent Case}\label{app:oneag}

In this appendix, we provide additional material concerning $1$-\GraphStar{}.

\subsection{Example Run of Algorithm~\ref{alg:dijkstra_mod}}

\begin{figure}[tb!]
\vspace*{-0.2cm}
    \centering
\includegraphics[width= 0.6\textwidth, page=7, trim=30 490 715 85, clip]{graphics.pdf} 
    \caption{Instance for illustrating the execution of the algorithm.
	We display the shortest paths without (left) and with optimal (right) investment.
    \label{fig:graphsum:c}}
\end{figure}

\begin{example}
We demonstrate \Cref{alg:dijkstra_mod} for the instance in \Cref{fig:graphsum:c} with $\alpha = 0.5$ and $\beta = 2$. 
We summarize the iterations of the algorithm in \Cref{tab:graphsum:1}.

We begin in Iteration~$I$ where pivot pair $(s,0)$ has distance~$0$. 
Both neighbors $v_{1}$ and $v_{2}$ receive two updates each, one non‐reducing update $D[v_{1},0]=1$, $D[v_{2},0]=5$ and one reducing update $D[v_{1},1]=0.5$, $D[v_{2},1]=2.5$. 
There is no budget‐increasing update in the first iteration.

In Iteration~$II$, the smallest distance of members in $Q$ is attained by $(v_1,1)$ with $D(v_1,1) = 0.5$, so the pivot pair is $(v_{1},1)$. 
For $v_2$, we have $D[v_1,1] + w(v_1,v_2) = 2.5$, so we perform no non-increasing update. 
However, we update neighbors' distances as $D[v_{2},2]=1.5$, $D[v_{3},1]=3.5$, and $D[v_{3},2]=2$. 
In addition, the budget‐increasing update sets $D[v_{1},2]=0.5$.

In Iteration~$III$, pivot $(v_{1},2)$ produces no further updates. 
By Iteration~$VI$, both $(v_{2},1)$ and $(v_{3},2)$ have distance $2$, and the algorithm prioritizes $(v_{2},1)$ due to its smaller budget indicator.

We continue in this manner until Iteration~$XIV$, when the target vertex $t$ is pivoted for the first time and $D[t,2]=5.5$ is recorded, which is the minimum distance from $s$ to $t$ with budget at most $2$. 
	Note that if we proceed the algorithm, we would further update $D(t,0)$, but this cannot lead to a further decrease of $D[t,2]$.
\hfill$\lhd$
\end{example}

\begin{table}[ht!]
\caption{Exemplary execution of Algorithm~\ref{alg:dijkstra_mod}.  
  Columns and rows represent budget indicators and vertices, respectively.  
  The initial routing pair $(s,0)$ is omitted. 
  Cells contain the current values $D(v,b)$. 
  Blank cells mean a value of $\infty$.  
  Changes are marked in blue, and the pivot pair is underlined.  
  Iterations X to XIV are merged since no further updates occur. 
  \vspace{1em}
  \label{tab:graphsum:1}}
  \centering
  \noindent\makebox[0.4\textwidth]{%
  \begin{tabular}{|p{32pt}<{\centering}|*{3}{p{18pt}<{\centering}|}}
      \hline
        \rowcolor[gray]{.9}
        I & 
        $\hspace{-3pt}b\hspace{-2pt}=\hspace{-2pt}0\hspace{-3pt}$ & 
        $\hspace{-3pt}b\hspace{-2pt}=\hspace{-2pt}1\hspace{-3pt}$ & 
        $\hspace{-3pt}b\hspace{-2pt}=\hspace{-2pt}2\hspace{-3pt}$ \\ \hline
        \cellcolor[gray]{.9} $v_1$ & \textcolor{blue}{1} & \textcolor{blue}{0.5} &  \\ \hline
        \cellcolor[gray]{.9} $v_2$ & \textcolor{blue}{5} & \textcolor{blue}{2.5} &  \\ \hline
        \cellcolor[gray]{.9} $v_3$ &  &  &  \\ \hline
        \cellcolor[gray]{.9} $v_4$ &  &  &  \\ \hline
        \cellcolor[gray]{.9} $t$ &  &  &  \\ \hline
    \end{tabular}
  \begin{tabular}{|p{32pt}<{\centering}|*{3}{p{18pt}<{\centering}|}}
      \hline
        \rowcolor[gray]{.9}
        II & 
        $\hspace{-3pt}b\hspace{-2pt}=\hspace{-2pt}0\hspace{-3pt}$ & 
        $\hspace{-3pt}b\hspace{-2pt}=\hspace{-2pt}1\hspace{-3pt}$ & 
        $\hspace{-3pt}b\hspace{-2pt}=\hspace{-2pt}2\hspace{-3pt}$ \\ \hline
        \cellcolor[gray]{.9} $v_1$ & 1& \underline{0.5} & \textcolor{blue}{0.5} \\ \hline
        \cellcolor[gray]{.9} $v_2$ & 5 & 2.5 & \textcolor{blue}{1.5} \\ \hline
        \cellcolor[gray]{.9} $v_3$ &  & \textcolor{blue}{3.5} & \textcolor{blue}{2} \\ \hline
        \cellcolor[gray]{.9} $v_4$ &  &  &  \\ \hline
        \cellcolor[gray]{.9} $t$ &  &  &  \\ \hline
    \end{tabular}
            }%
        \vspace*{2pt}

  \noindent\makebox[0.4\textwidth]{%
  \begin{tabular}{|p{32pt}<{\centering}|*{3}{p{18pt}<{\centering}|}}
      \hline
        \rowcolor[gray]{.9}
        III & 
        $\hspace{-3pt}b\hspace{-2pt}=\hspace{-2pt}0\hspace{-3pt}$ & 
        $\hspace{-3pt}b\hspace{-2pt}=\hspace{-2pt}1\hspace{-3pt}$ & 
        $\hspace{-3pt}b\hspace{-2pt}=\hspace{-2pt}2\hspace{-3pt}$ \\ \hline
        \cellcolor[gray]{.9} $v_1$ & 1& 0.5 & \underline{0.5} \\ \hline
        \cellcolor[gray]{.9} $v_2$ & 5 & 2.5 & 1.5 \\ \hline
        \cellcolor[gray]{.9} $v_3$ & & 3.5 & 2 \\ \hline
        \cellcolor[gray]{.9} $v_4$ &  &  &  \\ \hline
        \cellcolor[gray]{.9} $t$ &  &  &  \\ \hline
    \end{tabular}

  \begin{tabular}{|p{32pt}<{\centering}|*{3}{p{18pt}<{\centering}|}}
      \hline
        \rowcolor[gray]{.9}
        IV & 
        $\hspace{-3pt}b\hspace{-2pt}=\hspace{-2pt}0\hspace{-3pt}$ & 
        $\hspace{-3pt}b\hspace{-2pt}=\hspace{-2pt}1\hspace{-3pt}$ & 
        $\hspace{-3pt}b\hspace{-2pt}=\hspace{-2pt}2\hspace{-3pt}$ \\ \hline
        \cellcolor[gray]{.9} $v_1$ & \underline{1}& 0.5 & 0.5 \\ \hline
        \cellcolor[gray]{.9} $v_2$ & \textcolor{blue}{3} & \textcolor{blue}{2} & 1.5 \\ \hline
        \cellcolor[gray]{.9} $v_3$ & \textcolor{blue}{4} & \textcolor{blue}{2.5} & 2 \\ \hline
        \cellcolor[gray]{.9} $v_4$ &  &  &  \\ \hline
        \cellcolor[gray]{.9} $t$ &  &  &  \\ \hline
    \end{tabular}
            }%
        \vspace*{2pt}

  \noindent\makebox[0.4\textwidth]{%
  \begin{tabular}{|p{32pt}<{\centering}|*{3}{p{18pt}<{\centering}|}}
      \hline
        \rowcolor[gray]{.9}
        V & 
        $\hspace{-3pt}b\hspace{-2pt}=\hspace{-2pt}0\hspace{-3pt}$ & 
        $\hspace{-3pt}b\hspace{-2pt}=\hspace{-2pt}1\hspace{-3pt}$ & 
        $\hspace{-3pt}b\hspace{-2pt}=\hspace{-2pt}2\hspace{-3pt}$ \\ \hline
        \cellcolor[gray]{.9} $v_1$ & 1& 0.5 & 0.5 \\ \hline
        \cellcolor[gray]{.9} $v_2$ & 3 & 2 & \underline{1.5} \\ \hline
        \cellcolor[gray]{.9} $v_3$ & 4 & 2.5 & 2 \\ \hline
        \cellcolor[gray]{.9} $v_4$ &  &  & \textcolor{blue}{3.5} \\ \hline
        \cellcolor[gray]{.9} $t$ &  &  & \textcolor{blue}{8.5} \\ \hline
    \end{tabular}
  \begin{tabular}{|p{32pt}<{\centering}|*{3}{p{18pt}<{\centering}|}}
      \hline
        \rowcolor[gray]{.9}
        VI & 
        $\hspace{-3pt}b\hspace{-2pt}=\hspace{-2pt}0\hspace{-3pt}$ & 
        $\hspace{-3pt}b\hspace{-2pt}=\hspace{-2pt}1\hspace{-3pt}$ & 
        $\hspace{-3pt}b\hspace{-2pt}=\hspace{-2pt}2\hspace{-3pt}$ \\ \hline
        \cellcolor[gray]{.9} $v_1$ & 1& 0.5 & 0.5 \\ \hline
        \cellcolor[gray]{.9} $v_2$ & 3 & \underline{2} & 1.5 \\ \hline
        \cellcolor[gray]{.9} $v_3$ & 4 & 2.5 & 2 \\ \hline
        \cellcolor[gray]{.9} $v_4$ &  & \textcolor{blue}{4} & \textcolor{blue}{3} \\ \hline
        \cellcolor[gray]{.9} $t$ &  & \textcolor{blue}{9} & \textcolor{blue}{5.5} \\ \hline
    \end{tabular}
        }%
        \vspace*{2pt}

  \noindent\makebox[0.4\textwidth]{%
  \begin{tabular}{|p{32pt}<{\centering}|*{3}{p{18pt}<{\centering}|}}
      \hline
        \rowcolor[gray]{.9}
        VII & 
        $\hspace{-3pt}b\hspace{-2pt}=\hspace{-2pt}0\hspace{-3pt}$ & 
        $\hspace{-3pt}b\hspace{-2pt}=\hspace{-2pt}1\hspace{-3pt}$ & 
        $\hspace{-3pt}b\hspace{-2pt}=\hspace{-2pt}2\hspace{-3pt}$ \\ \hline
        \cellcolor[gray]{.9} $v_1$ & 1& 0.5 & 0.5 \\ \hline
        \cellcolor[gray]{.9} $v_2$ & 3 & 2 & 1.5 \\ \hline
        \cellcolor[gray]{.9} $v_3$ & 4 & 2.5 & \underline{2} \\ \hline
        \cellcolor[gray]{.9} $v_4$ &  & 4 & 3 \\ \hline
        \cellcolor[gray]{.9} $t$ &  & 9 & 5.5 \\ \hline
    \end{tabular}
  \begin{tabular}{|p{32pt}<{\centering}|*{3}{p{18pt}<{\centering}|}}
      \hline
        \rowcolor[gray]{.9}
        VIII & 
        $\hspace{-3pt}b\hspace{-2pt}=\hspace{-2pt}0\hspace{-3pt}$ & 
        $\hspace{-3pt}b\hspace{-2pt}=\hspace{-2pt}1\hspace{-3pt}$ & 
        $\hspace{-3pt}b\hspace{-2pt}=\hspace{-2pt}2\hspace{-3pt}$ \\ \hline
        \cellcolor[gray]{.9} $v_1$ & 1& 0.5 & 0.5 \\ \hline
        \cellcolor[gray]{.9} $v_2$ & 3 & 2 & 1.5 \\ \hline
        \cellcolor[gray]{.9} $v_3$ & 4 & \underline{2.5} & 2 \\ \hline
        \cellcolor[gray]{.9} $v_4$ &  & 4 & 3 \\ \hline
        \cellcolor[gray]{.9} $t$ &  & 9 & 5.5 \\ \hline
    \end{tabular} 
            }%
        \vspace*{2pt}

  \noindent\makebox[0.4\textwidth]{%
  \begin{tabular}{|p{32pt}<{\centering}|*{3}{p{18pt}<{\centering}|}}
      \hline
        \rowcolor[gray]{.9}
        IX & 
        $\hspace{-3pt}b\hspace{-2pt}=\hspace{-2pt}0\hspace{-3pt}$ & 
        $\hspace{-3pt}b\hspace{-2pt}=\hspace{-2pt}1\hspace{-3pt}$ & 
        $\hspace{-3pt}b\hspace{-2pt}=\hspace{-2pt}2\hspace{-3pt}$ \\ \hline
        \cellcolor[gray]{.9} $v_1$ & 1& 0.5 & 0.5 \\ \hline
        \cellcolor[gray]{.9} $v_2$ & \underline{3} & 2 & 1.5 \\ \hline
        \cellcolor[gray]{.9} $v_3$ & 4 & 2.5 & 2 \\ \hline
        \cellcolor[gray]{.9} $v_4$ & \textcolor{blue}{5} & 4 & 3 \\ \hline
        \cellcolor[gray]{.9} $t$ & \textcolor{blue}{10} & \textcolor{blue}{6.5} & 5.5 \\ \hline
    \end{tabular}
  \begin{tabular}{|p{32pt}<{\centering}|*{3}{p{18pt}<{\centering}|}}

      \hline
        \rowcolor[gray]{.9}
        \small{X-XIV} & 
        $\hspace{-3pt}b\hspace{-2pt}=\hspace{-2pt}0\hspace{-3pt}$ & 
        $\hspace{-3pt}b\hspace{-2pt}=\hspace{-2pt}1\hspace{-3pt}$ & 
        $\hspace{-3pt}b\hspace{-2pt}=\hspace{-2pt}2\hspace{-3pt}$ \\ \hline
        \cellcolor[gray]{.9} $v_1$ & 1& 0.5 & 0.5\\ \hline
        \cellcolor[gray]{.9} $v_2$ & 3 & 2 & 1.5 \\ \hline
        \cellcolor[gray]{.9} $v_3$ & \underline{4} & 2.5 & 2 \\ \hline
        \cellcolor[gray]{.9} $v_4$ & \underline{5} & \underline{4} & \underline{3}  \\ \hline
        \cellcolor[gray]{.9} $t$ & 10 & 6.5 & \underline{\textcolor{red}{5.5}} \\ \hline
    \end{tabular}
  }%
\end{table}

\subsection{Proof of Correctness of Algorithm~\ref{alg:dijkstra_mod}}
We continue with a proof of correctness and running time of \Cref{alg:dijkstra_mod}
that solves \GraphStar{} for a single agent.

\OneG*
\begin{proof}
\label{pro:OneG}
We now prove correctness of the algorithm and analyze its running time.
Each updated distance is always at least as large as the distance to the pivot pair, so no already‐visited pair is ever updated, and pairs are extracted in ascending order of distance. Since the graph is connected, every routing pair will eventually be visited exactly once. We now show by induction that, when pivot pair $(v,b)$ is extracted, then  $D(v,b)$ denotes the minimum cost of a path from $s$ to $v$ reducing the cost of at most $b$ edges.

As a base case, the first pivot is $(s,0)$ and $D(s,0) = 0$ is clearly correct. 
Now assume that the first $i$ pivots $r_{1},\dots,r_{i}$ had the correct value $D(r_j)$ when extracted. 
Let $r_{i+1} = (v,b)$ be the next pivot pair with computed distance $d_{i+1} := D(v,b)$. Suppose that the lowest cost of a path from $s$ to $v$ investing in at most $b$ edges is $d^{*}$.

We first claim that $d_{i+1}\ge d^*$.
Let $j\in [i]$ be the index so that $r_j$ is the last pivot pair that led to an update of $r_{i+1}$.
Assume that $r_j = (v',b')$.
By induction, there is a path of cost $D(v',b')$ from $s$ to $v'$ investing into at most $b'$ edges.
If the last update was a budget-increasing update, then $v' = v$, $b' = b-1$, and the update was $D(v,b) = D(v',b')$.
Hence, there exists a path from $s$ to $v$ at cost $D(v,b)$ using at most $b$ edges, and we have $d_{i+1}\ge d^*$.
If it was a reducing update, then $b = b'+1$ and $D(v,b) = D(v',b') + \alpha w(v',v)$. 
If it was a non-reducing update, then $b = b'$ and $D(v,b) = D(v',b') + w(v',v)$. 
Consider adding the edge $\{v',v\}$ to the path to $v'$ as a reduced edge in the first case and non-reduced edge in the second case.
This yields a path from $s$ to $v$ of weight $D(v,b)$ using using at most $b$ edges, and again $d_{i+1}\ge d^*$.

Next, we prove $d^* \ge d_{i+1}$.
Let $P$ be a shortest path to $r_{i+1}$ with at most $b$ reductions, and let $(s,0),p_{1},\dots,p_{k}$ be the corresponding routing pairs when pairing the vertices along $P$ with the number of reduced edges up to this vertex.
Note that $p_k = (v,b')$ for some $b' \le b$.
If $b' < b$, we append the pivot pairs $(v,b'+1),\dots, (v,b)$.
Let $r = (\hat v,\hat b)$ be the first pivot pair in this sequence that had not been processed when $(v,b)$ was selected as a pivot pair.
Since $(v,b)$ is in the sequence and had not been processed, such a pivot pair exists.
Let $(\hat v',\hat b')$ be its predecessor.
Let $\hat d$ be the cost of a shortest path from $s$ to $\hat v$ using at most $\hat b$ edges.
By the minimality in the choice of $(v,b)$, we know that 
\begin{equation}\label{eq:pred}
    d_{i+1} = D(v,b)\le D(\hat v,\hat b)\text.
\end{equation}

If $\hat v' = \hat v$, then $r$ was appended at the end of $P$. 
Hence, when $(\hat v',\hat b')$ was processed and tested for a budget-increasing update, we have $D(\hat v,\hat b) = D(\hat v',\hat b')\le d^*$.
Now, assume that $\hat v' \neq \hat v$.
Note that, by induction, the cost of the shortest path from $s$ to $\hat v'$ with $\hat b'$ reduced edges is $D(\hat v',\hat b')$, so this is also the cost of the segment of $P$ up to $\hat v'$.
If $\hat b = \hat b'$, then $P$ contains the edge $\{\hat v',\hat v\}$ and we have $d^* \ge D(\hat v',\hat b') + w(\hat v', \hat v)\ge D(\hat v,\hat b)$, where the latter inequality follows from a potential non-reducing update when $(\hat v',\hat b')$ was processed.
Moreover, if $\hat b \neq \hat b'$, then $\hat b = \hat b'+1$ and $P$ reduces edge $\{\hat v',\hat v\}$.
We then have $d^* \ge D(\hat v',\hat b') + \alpha w(\hat v', \hat v)\ge D(\hat v,\hat b)$, where the latter inequality follows from a potential reducing update when $(\hat v',\hat b')$ was processed.

In all three cases, we conclude that $d^* \ge D(\hat v,\hat b)$.
Combining this with \Cref{eq:pred}, we obtain $d^* \ge d_{i+1}$.

Next, we will show that for the solution $S$ returned by the algorithm, it holds that $\delta_S(s,t) = D[t,\beta]$, i.e., it yields the optimal cost.
To this end, we refer to the function $p[v,b]$ used in \Cref{alg:dijkstra_mod} as the \emph{predecessor function} and, for $k\in \mathbb N$ and $r = (v,b)$, we use $p^k(r)$ to refer to the pivot pair when this function was applied $k$ times.
Now, let $r_1 = (t,\beta)$ and let $\rho$ be the unique integer with $p^{\rho}(r) =(s,0)$, i.e., this is the number of pivot pairs traced by the predecessor function until reaching $(s,0)$.
For $i\in [\rho]$, we write $r_i = (v_i,b_i)$ and
let $S_i = S\setminus \{\{v_j,v_{j+1}\}\colon j\in [i-1], v_j\neq v_{j+1}, b_j = b_{j+1}+1\}$.
Clearly, $S = S_1$.

We now prove by downward induction for $i\in [\rho]$ that 
$\delta_{S_i}(s,v_i) = D(v_i,b_i)$.
For $i = \rho$, this is immediate since $r_{\rho} = (s,0)$ and $D(r_{\rho}) = 0$.

Now consider any $i\in [\rho - 1]$.
First, note that $|S_i|\le b_i$ as $S$ contains at most one element for each time when the budget is decreased.
Hence, $\delta_{S_i}(s,v_i)\ge D(r_i)$ by the first part of the proof.

It remains to prove the reverse inequality. 
By induction, there exists a path from $s$ to $v_{i+1}$ of cost $\delta_{S_{i+1}}(s,v_{i+1}) = D(v_{i+1},b_{i+1}) = D(r_{i+1})$.
We know that $p(r_{i+1}) = r_i$, i.e., $r_i$ was marked as $r_{i+1}$'s predecessor.
If this occurred in a budget-increasing update, then $v_i = v_{i+1}$ and $S_i = S_{i+1}$.
Hence, $\delta_{S_i}(s,v_i) = \delta_{S_{i+1}}(s,v_{i+1}) = D(r_{i+1}) = D(r_i)$, where the last equality followed from the update of the distance function when $r_i$ was marked as predecessor.

If $r_i$ was marked during a non-reducing update, then $v_i\neq v_{i+1}$ and $S_i = S_{i+1}$
Since we can add $\{v_i,v_{i+1}\}$ to any path to $v_{i+1}$, we have
$\delta_{S_i}(s,v_i) \le \delta_{S_i}(s,v_{i+1}) + w(v_i,v_{i+1}) = \delta_{S_{i+1}}(s,v_{i+1}) + w(v_i,v_{i+1}) = D(r_{i+1}) + w(v_i,v_{i+1}) = D(r_i)$.

Finally, if $r_i$ was marked during a reducing update, then $v_i\neq v_{i+1}$ and $S_i = S_{i+1}\cup\{\{v_i,v_{i+1}\}\}$.
Again, we can add $\{v_i,v_{i+1}\}$ to any path to $v_{i+1}$. 
Hence,
$\delta_{S_i}(s,v_i) \le \delta_{S_{i+1}}(s,v_{i+1}) + \alpha w(v_i,v_{i+1})= D(r_{i+1}) + \alpha w(v_i,v_{i+1}) = D(r_i)$.

This completes the proof of correctness of \Cref{alg:dijkstra_mod}, as $S = S^1$ yields a path of cost $D(t,\beta)$.

We proceed with the running time.
Recall that $T_\text{dk}$ and $T_\text{em}$ represent the time costs of the \emph{decrease key} and \emph{extract minimum} operations, respectively.

Assume that the underlying graph $G$ has $m$ vertices and $\ell$ edges.
Initializing $D$ takes $\mathcal{O}(\beta m)$ time. 
The main loop to determine all entries of $D$ performs 
$\bigO{\beta m}$ 
extractions and at most one budget‐increase per extraction, costing $\bigO{\beta m T_\text{em}}$. 
Each edge is considered twice per budget level, yielding $\bigO{\beta m T_\text{dk}}$ over all reducing and non‐reducing updates. 
Since $\beta\le m$ and $m\le \ell\le m^{2}$, the total is 
\[
\bigO{\beta m T_{\mathrm{em}}+\beta \ell T_{\mathrm{dk}}}\text.
\]
With a Fibonacci heap ($T_{\mathrm{em}}=\bigO{\log(\beta m)}$, $T_{\mathrm{dk}}=\bigO{1}$ amortized)~\citep{thomas2009introduction} this becomes 
\[
\bigO{\beta m\log m + \beta\ell}\text.
\]
Finally, determining the set $S$ in the final part of the algorithm requires at most $\beta$ steps, which does not add significant running time.
\proofqed\end{proof}

\section{Fixed Number of Agents}\label{app:fixedN}

Next, we provide the proof for our result concerning a fixed number of~$2$ agents.

\TwoG*

\begin{proof}
As a preliminary remark, running Algorithm~\ref{alg:dijkstra_mod} yields, for each source–target pair, a budget mapping $\mu\colon[\beta]_{0}\to\mathbb{Q}_{0}^{+}$.  Given two such mappings $\mu_{1}$ and $\mu_{2}$, monotonicity in the budget implies that more budget never increases cost.  Hence for any binary operation $\circ\in\{+,\min\}$ one can compute $(\mu_{1}\circ\mu_{2})(\beta)=\min_{b\in[\beta]_{0}}\bigl(\mu_{1}(b)\,\circ\,\mu_{2}(\beta-b)\bigr)$ in $\mathcal{O}(\beta)$ time, and merging the two mappings into a new mapping of the same form takes $\mathcal{O}(\beta^{2})$ time.

By running the modified Dijkstra algorithm from every vertex $v$ to every vertex $u$, we obtain the mappings $\mu_{vu}$ for all $(v,u)$. 
An optimal solution to a 2-\GraphStar{} instance falls into one of two cases, depending on whether the two agents’ shortest paths intersect.
To obtain the optimal solution, we compute each of them and take the best.
Also, we compute the running time for each of these options. 
To obtain the total running time, we sum up the running time for all of these.

Let the agents be $(s,t)$ and $(s',t')$.
If their shortest paths are disjoint, then computing $(\mu_{st}+\mu_{s't'})(\beta)$ (utilitarian objective) or $(\max\{\mu_{st},\mu_{s't'}\})(\beta)$ (egalitarian objective) takes $\mathcal{O}(\beta)$.

Otherwise the intersection forms a connected sub-path from $p$ to $q$.
Then, one of the endpoints of this sub-path, say $p$ is either reached from $s$ and $s'$ or from $s$ and $t'$.
Since the computation of both cases, is totally symmetric, we focus on the former case.
Eventually, this only leads to a doubling in running time and the best solution where the agents' paths intersect is the one of lower cost among the two cases.

Assuming $p$ and $q$ are guessed, each agent’s path splits into three segments, so we first compute $\mu_{1}=\mu_{sp}+\mu_{qt}$ and $\mu_{2}=\mu_{s'p}+\mu_{qt'}$, which are the costs of reaching the two junctions where agents merge from their respective terminals.
Under the egalitarian objective we then set $\mu_{3}=\max\{\mu_{1},\mu_{2}\}$ and finally compute $\mu(\beta)=(\mu_{3}+\mu_{pq})(\beta)$; under the utilitarian objective we set $\mu_{3}=\mu_{1}+\mu_{2}$ and compute $\mu(\beta)=(\mu_{3}+2\,\mu_{pq})(\beta)$. 
For fixed $p$ and $q$, these merges cost $\mathcal{O}(\beta^{2})$, and accounting for the $m^2$ possibilities for selecting $p$ and $q$, we obtain $\mathcal{O}(m^2\beta^{2})$ for this step.

Putting everything together, the total running time consists of computing the budget mappings for all pairs of terminals and then merging them.
By \Cref{cor:budmap}, the former can be done in time $\mathcal{O}(\beta m^2\log m + m\beta \ell)$ so the total running time is $\mathcal{O}\bigl(\beta m^2\log m + m\beta \ell+\beta+m^{2}\beta^{2}\bigr) = \mathcal{O}\bigl(\beta m^2\log m + m\beta \ell+m^{2}\beta^{2}\bigr)$. 
\proofqed\end{proof}

Next, we prove the important structural lemma for our XP algorithm.

\PathConsistency*

\begin{proof}
    Consider a weighted graph $G = (V,E,w)$ and and a collection of $k$ terminal pairs $\{(s_i,t_i)\colon i\in [k]\}$, where $s_i,t_i\in V$.
    Let $$\epsilon := \min\{|w(F)-w(F')| \colon F, F'\subseteq E \text{ with } w(F) \neq w(F')\}\text.$$

    This is the smallest difference in weight of any two edge sets of different weight.
    Note that $\epsilon > 0$ if $G$ contains at least one edge $e$ with nonzero weight (take $F= \{e\}$ and $F' = \emptyset$), which we may assume without loss of generality.
    
    We define a perturbed weight function. 
    For this, enumerate edges arbitrarily, i.e., for $\ell = |E|$, let $E = \{e_1,\dots, e_\ell\}$.
    For $i\in [\ell]$, we set $w'(e_i) = w(e_i) + 2^{-i}\epsilon$.
    We next claim that shortest paths with respect to the perturbed weight functions are shortest paths with respect to the original weights.
    
    \begin{claim}\label{cl:SPpreserve}
        Let $u,v\in V$.
        If $P$ is a shortest $u$-$v$-path with respect to $w'$, then $P$ is a shortest $u$-$v$-path with respect to $w$.
    \end{claim}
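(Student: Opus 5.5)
We compare path weights under $w$ and $w'$ and use the defining property of $\epsilon$: any two edge sets whose $w$-weights differ do so by at least $\epsilon$, while the total perturbation of any edge set is strictly less than $\epsilon$.

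\textbf{Step 1: bound the perturbation.} For any $F\subseteq E$ we have
\[
0\le w'(F)-w(F)=\epsilon\sum_{e_i\in F}2^{-i}\le \epsilon\sum_{i=1}^{\ell}2^{-i}<\epsilon.
\]
So $w(F)\le w'(F)<w(F)+\epsilon$. (If all weights are zero, every path is a shortest path under $w$, the claim is trivial, and the definition of $\epsilon$ is not needed.)

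\textbf{Step 2: argue by contradiction.} Let $P$ be a shortest $u$-$v$-path with respect to $w'$. Suppose some $u$-$v$-path $Q$ satisfies $w(Q)<w(P)$. Since $E(Q)$ and $E(P)$ are edge sets of different $w$-weight, the definition of $\epsilon$ gives $w(P)-w(Q)\ge\epsilon$. Combining this with Step 1,
\[
w'(Q)<w(Q)+\epsilon\le w(P)\le w'(P).
\]
This contradicts the $w'$-minimality of $P$, so $P$ is a shortest path with respect to $w$.

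The only point needing care is the strict inequality $\sum_{i=1}^{\ell}2^{-i}<1$, which holds because the sum is finite. It ensures that the perturbation can never close a gap of size $\epsilon$.
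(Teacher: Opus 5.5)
Your proof is correct and follows the same route as the paper. Both arguments use that any two edge sets with different $w$-weights differ by at least $\epsilon$, while the total perturbation is strictly less than $\epsilon$, and conclude $w'(Q) < w'(P)$ from $w(Q) < w(P)$. You also state the contradiction hypothesis in the right direction, $w(Q) < w(P)$. The paper's write-up states $w(P') > w(P)$, which is evidently a sign slip, so your version is the cleaner rendering of the intended argument.
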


    \begin{claimproof}
        Let $u,v\in V$ and let $P$ be a shortest $u$-$v$-path with respect to $w'$.
        Assume for contradiction that there exists $P'\in P(u,v)$ with $w(P') > w(P)$.
        By our definition of $\epsilon$, it holds that $w(P') - w(P) \ge \epsilon$.
        Hence,
        $$w'(P') - w'(P) = \underbrace{w'(P') - w(P')}_{> 0 } 
        + \underbrace{w(P') - w(P)}_{\ge \epsilon} 
        + \underbrace{w(P) - w'(P)}_{> -\epsilon } > 0\text.$$
        There, we use that the sum of all perturbations to all edges is strictly less than $\epsilon$.
        Hence, $P$ is no shortest $u$-$v$-path with respect to $w'$, a contradiction.
    \claimqed\end{claimproof}

    Second, we observe that shortest $u$-$v$-paths with respect to $w'$ are unique.

    \begin{claim}\label{cl:SPunique}
        Let $u,v\in V$.
        Then there exists a unique shortest $u$-$v$-path with respect to $w'$.
    \end{claim}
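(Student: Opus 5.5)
The plan is to show that distinct $u$-$v$-paths have distinct perturbed weights; existence of a shortest path is trivial because $G$ is connected and finite, so uniqueness then follows at once. Fix $u,v\in V$ and let $P\neq P'$ be two $u$-$v$-paths. Since both are simple paths with the same endpoints and they differ, their edge sets differ, i.e., $E(P)\neq E(P')$. (Two distinct simple paths with the same endpoints cannot share the same edge set, because the edge set determines the path.)

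Write
\[
w'(P)-w'(P') = \bigl(w(P)-w(P')\bigr) + \epsilon\Bigl(\sum_{e_i\in E(P)}2^{-i}-\sum_{e_i\in E(P')}2^{-i}\Bigr)\text.
\]
Split into two cases. If $w(P)\neq w(P')$, then by definition of $\epsilon$ we have $|w(P)-w(P')|\ge\epsilon$. The perturbation term has absolute value at most $\epsilon\sum_{i=1}^{\ell}2^{-i}<\epsilon$, so the difference is nonzero. If $w(P)=w(P')$, the difference equals $\epsilon$ times the difference of two sums of distinct powers $2^{-i}$ over the distinct index sets $\{i\colon e_i\in E(P)\}$ and $\{i\colon e_i\in E(P')\}$. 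By uniqueness of binary expansions, finite sums of distinct negative powers of two over different index sets are different. Concretely, cancel the common indices and look at the smallest index in the symmetric difference: its term $2^{-j}$ strictly exceeds the sum of all larger-index terms. Since $\epsilon>0$, the difference is again nonzero.

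Hence $w'$ is injective on the finite set $P(u,v)$, so its minimum is attained by exactly one path. The only delicate point, and it is a minor one, is the strict bound $\sum_{i\le\ell}2^{-i}<1$ together with the binary-uniqueness argument; both are elementary.
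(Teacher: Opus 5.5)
Your proof is correct and uses the same key step as the paper: the smallest index $j$ in $E(P)\,\Delta\,E(P')$ contributes $2^{-j}$, which strictly dominates the sum of all larger-index perturbations. The only difference is organizational. The paper takes two $w'$-shortest paths and cites \Cref{cl:SPpreserve} to get $w(P)=w(P')$. You instead inline that $\epsilon$-bound as your first case, which gives the slightly stronger fact that $w'$ is injective on all $u$-$v$-paths, and you also make the existence part explicit via connectivity, which the paper leaves implicit.
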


    \begin{claimproof}
        Let $u,v\in V$ and assume for contradiction that $P$ and $P'$ are two different shortest $u$-$v$-paths with respect to $w'$.
        Let $D = E(P)\Delta E(P')$ be the symmetric difference of edges between $P$ and $P'$, i.e., the set of edges contained in exactly one of the two paths.
        Let $i^* = \min\{i\in [\ell]\colon e_i\in D\}$, i.e., the smallest index of an edge in the symmetric difference.
        Without loss of generality, we assume that $e_{i^*}\in E(P)\setminus E(P')$.
        By \Cref{cl:SPpreserve}, it holds that $P$ and $P'$ are shortest $u$-$v$-paths with respect to $w$. 
        Hence $w(P) - w(P') = 0$
        As a result,
        \begin{align*}
            w'(P) - w'(P') &= \left(w(P) + \sum_{i\in [\ell]\colon e_i\in E(P)} 2^{-i}\epsilon\right) - \left(w(P') + \sum_{i\in [\ell]\colon e_i\in E(P')} 2^{-i}\epsilon\right) \\
            & = \epsilon\left(\sum_{i\in [\ell]\colon e_i\in E(P)\setminus E(P')} 2^{-i} - \sum_{i\in [\ell]\colon e_i\in E(P')\setminus E(P)} 2^{-i}\right)\\
            & > \epsilon \left(2^{-i^*} - \sum_{i\ge i^* + 1}^{\infty} 2^{-i}\right) = 0\text.
        \end{align*}
        This contradicts that $P$ and $P'$ have the same weight with respect to $w'$.
    \claimqed\end{claimproof}

    We are ready to conclude the proof.
    For every $i\in [k]$ let $P_i$ be a shortest $s_i$-$t_i$-path with respect to $w'$.
    By \Cref{cl:SPpreserve}, $P_i$ is a shortest $s_i$-$t_i$-path with respect to $w$.
    Hence, our first property is satisfied.
    Moreover, by \Cref{cl:SPunique}, the shortest $s_i$-$t_i$-paths with respect to $w'$ are unique.
    By \Cref{lem:UniqueConSP}, the collection $\{P_i\colon i\in [k]\}$ is consistent.
\proofqed\end{proof}

Next, we prove our W[1]-hardness result.

\WoneGraph*

\begin{proof}
    Let $\alpha\in [0,1)$.
    We provide a parameterized reduction from \MCClique.
    An instance consists of a tuple $\langle G, k\rangle$, where $G = (V,E)$ is a graph together with a proper coloring $(V_1,\dots, V_k)$ of the vertices $V$, and $k$ is a positive integer.
    An instance is a Yes-instance if there exists a clique of size $k$ with one vertex in each color class.
    \MCClique{} is known to be W[1]-hard parameterized by $k$ \citep{FHRV09a}.

    Given an instance $\langle G, k\rangle$ of \MCClique{} with graph $G = (V,E)$ and vertex coloring $(V_1,\dots, V_k)$, we define a reduced instance $\bic = \langle G', \mathcal{A}, \alpha, \beta \rangle$ of \Graph{} as follows:

    Let $G' = (V',E',w)$ be the underlying graph defined by 
    \begin{itemize}
        \item $V' = V \cup \{v_c\colon c\in [k]\}$, i.e., we copy the original vertices and a vertex for each color class,
        \item $E' = E \cup \bigcup_{c\in [k]}\{\{v_c,w\}\colon w\in V_c\}$, i.e., we copy the original edges and edges from the dedicated color class vertices to all vertices in $V$ of their color class, and 
        \item $w(e) = \begin{cases}
            1 &\text{if }e\in E,\\
            k &\text{else.}
        \end{cases}$, i.e., original edges have weight~$1$ and the newly added edges connecting color vertices to their color class have a weight of~$k$. 
    \end{itemize}

    We have one agent for each pair of color classes with terminals corresponding to their respective color class vertices, i.e., $\mathcal A = \{\{(v_c,v_{c'})\colon 1\le c < c'\le k\}\}$.
    Finally, we set $\beta = k$.
    Note that the number of agents is $\binom{k}{2}$, and hence it is a computable function with respect to $k$.
    Similarly, the budget of $k$ is a computable function with respect to $k$.
    Hence, the reduction is a valid parameterized reduction.

    It remains to prove correctness.
    We claim that the following are equivalent:
    \begin{enumerate}
        \item The original instance $\langle G, k\rangle$ contains a clique with one vertex in every color class.\label{it:OrigClique}
        \item The reduced instance contains a feasible solution $F\subseteq E'$ with $c\egsup(F) \le 2k\alpha +1$.\label{it:EgalCost}
        \item The reduced instance contains a feasible solution $F\subseteq E'$ with $c\utsup(F) \le (2k\alpha + 1) \binom{k}{2}$.\label{it:UtilCost}
    \end{enumerate}

    Clearly, (\ref{it:EgalCost}) implies (\ref{it:UtilCost}).
    We proceed with the remaining implications.

    \noindent\textbf{(\ref{it:OrigClique}) $\Rightarrow$ (\ref{it:EgalCost})} 
    Assume that $G$ contains a clique $C$ with one vertex in every color class.
    For $c\in [k]$, denote by $w_c$ the vertex in $C\cap V_c$.
    Consider $F = \{\{v_c,w_c\}\colon c\in [k]\}$.
    Clearly, $|F|= k$, so $F$ is feasible.
    
    Now, let $1\le c < c' \le k$ and consider the agent $a = (v_c,v_{c'})$.
    Then, $\delta_F(a) \le 2k\alpha +1$, as the agent can travel from $v_c$ to $w_c$ for a discounted cost of $k\alpha$, then to $w_{c'}$ for a cost of~$1$ (this edge exists as $C$ is a clique), and then from $w_{c'}$ to $v_{c'}$ for a discounted cost of $k\alpha$.
    Hence, $c\egsup(F) \le 2k\alpha + 1$.

    \noindent\textbf{(\ref{it:UtilCost}) $\Rightarrow$ (\ref{it:OrigClique})}
    Let $F\subseteq E'$ be a feasible solution with $c\utsup(F) \le (2k\alpha + 1) \binom{k}{2}$.
    We first claim that for every $c\in [k]$ there exists an edge $f\in F$ with $v_c\in f$.
    Assume for contradiction that for some $c\in [k]$ no such edge is in $F$.
    Then, each of the $k$ agents having $v_c$ as a terminal has to use a nondiscounted edge from $v_c$ and then has to use at least one edge in $E$ and one edge incident to their second terminal.
    Hence, if $a$ is such an agent, then $\delta_F (a)\ge k + (1+k)\alpha$.
    Moreover, for any agent $a'$ it clearly holds that $\delta_F(a') \ge (1+2k)\alpha$.
    Thus,

    \begin{align*}
        c\utsup(F) &\ge k \left[ k + (1+k)\alpha\right] + \left[\binom{k}{2} - k\right](1+2k)\alpha\\
        & = k[(2k\alpha+1) + (1-\alpha)(k-1)]  + \left[\binom{k}{2} - k\right][(2k\alpha+1) - (1-\alpha)]\\
        & = (2k\alpha+1)\binom{k}{2} + (1-\alpha)\left[k(k-1) - \left[\binom{k}{2} - k\right]\right] \\
        & = (2k\alpha+1)\binom{k}{2} + (1-\alpha)\left[\frac 12\binom{k}{2} + k\right]\\
        & > (2k\alpha+1)\binom{k}{2}\text.
    \end{align*}

    The strict inequality follows because $\alpha\in [0,1)$. 
    However, this contradicts that $c\utsup(F)\le (2k\alpha+1)\binom{k}{2}$.
    Hence, for every $c\in [k]$ there indeed exists an edge $f\in F$ with $v_c\in f$.
    Let $w_c\in V_c$ such that $f = \{v_c,w_c\}$, i.e., $w_c$ is the other endpoint of the edge in $F$ incident to $v_c$.

    Consider $C = \{v_c\colon c\in [k]\}$.
    Clearly, $C$ contains one vertex of each color class.
    We claim that $C$ forms a clique in $G$.

    Since there are $k$ edges in $F$ incident to color class vertices, the feasibility of $F$ implies that $F\cap E = \emptyset$, i.e., $F$ contains none of the edges corresponding to the ones of the source instance.
    Let $1\le c < c' \le k$ and consider the agent $a = (c,c')$.
    If $a$ does not travel along $\{v_c,w_c\}$ or $\{v_{c'},w_{c'}\}$, then her travel cost is at least $k + 1 + k\alpha > 2k\alpha + 1$.
    If they use both these edges, then their cost is $2k\alpha + 1$ if $\{w_c,w_{c'}\}\in E$ and strictly higher if $\{w_c,w_{c'}\}\notin E$.
    Hence, for each agent, regardless of their route, their minimum travel cost is $2k\alpha + 1$ if $\{w_c,w_{c'}\}\in E$ and strictly higher if $\{w_c,w_{c'}\}\notin E$.
    We conclude that, $c\utsup(F) = (2k\alpha + 1) \binom{k}{2}$ if $C$ is a clique and strictly higher otherwise.
    Since we assumed that $c\utsup(F) \le (2k\alpha + 1) \binom{k}{2}$, it must be the case that $C$ forms a clique.
\proofqed\end{proof}

\HubTree*

\begin{proof}
    Let everything be given as in the statement of the lemma and assume that $\{P_i\colon i\in [k]\}$ is consistent.
    Since the paths connect all vertices with $t$, we know that $(V_T,E_T)$ is connected.
    Assume for contradiction that $(V_T,E_T)$ contains a cycle $C$
    and let $i\in [k]$ so that the number of edges shared by $C$ and $P_i$ is maximized.
    Traversing $P_i$ from $s_i$ to $t$, let $v_1$ be the first vertex of $P_i$ that is on $C$ and $v_2$ be the last vertex of $P_i$ that is on $C$.
    Then, as $v_1$ is the first vertex of $P_i$ on $C$, at most one of the edges on $C$ incident to $v_1$ lies on $P_i$.
    Let $e_1$ be an edge on $C$ incident to $v_1$ that is not on $P_i$ and let $v_1'$ such that $e_1 = \{v_1,v_1'\}$.
    Let $j\in [k]$ with $e_1\in E(P_j)$.
    We will now derive a contradiction by making a case distinction based on the direction of traversal of $e_1$ by $P_j$.
    
    Assume first that $P_j$ traverses $e_1$ from $v_1$ to $v_1'$.
    Then $P_j$ shares with $P_i$ a segment that ends at $v_1$ (as $P_i$ does not traverse $e_1$. 
    However, as both paths terminate at $t$, they must share another segment later on.
    This contradicts consistency.

    Second, assume that $P_j$ traverses $e_1$ from $v_1'$ to $v_1$.
    Then, a segment of $P_j$ shared with $P_i$ starts at $v_1$. 
    This segment has to end before $P_1$ reaches $v_2$.
    Otherwise, $P_j$ traverses all edges on $C$ traversed by $P_i$ as well as $e_1$, contradicting the maximality assumption on $P_i$.
    However, as in the first case, there must be second segment shared by $P_j$ and $P_i$ ending in $t$. 
    Hence, we also reach a contradiction with consistency.

    We conclude that $(V_T,E_T)$ contains no cycle and, therefore, is a tree.
\proofqed\end{proof}

Next, we complete the proof of \Cref{thm:XPfixedK} by computing the running time of \Cref{alg:XPagents}.

\XPfixedK*

\begin{proof}[running time]
    We proceed with estimating the running time.
    First, by \Cref{cor:budmap}, we can obtain all budget mappings in time $\bigO{\beta m^2\log m + m\beta \ell} \le \bigO{\beta m^3}$.
    
    Next, we bound the number of (combinations of) guesses in lines~\ref{ln:junctions}, \ref{ln:internaljunctions}, and \ref{ln:BudgetToTravel} of \Cref{alg:XPagents}.
    There are at most $\binom{m}{2k^2}\le m^{2k^2}$ possibilities for $J$.
    For every $J$ and every agent, there are at most 
    $$ 1 + \sum_{i = 1}^{2k} \prod_{j = 1}^i (2k^2 - j + 1) \le \sum_{i = 0}^{2k}\left(2k^2\right)^i\le 2 \left(2k^2\right)^{2k} \le (2k)^{4k}$$
    guesses for the junction vertices of an agent.
    There the first sum consists of the one case of having no junction vertices plus the cases for choosing an ordered tuple of up to $2k$ junction vertices for a set of $2k^2$ elements.
    Since there are $k$ agents, we have a total of at most $(2k)^{4k^2}$ possibilities to choose the junction vertices corresponding to paths.

    Finally, we allocate a budget of $\beta$ to a set of at most $(2k+1)\cdot k$ meta edges.
    The number of possibilities for this can be bounded by the number of possibilities to draw an unordered sample of $\beta$ elements from a set of $(2k+1)\cdot k$ elements.
    For this there are $\binom{(2k+1)\cdot k + \beta - 1}{\beta}$ many possibilities.
    To bound this, observe that for $\beta \ge 1$, it holds that 
    \begin{equation*}
        \prod_{i = 1}^{N-1} (\beta + i) \le \prod_{i = 1}^{N-1} \beta(i + 1) = \beta^{N-1}N!
    \end{equation*}
    Using this for $N = (2k+1)\cdot k$, we obtain
    \begin{align*}
        &\phantom{=}\binom{(2k+1)\cdot k + \beta - 1}{\beta} 
        = \frac{[(2k+1)\cdot k + \beta - 1]!}{[(2k+1)\cdot k - 1]!\beta!}
         = \frac{\prod_{i = 1}^{(2k+1)\cdot k - 1} (\beta + i)}{[(2k+1)\cdot k - 1]!}\\
         &\le \frac{\beta^{(2k+1)\cdot k - 1} [(2k+1)\cdot k]!}{[(2k+1)\cdot k - 1]!}
         = (2k+1) k \beta^{(2k+1)\cdot k - 1} \le (2k)^2\beta^{2k^2}\text.
    \end{align*}

    Hence, in total, we have to consider at most
    \begin{equation*}
        m^{2k^2}  (2k)^{4k^2}(2k)^2 \beta^{2k^2} = (2k)^{4k^2+2}(m\beta)^{2k^2}
    \end{equation*}
    many combinations of guesses.
    For each of them, we have to perform at most $2k+1$ look-ups in our budget maps for each agent, consuming time \bigO{k^2}. We then aggregate these values to a utilitarian or egalitarian cost in time \bigO{k^2}.
    Finally, we have to find the minimum value among all guesses. For this, we can always store the current minimum value, and then compare (and potentially replace) it whenever we have a value for a new guess. This only adds constant time to the operations for each guess.

    In total we obtain a running time of $$\bigO{\beta m^3 + (2k)^{4k^2+2}(m\beta)^{2k^2}k^2} = \bigO{(2k)^{4k^2+4}(m\beta)^{2k^2}}\text.$$
    This completes the proof.
\proofqed\end{proof}

\FPThub*

\begin{proof}
    Consider an instance $\bic = \gbpins$ of \GraphHub{} with $G = (V,E,w)$ and hub $t\in V$, i.e., $\ac \sqsubseteq V\times \{t\}$.
    For an agent $a = (s,t)\in \ac$, we refer to $s$ as their \emph{individual terminal}.
    Given an agent $a$, we denote their individual terminal by $s_a$.
    We will solve \GraphHubStar{} via dynamic programming.
    As a preprocessing step, we run \Cref{alg:dijkstra_mod} to obtain budget mappings $\mu_{u,v}\colon [\beta]_0\to \mathbb Q_0^+$ for every pair of vertices $u,v\in V$ with $u\neq v$. 
    Moreover, for each $u\in V$, we define $\mu_{u,u}\colon [\beta]_0\to \mathbb Q_0^+$ by $\mu_{u,u}(b) = 0$ for all $b\in [\beta]_0$. 
        
    We now define the table for our dynamic program. 
    For every subset of agents $A\subseteq \ac$ with $A\neq \emptyset$, vertex $v\in V$, and budget $b\in [\beta]_0$, let $\Dut[A,v,b]$ (resp., $\Deg[A,v,b]$) denote the minimum possible utilitarian (resp., egalitarian) travel cost when all agents in $A$ travel from their respective individual terminals to $v$ using at most $b$ upgraded edges.
    As usual, we use the notation $\Dstar[A,v,b]$ to refer to either of the tables.
    It is easy to fill the table for sets $A$ consisting of single agents as these values are captured in our budget mappings obtained during our preprocessing.
    Specifically, for $a\in A$, $v\in V$, and $b\in [\beta]_0$, we set 
    \begin{equation}\label{eq:init}
        \Dstar[\{a\},v,b] = \mu_{s_a,v}(b)\text.
    \end{equation}

    Now, assume we are given an arbitrary tuple $(A,v,b)$ where $A\subseteq \ac$ with $|A|\ge 2$, $v\in V$, and $b\in [\beta]_0$.
    The crucial insight is how to recursively fill the table for larger agent sets.
    The idea is that, like in the proof of \Cref{thm:XPfixedK}, we consider optimal solutions where agents' paths to $v$ are consistent.
    By \Cref{lem:HubTree}, we know that there union forms a tree.
    Now, suppose we root the tree at $v$.
    We now trace the tree until we find a closest descendant $u$ of $v$ that is either an individual terminal of an agent in $A$ or has degree at least $2$ (it is possible that $u=v$.
    If $u$ is an individual terminal of agent $a\in A$, then the cost of all agents in $A$ traveling to $v$ splits into $A\setminus \{a\}$ traveling to $s_a$ and from there all agents travel to $v$ together.
    If $u$ is no individual terminal but of degree at least two, then we split $A$ into two nonempty subsets and consider their respective travel to $u$, and all agents traveling from $u$ to $v$ together.
    We obtain the update by considering an optimal cost split for the individual parts of travel.
    We capture the update in a claim.
    Since the update formulas differ based on a utilitarian and egalitarian cost aggregation, we provide separate claims for both cases.
    We start with the utilitarian case.

    \begin{claim}
        Let $A\subseteq \ac$ with $|A|\ge 2$, $v\in V$, and $b\in [\beta]_0$.
        Define 

        {\small
        \begin{align*}
            \Dut_1[A,v,b] &:= \min_{a\in A}\min_{b'\in[b]_0} \bigg[\Dut[A\setminus\{a\},s_a,b] + |A|\cdot\mu_{s_a,v}(b - b')\bigg]\\
            \Dut_2[A,v,b] &:= \min_{\emptyset \neq A'\subsetneq A} \min_{u\in V}\min_{b'\in[b]_0}\min_{b''\in [b']_0} \bigg[\Dut[A',u,b''] + \Dut[A\setminus A',u, b'-b''] + |A|\cdot \mu_{u,v}(b - b')\bigg]\\
        \end{align*}
        }
        
        Then, $\Dut[A,v,b] = \min \{\Dut_1[A,v,b], \Dut_2[A,v,b]\}$.
    \end{claim}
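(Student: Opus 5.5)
We prove the two inequalities $\Dut[A,v,b] \le \min\{\Dut_1[A,v,b],\Dut_2[A,v,b]\}$ and $\Dut[A,v,b] \ge \min\{\Dut_1[A,v,b],\Dut_2[A,v,b]\}$ separately. We read the first term of $\Dut_1$ as $\Dut[A\setminus\{a\},s_a,b']$, i.e., with budget $b'$ rather than $b$, so that the budget is split between the part before $s_a$ and the part after it. We also use that all tables and budget mappings are non-increasing in the budget, which follows from the budget-increasing update in \Cref{alg:dijkstra_mod}.

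\emph{Upper bound (achievability).} Fix the minimizers in $\Dut_1$: an agent $a$ and a budget $b'$. Take an optimal edge set $F_1$ with $|F_1|\le b'$ witnessing $\Dut[A\setminus\{a\},s_a,b']$. Take an edge set $F_2$ with $|F_2|\le b-b'$ witnessing $\mu_{s_a,v}(b-b')$; it exists by \Cref{theo:1-graph}. Then $F=F_1\cup F_2$ has $|F|\le b$. Under $w_F$, every agent in $A\setminus\{a\}$ can walk to $s_a$ at cost at most its share of the first term and then continue to $v$ at cost at most $\mu_{s_a,v}(b-b')$. Agent $a$ only pays the second part. Since adding upgraded edges never increases a distance, summing over agents gives total cost at most $\Dut_1[A,v,b]$. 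Here we pay $|A|\cdot\mu_{s_a,v}(b-b')$, which is an overestimate since $a$ pays this amount once, not extra. The same argument works for $\Dut_2$, using three sets $F',F'',F_3$ of sizes at most $b''$, $b'-b''$ and $b-b'$, and routing both groups through $u$. Walks may repeat vertices, but their cost still bounds the shortest-path cost from above.

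\emph{Lower bound (decomposition).} Let $F$ with $|F|\le b$ attain $\Dut[A,v,b]$. Apply \Cref{lem:PathConsistency} to $(V,E,w_F)$ with the terminal pairs $(s_a,v)$, $a\in A$, to obtain consistent shortest paths $P_a$. By \Cref{lem:HubTree}, with $v$ playing the role of the hub, their union is a tree $T$; we root $T$ at $v$. Starting from $v$, we walk down $T$ as long as the current vertex is not an individual terminal of an agent in $A$ and has exactly one child. We stop at a vertex $u$, possibly $u=v$. Every $P_a$ contains the chain $Q$ from $u$ to $v$, since all leaves of $T$ are individual terminals and lie below $u$. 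Let $b_Q=|F\cap E(Q)|$ and let $b'=b-b_Q$.

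We now distinguish two cases. If $u=s_a$ for some $a\in A$, the segments of $P_{a'}$ for $a'\ne a$ up to $s_a$ lie in the subtree below $u$. This subtree is edge-disjoint from $Q$, so it contains at most $b'$ edges of $F$. Therefore the sum of these segment costs is at least $\Dut[A\setminus\{a\},s_a,b']$. Each of the $|A|$ agents pays $w_F(Q)\ge\mu_{u,v}(b_Q)$ on the chain, so the total cost is at least the $\Dut_1$ term. Otherwise $u$ has at least two children. Let $A'$ be the agents whose terminals lie below one fixed child. The subtrees below that child and below the remaining children are edge-disjoint, and both are disjoint from $Q$. They therefore receive budgets $b''$ and $b'-b''$ with $b''+(b'-b'')\le b'$, and the total cost is at least the $\Dut_2$ term at $(A',u,b',b'')$. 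Monotonicity absorbs the cases where the split uses less than the full budget. If $|A|=1$ at $u$ cannot occur here because $|A|\ge 2$ and $u$ is not a terminal, both groups are nonempty. In both cases $\Dut[A,v,b]\ge\min\{\Dut_1,\Dut_2\}$.

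The main obstacle is the decomposition step. We must argue that the structure from \Cref{lem:HubTree} really makes the pieces (the subtrees and the chain $Q$) pairwise edge-disjoint, so that the budget splits additively. We also have to handle degenerate situations: $u=v$; several agents sharing the terminal $s_a$, which is harmless for the multiset $\ac$ since we remove only one copy; and zero-weight edges. We address these by working with the perturbed consistent shortest paths from \Cref{lem:PathConsistency}, which ensure that $T$ is a genuine tree.
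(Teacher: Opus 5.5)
Your proof is correct and follows the same route as the paper. The upper bound combines sub-solutions whose budgets sum to at most $b$; the lower bound takes an optimal $F$, obtains consistent shortest paths to $v$ whose union is a tree by \Cref{lem:HubTree}, and splits at the first terminal or branching vertex below $v$ into the $\Dut_1$ and $\Dut_2$ cases. Reading the first term of $\Dut_1$ with budget $b'$, and rooting the tree at $v$ rather than $t$, are exactly what the paper's proof intends despite the typos in its formulas; note also that the factor $|A|$ in $\Dut_1$ is exact rather than an overestimate, which does not affect your argument.
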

    \begin{claimproof}
        Let $A\subseteq \ac$ with $|A|\ge 2$, $v\in V$, and $b\in [\beta]_0$.
        We will first prove that $\Dut[A,v,b] \le \Dut_1[A,v,b]$.
        Therefore, let $a\in A$ and $b'\in [b]_0$.
        Then the cost for all agents traveling from their individual terminals to $v$ when upgrading $b$ edges is at most the cost of all agents in $A\setminus \{a\}$ traveling to $s_a$ using at most $b'$ upgraded edges plus $|A|$ times the cost for a single agent traveling from $s_a$ to $v$ using the remaining budget of $b-b'$ for upgrading edges.
        Hence, $\Dut[A,v,b]\le \Dut[A\setminus\{a\},s_a,b] + |A|\cdot\mu_{s_a,v}(b - b')$.
        Taking a minimum over all $a\in A$ and $b'\in [b]_0$ implies $\Dut[A,v,b] \le \Dut_1[A,v,b]$.

        Next, we prove $\Dut[A,v,b] \le \Dut_2[A,v,b]$.
        Therefore, consider any split of $A$ into $A'$ and $A\setminus A'$ where $\emptyset \neq A'\subsetneq A$.
        Consider any vertex $u\in V$ and budgets $b'\in [b]_0$, $b''\in [b']_0$.
        We now consider the cost when the agents in $A'$ and $A\setminus A'$ each travel to $u$ with their own budget, and then all agents travel from $u$ to $v$ together with the remaining budget.
        Therefore, we split the budget $b$ into three parts according to $b'$ and $b''$, and allocate them to the part traveled only by $A'$, the one only by $A\setminus A'$, and the part together.
        Since this constitutes one way of all agents in $A$ traveling from their individual terminals to $v$, the cost for all agents in $A$ traveling to $v$ with budget $b$ optimally is bounded by the sum of these costs, i.e.,
        $\Dut[A,v,b]\le \Dut[A',u,b''] + \Dut[A\setminus A',u, b'-b''] + |A|\cdot \mu_{u,v}(b - b')$.
        By taking the minimum over all possibilities, we obtain $\Dut[A,v,b] \le \Dut_2[A,v,b]$.
        Together, with the first inequality, we have shown that $\Dut[A,v,b] \le \min \{\Dut_1[A,v,b], \Dut_2[A,v,b]\}$.

        To show the reverse inequality, we consider a set $F\subseteq E$ of at most $b$ edges such that the total travel time of all agents in $A$ from their individual terminals to $v$ when upgrading at most $b$ edges is minimized.
        Let $w_F\colon E \to \mathbb Q_0^+$ with $$w_F(e) = \begin{cases}
        \alpha \cdot w(e) & e\in F\\
        w(e) & e\in E\setminus F
        \end{cases}\text.$$
        By \Cref{lem:PathConsistency}, there exists a consistent collection of paths $\{P_a\colon a\in A\}$ such that for each $a\in A$, $P_a$ is a shortest $s_a$-$t$-path on $(V,E,w_F)$.
        Since they are all shortest paths and $F$ minimizes total travel time, we have that 
        \begin{equation}\label{eq:PathsCost:ut}
            \Dut[A,v,b] = \sum_{a\in A} w_F(P_a)\text.
        \end{equation}
        
        By \Cref{lem:HubTree}, we know that the union of these paths forms a tree $T$.
        We root the tree at $t$ and let $u$ be the first descendant of $t$ that is either an individual terminal or has degree at least $2$.
        Clearly, such a vertex exists because $T$ contains all individual terminals.

        Assume first that $u$ is an individual terminal, say $u = s_a$ for $a\in A$, and let $b'\in [b_0]$ be such that $b-b'$ is the number of edges in $F$ on the unique path on $T$ from $s_a$ to $t$.
        Since the paths use $b - b'$ budget to travel from $s_a$ to $t$ and at most $b'$ budget to reach $s_a$ for all agents in $A\setminus \{a\}$, the weight of all paths is at least $\Dut[A\setminus\{a\},s_a,b] + |A|\cdot\mu_{s_a,v}(b - b')$.
        Combining this with \Cref{eq:PathsCost:ut}, we obtain
        \begin{align*}
            \Dut[A,v,b] &\ge \Dut[A\setminus\{a\},s_a,b] + |A|\cdot\mu_{s_a,v}(b - b')\\
            &\ge \Dut_1[A,v,b] \ge \min \{\Dut_1[A,v,b], \Dut_2[A,v,b]\}\text.
        \end{align*}

        Now assume that $u$ is a vertex of degree at least $2$ on $T$ and let $e = \{u,u'\}$ be an edge of $T$ such that $u'$ is a descendant of $u$.
        Let $T'$ be the subtree of $T$ containing $u'$ and all of $u'$s descendants.
        Let $A' = \{a\in A\colon s_a\in V(T')\}$, i.e., $A'$ is the subset of agents whose individual terminals are in $T'$.
        First, we observe that $A'\neq \emptyset$ as every agent $a$ such that $u'\in V(P_a)$, i.e., every agent whose path passes through $u'$, must have their individual terminal in $T'$.
        Similarly, as there must be agents with individual terminals in the subtrees of other descendants of $u$, we know that $A'\neq A$.
        Now let $b'\in [b]_0$ be such that $b-b'$ is the number of edges in $F$ on the unique path on $T$ from $u$ to $t$.
        Hence, we have at most $b'$ edges on $F$ on the subtree of $T$ contaning $u$ and all its descendants.
        Let $b''\in [b']_0$ be the number of upgraded edges in $E(T')\cup \{e\}$.
        
        Now, by the choice of $u$, there is no individual terminal on $T$ before $u$, and hence all agents paths include a sub-path from $u$ to $t$ with $b-b'$ upgraded edges.
        Moreover, the agents in $A'$ and $A\setminus A'$ travel from their individual terminals to $u$ with a budget of $b''$ and at most $b'-b''$, respectively.
        Hence, the weight of all paths is at least $\Dut[A',u,b''] + \Dut[A\setminus A',u, b'-b''] + |A|\cdot \mu_{u,v}(b - b')$.
        Inserting this into \Cref{eq:PathsCost}, we obtain
        \begin{align*}
            \Dut[A,v,b] &\ge \Dut[A',u,b''] + \Dut[A\setminus A',u, b'-b''] + |A|\cdot \mu_{u,v}(b - b') \\
            &\ge \Dut_2[A,v,b] \ge \min \{\Dut_1[A,v,b], \Dut_2[A,v,b]\}\text.
        \end{align*}

        Together, we have shown that $\Dut[A,v,b] = \min \{\Dut_1[A,v,b], \Dut_2[A,v,b]\}$, which concludes the proof of the claim.
    \claimqed\end{claimproof}

    Next, we consider the recursion for an egalitarian total cost and provide the analogous proof.

    \begin{claim}
        Let $A\subseteq \ac$ with $|A|\ge 2$, $v\in V$, and $b\in [\beta]_0$.
        Define 
        {\small
        \begin{align*}
            \Deg_1[A,v,b] &:= \min_{a\in A}\min_{b'\in[b]_0} \bigg[\Deg[A\setminus\{a\},s_a,b] + \mu_{s_a,v}(b - b')\bigg]\\
            \Deg_2[A,v,b] &:= \min_{\emptyset \neq A'\subsetneq A} \min_{u\in V}\min_{b'\in[b]_0}\min_{b''\in [b']_0} \bigg[\max\bigg\{\Deg[A',u,b''], \Deg[A\setminus A',u, b'-b'']\bigg\} + \mu_{u,v}(b - b')\bigg]\\
        \end{align*}
        }

        Then, $\Deg[A,v,b] = \min \{\Deg_1[A,v,b], \Deg_2[A,v,b]\}$.
    \end{claim}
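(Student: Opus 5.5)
Both inequalities can be proved by mirroring the utilitarian claim. The only change is that aggregation uses $\max$ instead of a sum weighted by $|A|$. The key fact making this work is that a common additive term commutes with the maximum: if every agent in a group shares a terminal sub-path of cost $c$, the group's egalitarian cost is $\max_a(x_a + c) = \max_a x_a + c$.

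\textbf{Upper bound ($\le$).} Take any $a\in A$ and $b'\in[b]_0$. Choose an edge set witnessing $\Deg[A\setminus\{a\},s_a,b']$ (I read the first argument's budget as $b'$, matching the intended split). Take its union with an optimal edge set for the single-agent problem from $s_a$ to $v$ with budget $b-b'$, which exists by \Cref{theo:1-graph}. The union has at most $b$ edges.
\begin{itemize}
\item Agent $a$ travels directly with cost at most $\mu_{s_a,v}(b-b')$.
\item Every other agent travels to $s_a$ and then on to $v$.
\end{itemize}
So the maximum cost is at most $\max\{\mu_{s_a,v}(b-b'),\,\Deg[A\setminus\{a\},s_a,b']+\mu_{s_a,v}(b-b')\}$. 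Since $\Deg\ge 0$, this equals the second term.

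Similarly, for a split $A', A\setminus A'$, a vertex $u$ and budgets $b''\le b'\le b$, take the union of three edge sets. These are witnesses for $\Deg[A',u,b'']$ and $\Deg[A\setminus A',u,b'-b'']$, and an optimal single-agent set from $u$ to $v$ with budget $b-b'$. Every agent's cost is at most its cost to $u$ plus $\mu_{u,v}(b-b')$, so the maximum is bounded by the $\Deg_2$ expression. Extra discounted edges only help, since costs are monotone in the edge set.

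\textbf{Lower bound ($\ge$).} Fix an optimal $F$ with $|F|\le b$ and form $w_F$. Apply \Cref{lem:PathConsistency} to obtain consistent shortest paths $P_a$ from $s_a$ to $v$, and \Cref{lem:HubTree} with $v$ as hub to see that their union is a tree $T$ rooted at $v$. As in the utilitarian claim, let $u$ be the first vertex below $v$ that is an individual terminal or has degree at least two. If no such vertex is found before a terminal, case 1 applies.

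\emph{Case 1: $u=s_a$.} The segment from $s_a$ to $v$ lies on every path and contains $b-b'$ edges of $F$. Its cost is at least $\mu_{s_a,v}(b-b')$. The remaining agents reach $s_a$ through the other parts of $T$, which contain at most $b'$ edges of $F$; hence their maximal cost to $s_a$ is at least $\Deg[A\setminus\{a\},s_a,b']$. Because all agents share the final segment, the maximum cost is the maximum of the prefixes plus the shared segment cost. This yields $\Deg[A,v,b]\ge\Deg_1[A,v,b]$.

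\emph{Case 2: $u$ has degree at least two.} Pick a child subtree $T'$ and define $A'$ as the agents whose terminals lie in it. Both $A'$ and $A\setminus A'$ are nonempty. The edge sets of $F$ in $T'$ plus its connecting edge, and in the rest of the subtree below $u$, are disjoint, with $b''$ and at most $b'-b''$ edges respectively. Each group's maximum cost to $u$ is therefore bounded below by the corresponding table entry. Since all agents share the $u$–$v$ segment, the overall maximum is $\max\{\cdot,\cdot\}+\mu_{u,v}(b-b')$ at least.

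The main obstacle is the disjointness and accounting of budgets in the tree decomposition. In particular, one must ensure that the sub-paths to $u$ (or $s_a$) are valid witnesses for the subproblems, i.e., that their discounted edges lie in the respective disjoint parts of $T$. This follows from $T$ being a tree: the parts are edge-disjoint subtrees. One also uses monotonicity of $\Deg$ in the budget to absorb the inequality "at most $b'-b''$".
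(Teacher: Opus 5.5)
Your proposal is correct and follows essentially the same route as the paper. The upper bound glues sub-solutions whose budgets sum to $b$, and the lower bound takes an optimal $F$, builds a consistent shortest-path tree via \Cref{lem:PathConsistency} and \Cref{lem:HubTree}, and splits at the first terminal or branching vertex, using that the shared $u$--$v$ segment adds the same amount to every agent's cost. Two of your readings correct typos in the paper's version: taking the first budget in $\Deg_1$ as $b'$ (the literal $b$ would allow a total budget of up to $2b$ and make the equality false), and rooting the tree at $v$ rather than $t$.
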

    \begin{claimproof}
        Let $A\subseteq \ac$ with $|A|\ge 2$, $v\in V$, and $b\in [\beta]_0$.
        We will first prove that $\Deg[A,v,b] \le \Deg_1[A,v,b]$.
        Therefore, let $a\in A$ and $b'\in [b]_0$.
        Then the egalitarian cost for all agents traveling from their individual terminals to $v$ when upgrading $b$ edges is at most the maximum cost among all agents in $A\setminus \{a\}$ traveling to $s_a$ using at most $b'$ upgraded edges plus the cost for a single agent traveling from $s_a$ to $v$ using the remaining budget of $b-b'$ for upgrading edges.
        Hence, $\Deg[A,v,b]\le \Deg[A\setminus\{a\},s_a,b] + 
        \mu_{s_a,v}(b - b')$.
        Taking a minimum over all $a\in A$ and $b'\in [b]_0$ implies $\Deg[A,v,b] \le \Deg_1[A,v,b]$.

        Next, we prove $\Deg[A,v,b] \le \Deg_2[A,v,b]$.
        Therefore, consider any split of $A$ into $A'$ and $A\setminus A'$ where $\emptyset \neq A'\subsetneq A$.
        Consider any vertex $u\in V$ and budgets $b'\in [b]_0$, $b''\in [b']_0$.
        We now consider the egalitarian cost when the agents in $A'$ and $A\setminus A'$ each travel to $u$ with their own budget, and then all agents travel from $u$ to $v$ together with the remaining budget.
        Therefore, we split the budget $b$ into three parts according to $b'$ and $b''$, and allocate them to the part traveled only by $A'$, the one only by $A\setminus A'$, and the part together.
        Since this constitutes one way of all agents in $A$ traveling from their individual terminals to $v$, the egalitarian cost for all agents in $A$ traveling to $v$ with budget $b$ optimally is bounded by the maximum of these costs, i.e.,
        $$\Deg[A,v,b]\le \max\bigg\{\Deg[A',u,b''],\Deg[A\setminus A',u, b'-b'']\bigg\} + \mu_{u,v}(b - b')\text.$$
        By taking the minimum over all possibilities, we obtain $\Deg[A,v,b] \le \Deg_2[A,v,b]$.
        Together, with the first inequality, we have shown that $\Deg[A,v,b] \le \min \{\Deg_1[A,v,b], \Deg_2[A,v,b]\}$.

        To show the reverse inequality, we consider a set $F\subseteq E$ of at most $b$ edges such that the maximum travel time of all agents in $A$ from their individual terminals to $v$ when upgrading at most $b$ edges is minimized.
        Let $w_F\colon E \to \mathbb Q_0^+$ with $$w_F(e) = \begin{cases}
        \alpha \cdot w(e) & e\in F\\
        w(e) & e\in E\setminus F
        \end{cases}\text.$$
        By \Cref{lem:PathConsistency}, there exists a consistent collection of paths $\{P_a\colon a\in A\}$ such that for each $a\in A$, $P_a$ is a shortest $s_a$-$t$-path on $(V,E,w_F)$.
        Since they are all shortest paths and $F$ minimizes maximum travel time, we have that 
        \begin{equation}\label{eq:PathsCost}
            \Deg[A,v,b] = \max_{a\in A} w_F(P_a)\text.
        \end{equation}
        
        By \Cref{lem:HubTree}, we know that the union of these paths forms a tree $T$.
        We root the tree at $t$ and let $u$ be the first descendant of $t$ that is either an individual terminal or has degree at least $2$.
        Clearly, such a vertex exists because $T$ contains all individual terminals.

        Assume first that $u$ is an individual terminal, say $u = s_a$ for $a\in A$, and let $b'\in [b_0]$ be such that $b-b'$ is the number of edges in $F$ on the unique path on $T$ from $s_a$ to $t$.
        Since the paths use $b - b'$ budget to travel from $s_a$ to $t$ and at most $b'$ budget to reach $s_a$ for all agents in $A\setminus \{a\}$, the maximum weight among all paths is at least $\Deg[A\setminus\{a\},s_a,b] + \mu_{s_a,v}(b - b')$.
        Combining this with \Cref{eq:PathsCost}, we obtain
        \begin{align*}
            \Deg[A,v,b] &\ge \Deg[A\setminus\{a\},s_a,b] + \mu_{s_a,v}(b - b')\\
            &\ge \Deg_1[A,v,b] \ge \min \{\Deg_1[A,v,b], \Deg_2[A,v,b]\}\text.
        \end{align*}

        Now assume that $u$ is a vertex of degree at least $2$ on $T$ and let $e = \{u,u'\}$ be an edge of $T$ such that $u'$ is a descendant of $u$.
        Let $T'$ be the subtree of $T$ containing $u'$ and all of $u'$s descendants.
        Let $A' = \{a\in A\colon s_a\in V(T')\}$, i.e., $A'$ is the subset of agents whose individual terminals are in $T'$.
        First, we observe that $A'\neq \emptyset$ as every agent $a$ such that $u'\in V(P_a)$, i.e., every agent whose path passes through $u'$, must have their individual terminal in $T'$.
        Similarly, as there must be agents with individual terminals in the subtrees of other descendants of $u$, we know that $A'\neq A$.
        Now let $b'\in [b]_0$ be such that $b-b'$ is the number of edges in $F$ on the unique path on $T$ from $u$ to $t$.
        Hence, we have at most $b'$ edges on $F$ on the subtree of $T$ contaning $u$ and all its descendants.
        Let $b''\in [b']_0$ be the number of upgraded edges in $E(T')\cup \{e\}$.
        
        Now, by the choice of $u$, there is no individual terminal on $T$ before $u$, and hence all agents paths include a sub-path from $u$ to $t$ with $b-b'$ upgraded edges.
        Moreover, the agents in $A'$ and $A\setminus A'$ travel from their individual terminals to $u$ with a budget of $b''$ and at most $b'-b''$, respectively.
        Hence, the maximum weight among all all paths is at least 
        $$\max \bigg\{\Deg[A',u,b''], \Deg[A\setminus A',u, b'-b'']\bigg\} + \mu_{u,v}(b - b')\text.$$
        Inserting this into \Cref{eq:PathsCost}, we obtain
        \begin{align*}
            \Deg[A,v,b] &\ge \max \bigg\{\Deg[A',u,b''], \Deg[A\setminus A',u, b'-b'']\bigg\} + \mu_{u,v}(b - b') \\
            &\ge \Deg_2[A,v,b] \ge \min \{\Deg_1[A,v,b], \Deg_2[A,v,b]\}\text.
        \end{align*}

        Together, we have shown that $\Deg[A,v,b] = \min \{\Deg_1[A,v,b], \Deg_2[A,v,b]\}$, which concludes the proof of the claim.
    \claimqed\end{claimproof}

    Since, we have shown correctness of the recursion, we can solve \GraphHub{} by filling the table entries $\Dstar(A,v,b)$ for increasing cardinality of $A$. 
    Then, the correct solution is captured in $\Dstar(A,t,\beta)$.

    It remains to compute the running time of this procedure for instances with $k$ agents.
    By \Cref{cor:budmap}, the precomputation to obtain all budget mappings can be performed in time $\mathcal O(\beta m^2\log m + m\beta \ell) = \mathcal O(m^3 \beta)$.
    It remains to determine the time for filling the table.

    We first initialize the table as described in \Cref{eq:init} for the singleton agent sets in time $k \mathcal O(m\beta)$.

    Now, we observe that it is rather costly to directly determine the entries $\Dstar_2(A,v,b)$.
    However, we can avoid some redundant computation caused by the nested computation of the minimum.
    For this, we create an intermediate table $\Mstar(A,u,b')$ where $A\subseteq \ac$ with $|A|\ge 2$, $u\in V$, and $b'\in [\beta]_0$ defined by 
    $$\Mut(A,u,b') = \min_{\emptyset \neq A'\subsetneq A} \min_{b''\in [b']_0} [\Dut[A',u,b''] + \Dut[A\setminus A',u, b'-b'']]$$
    in case of utilitarian cost and  
    $$\Meg(A,u,b') = \min_{\emptyset \neq A'\subsetneq A} \min_{b''\in [b']_0} \max\{\Deg[A',u,b''], \Deg[A\setminus A',u, b'-b'']\}$$
    in case of egalitarian cost.
    
    This table extracts the part of the computation corresponding to merging two subsets of agents at a node $u$.
    Hence, whenever we want to determine $\Dstar(A,v,b)$, we first compute $\Dstar_1(A,v,b)$ (Step~1) and $\Mstar(A,u,b')$ (Step~2) and subsequently $\Dstar_2(A,v,b)$ (Step~3) as

    $$\Dut_2[A,v,b] = \min_{u\in V}\min_{b'\in[b]_0} \bigg[\Mut(A,u,b') + |A|\cdot \mu_{u,v}(b - b')\bigg]$$
    or 
    $$\Deg_2[A,v,b] = \min_{u\in V}\min_{b'\in[b]_0} \bigg[\Mut(A,u,b') + \mu_{u,v}(b - b')\bigg]\text.$$

    Finally, we compute $\Dstar(A,v,b) = \min \{\Dstar_1(A,v,b), \Dstar_2(A,v,b)\}$ (Step~4).

    Step~1, i.e., computing $\Dstar_1(A,v,b)$ for any $A\subseteq \ac$ with $|A|\ge 2$, $v\in V$, and $b\in [\beta]_0$, takes time at most $2^k m (\beta+1) \cdot \mathcal O(m\beta) = 2^k \mathcal O(m^2\beta^2)$.

    The total computation time of Step~2 across all $A\subseteq \ac$ with $|A|\ge 2$, $u\in V$, and $b'\in [\beta]_0$ requires a computation for every sub-subset.
    The time cost can be bounded by 

    \begin{align*}
        m(\beta+1)\sum_{j = 2}^k \binom{k}{j} 2^j \mathcal O(\beta) \le \mathcal O(m\beta^2) \sum_{j = 0}^k \binom{k}{j}2^j = 3^k\mathcal O(m\beta^2) \text.
    \end{align*}

    There, we use the Binomial Theorem for the last equation that states that for every $x,y\in \mathbb R$ and $k\in \mathbb N$, $(x+y)^k = \sum_{j = 0}^k \binom{k}{j}x^jy^{k-j}$.
    We apply it for $x = 2$ and $y = 1$.

    The total computation time of Step~3 can be derived as for Step~1 and bounded by $2^k\mathcal O(m^2\beta^2)$.

    Finally, obtaining the final minima requires one computation for each table entry, i.e., at most time $k\mathcal O(m\beta)$.

    Summing up everything, we obtain a total running time for filling the table of 

    \begin{align*}
        k \mathcal O(m\beta)+ 3^k\mathcal O(m\beta^2) + 2\cdot 2^k\mathcal O(m^2\beta^2) + k\mathcal O(m\beta) = 3^k\mathcal O(m\beta^2) + 2^k\mathcal O(m^2\beta^2)\text.
    \end{align*}

    Adding the time for deriving the budget mappings, we obtain a total running time of

    \begin{equation*}
        \mathcal O(m^3 \beta) + 3^k\mathcal O(m\beta^2) + 2^k\mathcal O(m^2\beta^2) = O(m^3 \beta + 3^k m\beta^2 + 2^k m^2\beta^2)\text.
    \end{equation*}

    This completes the proof.
\proofqed\end{proof}

\section{Variable Number of Agents}

\GBPhardness* 

\begin{proof}
Memebership in \np{} is straightforward because the cost and feasibility of a given candidate solution to \Graph{} can be verified in polynomial time, see Overservation~\ref{obs:NPmemb}.

For \np-hardness, we utilize the \np-complete problem \SetCover{} \citep{garey1979computers}, which is formally defined as follows: an instance $\langle \mathcal{U}, \mathcal S, \rho \rangle$ is given by a collection $\mathcal S$ of subsets of a finite ground set $\mathcal U$ and a positive integer $\rho$. 
It is a Yes-instance if there exists $\mathcal S'\subseteq\mathcal S$ with $|\mathcal S'|\le\rho$ that covers $\mathcal U$.

\begin{figure}[tb!]
\centering
\includegraphics[width=.65\textwidth, page=6, trim=30 210 390 78, clip]{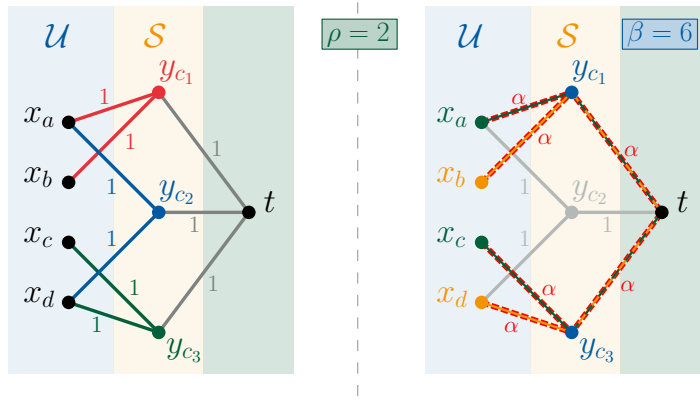}
\caption{Illustration of the reduction from \SetCover{} to \Graph{} for the instance $\langle \mathcal U, \mathcal S, 2\rangle$ with $\mathcal U = \{a,b,c,d\}$ and $\mathcal S = \{\{a,b\},\{b,c\},\{c,d\}\}$. The left panel shows the constructed \Graph{} instance and the right panel highlights a feasible solution corresponding to the set cover $\mathcal S' = \{\{a,b\},\{c,d\}\}$.
The dashed edges indicate the corresponding discounted edges.
\label{fig:optim:1}}
\end{figure}

Now, consider an instance $\ic = \langle \mathcal{U}, \mathcal S, \rho \rangle$ of \SetCover{} and a discount factor $\alpha \in [0,1)$.
Construct an instance $\bic' = \langle G, \mathcal{A}, \alpha, |\mathcal{U}| + \rho \rangle$ of \Graph{} with underlying graph $G = (V,E,w)$ as follows:
The vertices are $V = V_{\mathcal{U}} \cup V_{\mathcal S} \cup \{t\}$, $V_{\mathcal{U}} = \{x_u \colon u \in \mathcal{U}\}$ are called \emph{item vertices}, $V_{\mathcal S} = \{y_S \colon S \in \mathcal S\}$ \emph{subset vertices}, 
and we have a single \emph{target vertex} $t$.
The edges $E = E_I \cup E_T$ consist of \emph{item edges} $E_I = \{(x_u, y_S) \colon S \in \mathcal S, u \in S\}$ and \emph{target edges} $E_T = \{(y_S, t) \colon S \in \mathcal S\}$.
All edges have the same weight, i.e., we set $w(e) = 1$ for all $e \in E$.
The agent set is given by $\mathcal{A} = \multiset{(x_u, t) \colon u \in \mathcal{U}}$. 
An illustration of $\bic'$ is provided in \Cref{fig:optim:1}.
Let $\kappa\egsup = 2\alpha$ and $\kappa\utsup = 2\alpha |\mathcal{U}|$. 
Our reduced instance is the instance $\ic' = \langle \bic', \kappa^{\star} \rangle$ of \GraphDec{}.

Informally, this construction creates the bipartite incidence graph of items and subsets. 
Each item~$u\in \mathcal U$ is represented by an item vertex $x_u$ and each subset $S\in \mathcal S$ by a subset vertex $y_S$. 
We connect $x_u$ to $y_S$ whenever $u \in S$, and each $y_S$ links to the shared target $t$. 
Agent $(x_u, t)$ reaches $t$ by a two-step path from $x_u$ to $y_S$ and then to $t$, where $S$ must contain $u$.
The budget $\beta$ is set to 
$|\mathcal{U}| + \rho$, which allows to reduce the cost for a connection between each item vertex to some set vertex, and an additional $\rho$ edges from the sets to the target vertex. 
Intuitively, to reach our cost target, these connections have to correspond to a cover of $\mathcal U$ of size at most $\rho$.

We proceed with a formal proof of the correctness of our reduction.
We claim that $\mathcal{I}$ is a Yes-instance of \SetCover{} if and only if $\langle \bic', \kappa^{\star} \rangle$ is a Yes-instance of \GraphStar{}.

Since the minimal path cost per agent is $2\alpha$, we have, for every solution $F\subset E$
$c\egsup(F) = 2\alpha$ if and only if $c\utsup(F) = 2\alpha\,|\mathcal{U}|$.
Hence,
the result for \GraphSum{} follows immediately from the one for \GraphMax{}.

\noindent\textbf{($\Leftarrow$)}
Assume there is a feasible solution for $\bic'$ achieving egalitarian cost $\kappa\egsup$. 
Take any agent $(x_u,t)$ and let $P$ be a (shortest) $x_u$-$t$ path of cost $2\alpha$. 
If multiple paths tie, pick the one with fewest edges. Since $1 + \alpha > 2\alpha$, all edges on $P$ must be reduced, and if $\alpha > 0$, $P$ has exactly two edges. 
If $\alpha = 0$, one can argue by a finite sequence of replacements---see \Cref{fig:optim:8}---that any longer path can be shortened to two edges without increasing cost.

Since all paths are of length two, no paths of any pair of agents can share the same item edge, so at most $\rho$ reduced subset edges remain. Those $\rho$ edges correspond to a collection of at most $\rho$ subsets, and since each agent uses one of these subset edges, every item is covered. Thus they correspond a set cover of size at most $\rho$.

\begin{figure}[tb!]
\centering
\includegraphics[width=.65\textwidth, page=10, trim=30 180 390 135, clip]{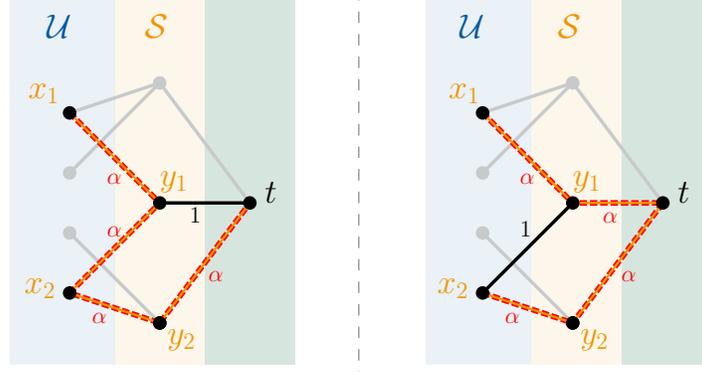}
\caption{
If the solution in the left panel constitutes a feasible solution, then the solution in the right panel is likewise feasible. All shortest paths passing through $y_1$ or $x_2$ can reach $t$ with the same or even lower costs.
\label{fig:optim:8}}
\end{figure}

\noindent\textbf{($\Rightarrow$)}
Conversely, let $\{S_1, \dots, S_\rho'\}$ be a cover of $\mathcal U$ of size $\rho'\le \rho$. 
Choose for each item vertex $x_u$ one reduced edge $\{x_u, y_{S_i}\}$ where $u \in S_i$, and include all $\rho'$ edges $\{y_{C_i}, t\}$. 
This uses at most $|\mathcal U| + \rho$ edges, and enables each agent to reach $t$ in two steps via reduced edges.
Hence, it leads to an egalitarian cost of $2\alpha$ for the \Graph{} instance.
\proofqed\end{proof}

\GBPInap*
\label{pro:GBPInap}
\begin{proof}
We claim that there exists no polynomial-time algorithm that achieves a $\gamma$-approximation for \GraphStar{} for any $\gamma > 1$.  
Suppose, for contradiction, that such an algorithm exists.  
Consider an instance $\mathcal{I}$ of \SetCover{} and its corresponding instance $\mathcal{I}'$ of \Graph{} constructed as described in the proof of Theorem~\ref{cor:graph0}, with $\alpha = 0$.  
From the previous proof, we know that $\mathcal{I}$ is a Yes-instance if and only if there exists a solution to $\mathcal{I}'$ with $c^{\star} = 0$.  
Now, if the approximation algorithm is applied to $\mathcal{I}'$, it must return a solution with cost $0 \leq c\egsup \leq \gamma \cdot \kappa_{opt}^{\star} = 0$ if $\mathcal{I}$ is a Yes-instance.  
Otherwise, when $\mathcal{I}$ is a No-instance, we have $c\egsup \geq \kappa_{opt}^{\star} > 0$.  
Thus, the algorithm distinguishes between Yes- and No-instances of \SetCover{} in polynomial time, which is impossible unless  $\p=\np$.

\proofqed\end{proof}

\GBPApproxBound*
\label{pro:GBPApproxBound}
\begin{proof}
Fix a discount factor $\alpha > 0$.  
We claim that no polynomial-time algorithm can achieve a $\gamma$-approximation for \GraphMax{} with $\gamma < \frac{1 + \alpha}{2\alpha}$.  
Assume, for contradiction, that such an algorithm exists.  
Let $\mathcal{I}$ be an instance of \SetCover{}, and construct the corresponding instance $\mathcal{I}'$ of \Graph{} as described  in the proof of Theorem~\ref{cor:graph0}.  
We established that if $\mathcal{I}$ is a Yes-instance, then $\mathcal{I}'$ has a solution with $c\egsup = 2\alpha$, and if $\mathcal{I}$ is a No-instance, then every feasible solution has $c\egsup > 2\alpha$.

We now argue that, in the No-instance case, any egalitarian-optimal solution must have $c\egsup \geq 1 + \alpha$.  
Suppose, for contradiction, that $2\alpha < c\egsup < 1 + \alpha$.  
This would imply that all agents still traverse only reduced edges, but at least one of them uses more than two.  
By the same reasoning used earlier (see Figure~\ref{fig:optim:8}), we can modify the solution by replacing item edges with subset edges so that each agent ultimately uses only two reduced edges.  
This transformation would yield a solution contradicting the assumption that $\mathcal{I}$ is a No-instance.

Therefore, the approximation algorithm would return a solution with $c\egsup \leq \gamma \cdot \kappa_{opt}\egsup < 1 + \alpha$ when $\mathcal{I}$ is a Yes-instance, and $c\egsup \geq \kappa_{opt}\egsup \geq 1 + \alpha$ when $\mathcal{I}$ is a No-instance.  
Thus, the algorithm would be able to decide \SetCover{} in polynomial time, which is not possible unless $\p=\np$.

\proofqed\end{proof}

\section{Comparison with \RWP{}}\label{app:railway} 
The \rwp{} (\Rail), introduced by \addauthor{He et al.~}\citet{HBL+24}, models a transportation network as a graph where vertices represent cities, edges denote connections between them, and edge weights correspond to distances. Each city has a specified demand to travel to other cities, and the goal is to design a railway network connecting these locations. In their framework, the cost of using a railway edge equals its distance, while non-rail edges incur a significantly higher cost, scaled by a factor $\zeta$.\footnote{In the original work, this parameter is denoted by $K$.} 
This is conceptually analogous to commuters having to take a bus (as opposed to base cost in the \Graph{} model).

A key question is whether representing agents via a demand function leads to improved runtime efficiency compared to modeling agents as a multi-set of terminal pairs. The demand-based representation proves particularly advantageous when the number of agents $n$ far exceeds the number of cities $m$, or more precisely, the number of possible city pairs $m^2$. In the \Graph{} setting, agents can be duplicated arbitrarily, leading to a potentially large multi-set. This issue can be alleviated by adopting a demand matrix. Rather than storing each agent individually, we use a two-dimensional matrix over city pairs to count agent occurrences. This is effectively equivalent to using a demand function and becomes efficient especially when $n > m^2$.

Adapting our notation an instance $\bic = \langle V, E, w, \tau, \beta\rangle$ of \Rail{}, consists of a set $V$ of $m$ vertices, a set $E$ of $l$ edges, weight function $w\colon E\to\mathbb{Q}_{\ge0}$, a symmetric demand function $\tau: V^2 \rightarrow \Nz$ with $\tau(u,v) = \tau(v,u)$ and $\tau(u,u) =0$, and $\beta\in\mathbb{N}$ as budget. A solution $R\subseteq E$ is \emph{feasible} if $w(R)=\sum_{e\in R}w(e)\le\beta$. The objectives \RailMax{} and \RailSum{} are defined analogously, with a target $\kappa \in \mathbb{Q} \cup \{\infty\}$, where any finite cost satisfies $\kappa = \infty$ but not infinite cost.

To illustrate the parallels between the two frameworks, consider an example similar to a previously discussed scenario. As shown in Figure~\ref{fig:optim:4}, assume two agents $(s,t)$ and $(s',t')$ are present. With a discount factor of $\alpha = \frac{1}{2}$ and a budget of one, reducing a single edge with weight ten to a weight of five results in a total travel cost of $19$.
This can be translated into a railway instance by keeping the graph structure the same but using a penalty factor $\zeta = \alpha^{-1} = 2$. Instead of listing individual agents, we define a demand function $\tau$ with $\tau(s,t) = \tau(s',t') = 1$ and all other entries zero\footnote{Symmetry of the demand function is omitted for simplicity but considered in the model.}. The objective is now to determine which edge to preserve at its original cost, while all others are scaled by $\zeta$. Clearly, if the same edge is retained, the resulting total cost is exactly twice the previous value, that is, $38$, consistent with the scaling factor.

\begin{figure}[tb!]
\centering
\includegraphics[width=.9\textwidth, page=5, trim=30 320 540 70, clip]{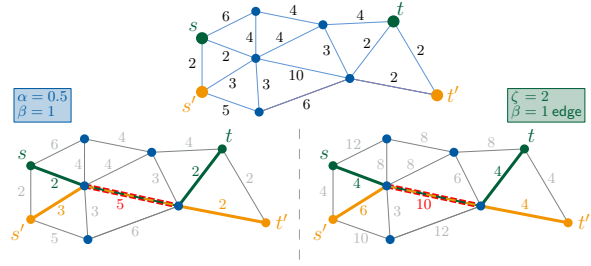}
\caption{
Comparison of \Rail{} and \Graph{}: the top diagram shows the base graph; the left subfigure highlights a solution minimizing utilitarian cost in \Graph{} (dashed edges), while the right subfigure displays the corresponding solution in \Rail{}, where all edge weights are scaled by $\zeta = 2$. The budget allows modifying exactly one edge in each case.
\label{fig:optim:4}}
\end{figure}

In contrast to the \Graph{} model, the edge cost function in \Rail{} given a solution $R$ is defined as $w_R(e) = w(e)$ if $e \in R$ and $\zeta \cdot w(e)$ otherwise. The cost metrics are given by $c\egsup(R) := \max_{(u,v) \in V^2} \tau(u,v) \cdot \pi_R(u,v)$ and $c\utsup(R) := \sum_{(u,v) \in V^2} \frac{1}{2} \tau(u,v) \cdot \pi_R(u,v)$, where the factor $\frac{1}{2}$ compensates for symmetric counting. While \addauthor{He et al.~}\citet{HBL+24} also study combined utilitarian and egalitarian objectives, we focus only on the pure forms here.

At first glance, the two models appear closely related. In fact, transforming instances between them is straightforward: cities $u$ and $v$ with demand $\tau(u,v) = d$ can be modeled in \Graph{} by creating $d$ agents with terminals $(u,v)$, and vice versa. The penalty factor $\zeta$ and discount factor $\alpha$ are reciprocals, i.e., $\zeta = \alpha^{-1}$, and target costs must be scaled accordingly by either $\alpha$ or $\zeta$. We use the shorthand $0^{-1} := \infty$ throughout.

Despite their structural similarities, a closer comparison of \Graph{} and \Rail{} reveals key differences:

\underline{Cost interpretation:}
In \Graph{}, edge weights reflect base cost, and selected edges are discounted by $\alpha$. In \Rail{}, purchased edges retain their original weights, while all others are penalized by a factor $\zeta \in \mathbb{Q}_{>1} \cup \{\infty\}$. When $\zeta$ is finite, this distinction has no effect on solutions. However, when $\zeta = \infty$ (i.e., $\alpha = 0$), feasibility diverges: if the budget does not allow all required travel via selected edges, the cost in \Rail{} becomes infinite. In contrast, \Graph{} still permits meaningful cost minimization. Consider two instances $\langle \ic, a \rangle$ and $\langle \ic, b \rangle$ of \GraphStar{} with $0 < a < b$. It is possible that the first is a No-instance and the second a Yes-instance, but both map to the same \RailStar{} instance $\langle \ic', \infty \rangle$.

\underline{Egalitarian fairness:}
In \Graph{}, egalitarian cost is evaluated per agent, while in \Rail{} it is computed over city pairs and weighted by demand. This difference can lead to divergent solutions. The distinction disappears when the demand function is binary, corresponding to a unique agent set.

\underline{Budget consumption:}
The models diverge most in how they treat budget. In \Graph{}, the budget limits the number of reduced edges. In \Rail{}, the budget accounts for the total weight of selected edges. As a result, \Rail{} solutions often favor reducing shorter edges. This makes sense in railway design, where construction cost grows with distance, but is less appropriate for bus networks, where the budget reflects the number of routes, not their length.

This discrepancy can be resolved by assuming binary edge weights: zero-weight edges require no budget, and all others have the same cost. While this is a strong simplification, under this assumption the models become directly convertible.

\end{document}